\documentclass[a4paper,USenglish,cleveref,autoref,thm-restate]{lipics-v2021}

\nolinenumbers
\newtheorem*{imptremark*}{Important remark}

\bibliographystyle{plainurl}

\title{A generic characterization of generalized unary temporal logic and two-variable first-order logic}

\titlerunning{A generic characterization of generalized unary \tls and two-variable \fo}

\author{Thomas Place}{Univ. Bordeaux, CNRS, Bordeaux INP, LaBRI, UMR 5800, F-33400, Talence, France\and \url{http://www.labri.fr/perso/tplace}}{tplace@labri.fr}{https://orcid.org/0009-0000-2840-9586}{}

\author{Marc Zeitoun}{Univ. Bordeaux, CNRS, Bordeaux INP, LaBRI, UMR 5800, F-33400, Talence, France \and \url{http://www.labri.fr/perso/zeitoun}}{mz@labri.fr}{https://orcid.org/0000-0003-4101-8437}{}

\authorrunning{T. Place and M. Zeitoun}

\Copyright{Thomas Place and Marc Zeitoun}

\ccsdesc[500]{Theory of computation~Formal languages and automata theory}

\keywords{Classes of regular languages, Generalized unary temporal logic, Generalized two-variable first-order logic, Generic decidable characterizations, Membership, Separation}

\category{}

\relatedversion{} \relatedversiondetails[cite=pz_gen_tl_csl]{Extended version}{}

% \EventEditors{Aniello Murano and Alexandra Silva}
% \EventNoEds{2}
% \EventLongTitle{32nd EACSL Annual Conference on Computer Science Logic (CSL 2024)}
\EventShortTitle{}
% \EventAcronym{CSL}
\EventYear{2024}
% \EventDate{February 19--23, 2024}
% \EventLocation{Naples, Italy}
\EventLogo{}
% \SeriesVolume{288}
\ArticleNo{6}

\usepackage{xspace}
\usepackage{bbm}
\usepackage{thmtools,thm-restate}
\usepackage{csquotes}
\usepackage{nowidow}
\usepackage{widows-and-orphans}
\usepackage{tikz}
\usetikzlibrary{decorations.pathreplacing,hobby,backgrounds,trees,arrows,automata,shapes,positioning,fit,patterns,calc,matrix}

\newcommand{\veps}{\ensuremath{\varepsilon}\xspace}
\newcommand{\inv}{^{-1}}
\newcommand{\nat}{\ensuremath{\mathbb{N}}\xspace}

\newcommand{\cont}[1]{\ensuremath{\mathit{alph}(#1)}\xspace}

\newcommand{\Cs}{\ensuremath{\mathcal{C}}\xspace}
\newcommand{\Ds}{\ensuremath{\mathcal{D}}\xspace}

\newcommand{\Gs}{\ensuremath{\mathcal{G}}\xspace}

\newcommand{\Ls}{\ensuremath{\mathcal{L}}\xspace}

\newcommand{\Rs}{\ensuremath{\mathcal{R}}\xspace}

\newcommand{\Xs}{\ensuremath{\mathcal{X}}\xspace}

\newcommand{\vari}{prevariety\xspace}

\newcommand{\varis}{prevarieties\xspace}

\newcommand{\infsig}[1]{\ensuremath{\mathbb{I}_{#1}}\xspace}
\newcommand{\infsigc}{\infsig{\Cs}}

\usepackage{tikz}
\usetikzlibrary{automata}

\tikzset{every state/.style={draw=blue!50!green,very thick,fill=blue!50!green!20}}
\tikzset{statesub/.style={state,minimum size=1.3cm,inner sep=1pt}}
\tikzset{pattstate/.style={state,draw=red!50!yellow,line width=2pt,fill=red!50!yellow!20}}
\tikzset{pdotstate/.style={state,minimum size=0.75cm,inner sep=0.5pt,draw=red!50!yellow,line
		width=2pt,dashed,fill=red!50!yellow!20}}
\tikzset{lstate/.style={state,minimum size=0.65cm,inner sep=0.5pt}}
\tikzstyle{trans}=[shorten >= 1pt,thick,->]

\makeatletter
\tikzstyle{initial by arrow}=   [after node path=
{
	{
		[to path=
		{
			[->,double=none,shorten >= 1pt,thick,every initial by arrow]
			([shift=(\tikz@initial@angle:\tikz@initial@distance)]\tikztostart.\tikz@initial@angle)
			node [shape=coordinate,anchor=\tikz@initial@anchor,draw] {\tikz@initial@text}
			-- (\tikztostart)}]
		edge ()
	}
}]
\makeatother

\makeatletter
\tikzstyle{accepting by arrow}=   [after node path=
{
	{
		[to path=
		{
			[->,double=none,shorten >= 1pt,thick,every accepting by arrow]
			(\tikztostart) --
			([shift=(\tikz@accepting@angle:\tikz@accepting@distance)]\tikztostart.\tikz@accepting@angle)
			node [shape=coordinate,anchor=\tikz@accepting@anchor,draw] {\tikz@accepting@text}
		}
		]
		edge ()
	}
}]
\makeatother

\let\omin\min

\renewcommand{\min}{\ensuremath{\text{\scriptsize min}}\xspace}

\newcommand{\mods}{\ensuremath{\mathit{MOD}}\xspace}

\newcommand{\fo}{\ensuremath{\textup{FO}}\xspace}
\newcommand{\fow}{\mbox{\ensuremath{\fo(<)}}\xspace}

\newcommand{\foc}{\ensuremath{\fo(\infsigc)}\xspace}

\newcommand{\fod}{\ensuremath{\fo^2}\xspace}

\newcommand{\fodc}{\ensuremath{\fod(\infsigc)}\xspace}

\newcommand{\fodw}{\ensuremath{\fod(<)}\xspace}
\newcommand{\fodws}{\ensuremath{\fod(<,+1)}\xspace}

\newcommand{\fodwm}{\ensuremath{\fod(<,\mods)}\xspace}

\renewcommand{\ul}{\ensuremath{\textup{UL}}\xspace}

\newcommand{\at}{\ensuremath{\textup{AT}}\xspace}

\newcommand{\sfr}{\ensuremath{\textup{SF}}\xspace}

\newcommand{\stzer}{\textup{ST}\xspace}
\newcommand{\dotzer}{\textup{DD}\xspace}

\newcommand{\cocl}[1]{\ensuremath{\mathit{co\textup{-}}\!#1}\xspace}
\newcommand{\bool}[1]{\ensuremath{\textup{Bool}(#1)}\xspace}
\newcommand{\pol}[1]{\ensuremath{\textup{Pol}(#1)}\xspace}
\newcommand{\bpol}[1]{\ensuremath{\textup{BPol}(#1)}\xspace}

\newcommand{\upol}[1]{\ensuremath{\textup{UPol}(#1)}\xspace}

\newcommand{\copol}[1]{\ensuremath{\cocl{\textup{Pol}(#1)}}\xspace}

\newcommand{\sfp}[1]{\ensuremath{\textup{\textup{SF}}(#1)}\xspace}

\newcommand{\upolo}{\ensuremath{\textup{UPol}}\xspace}
\newcommand{\bpolo}{\ensuremath{\textup{BPol}}\xspace}

\newcommand{\opo}{\ensuremath{\textup{Op}}\xspace}

\newcommand{\poln}{\ensuremath{\textup{Pol}}\xspace}
\newcommand{\bpoln}{\ensuremath{\textup{BPol}}\xspace}

\newcommand{\ldet}[1]{\ensuremath{\textup{LPol}(#1)}\xspace}
\newcommand{\rdet}[1]{\ensuremath{\textup{RPol}(#1)}\xspace}

\newcommand{\ldeto}{\ensuremath{\textup{LPol}}\xspace}
\newcommand{\rdeto}{\ensuremath{\textup{RPol}}\xspace}

\newcommand{\meda}{\ensuremath{\textup{\sffamily M}_e\textup{\sffamily DA}}\xspace}
\newcommand{\davar}{\ensuremath{\textup{\sffamily DA}}\xspace}

\newcommand{\tls}{\ensuremath{\textup{TL}}\xspace}
\newcommand{\tlxs}{\ensuremath{\textup{TLX}}\xspace}
\newcommand{\tla}[1]{\ensuremath{\tls[#1]}\xspace}

\newcommand{\tlc}[1]{\ensuremath{\tls(#1)}\xspace}

\newcommand{\tlsn}[1]{\ensuremath{\textup{TL}_{#1}}\xspace}

\newcommand{\tlnc}[2]{\ensuremath{\tlsn{#1}(#2)}\xspace}

\newcommand{\tlfo}{\ensuremath{\textup{FL}}\xspace}
\newcommand{\tlpo}{\ensuremath{\textup{PL}}\xspace}

\newcommand{\tlfoc}[1]{\ensuremath{\tlfo(#1)}\xspace}
\newcommand{\tlpoc}[1]{\ensuremath{\tlpo(#1)}\xspace}

\newcommand{\tlfoa}[1]{\ensuremath{\tlfo[#1]}\xspace}
\newcommand{\tlpoa}[1]{\ensuremath{\tlpo[#1]}\xspace}

\newcommand{\ltl}{\ensuremath{\textup{LTL}}\xspace}

\newcommand{\ltlc}[1]{\ensuremath{\ltl(#1)}\xspace}

\newcommand{\finally}[1]{\mbox{\ensuremath{\textup{F}\,#1}}\xspace}
\newcommand{\nex}[1]{{\ensuremath{\textup{X}\,#1}}\xspace}

\newcommand{\finallyp}[2]{\mbox{\ensuremath{\textup{F}_{#1}\,#2}}\xspace}
\newcommand{\finallymp}[2]{\ensuremath{\textup{P}_{#1}\,#2}\xspace}
\newcommand{\finallyl}[1]{\finallyp{L}{#1}}
\newcommand{\finallyml}[1]{\finallymp{L}{#1}}

\newcommand{\xfinallyp}[2]{\ensuremath{\textup{F}^{sa}_{#1}\,#2}\xspace}
\newcommand{\xfinallymp}[2]{\ensuremath{\textup{P}^{sa}_{#1}\,#2}\xspace}

\newcommand{\tleqp}[1]{\ensuremath{\cong_{\eta,#1}}\xspace}
\newcommand{\tleqk}{\tleqp{k}}

\newcommand{\tlfeqp}[1]{\ensuremath{\blacktriangleright_{\eta,#1}}\xspace}
\newcommand{\tlfeqk}{\tlfeqp{k}}
\newcommand{\tlfdqp}[1]{\ensuremath{\blacktriangleright_{\delta,#1}}\xspace}
\newcommand{\tlfdqk}{\tlfdqp{k}}

\newcommand{\Jrel}{\ensuremath{\mathrel{\mathscr{J}}}\xspace}

\newcommand{\Rrel}{\ensuremath{\mathrel{\mathscr{R}}}\xspace}
\newcommand{\Lrel}{\ensuremath{\mathrel{\mathscr{L}}}\xspace}

\newcommand{\Jord}{\ensuremath{\leqslant_{\mathscr{J}}}\xspace}

\newcommand{\Rord}{\ensuremath{\leqslant_{\mathscr{R}}}\xspace}
\newcommand{\Lord}{\ensuremath{\leqslant_{\mathscr{L}}}\xspace}

\newcommand{\Jords}{\ensuremath{<_{\mathscr{J}}}\xspace}

\newcommand{\Rords}{\ensuremath{<_{\mathscr{R}}}\xspace}
\newcommand{\Lords}{\ensuremath{<_{\mathscr{L}}}\xspace}

\newcommand{\poschar}{\textup{\sffamily{Pos}}}

\newcommand{\pos}[1]{\ensuremath{\poschar(#1)\xspace}}
\newcommand{\posc}[1]{\ensuremath{\poschar_{c}(#1)\xspace}}

\newcommand{\infix}[3]{\ensuremath{#1(#2,#3)}\xspace}
\newcommand{\suffix}[2]{\infix{#1}{#2}{|#1|+1}}

\newcommand{\wpos}[2]{\ensuremath{#1[#2]}\xspace}

\begin{document}

\maketitle

\begin{abstract}
  We study an operator on classes of languages. For each class~\Cs, it produces a new class~\fodc associated with a variant of two-variable first-order logic equipped with a signature \infsigc built from~\Cs. For $\Cs = \{\emptyset, A^*\}$, we obtain the usual \fodw logic, equipped with linear order.  For $\Cs = \{\emptyset, \{\varepsilon\},A^+, A^*\}$, we get the variant \fodws, which also includes the successor predicate. If \Cs consists of all Boolean combinations of languages $A^*aA^*$, where $a$ is a letter, we get the variant $\fod(<,\mathit{Bet})$, which includes ``between'' relations. We prove a \emph{generic} algebraic characterization of the classes \fodc. It elegantly generalizes those known for all the cases mentioned above. Moreover, it implies that if \Cs has decidable separation (plus some standard properties), then \fodc \mbox{has a decidable membership problem.}

  We actually work with an equivalent definition of \fodc in terms of \emph{unary temporal logic}.  For each class \Cs, we consider a variant \tlc{\Cs} of unary temporal logic whose future/past modalities depend on~\Cs and such that $\tlc{\Cs}= \fodc$. Finally, we also characterize \tlfoc{\Cs} and \tlpoc{\Cs}, the pure-future and pure-past restrictions of \tlc{\Cs}. Like for \tlc\Cs, these characterizations imply that if \Cs is a class with decidable separation, then \tlfoc{\Cs} and \tlpoc{\Cs} have decidable membership.
\end{abstract}

\section{Introduction}
\label{sec:intro}
\textbf{\textsf{Context.}} Regular languages of finite words form a robust class: they admit a wide variety of equivalent definitions, whether by regular expressions, finite automata, finite monoids or monadic second-order logic. It is therefore natural to study the \emph{fragments} of regular languages obtained by restricting the syntax of one of the above-mentioned formalisms. For each particular fragment, we seek to prove that it has a decidable \emph{membership~problem}: given a regular language as input, decide whether it belongs to the fragment.  Intuitively, doing so requires a thorough knowledge of the fragment and the languages it can describe.

This approach was initiated by Schützenberger~\cite{schutzsf} for the class \sfr of star-free languages. These are the languages defined by a \emph{star-free expression}: a regular expression without Kleene star but with complement instead. Equivalently, these are the languages that can be defined in first-order logic with the linear order~\cite{mnpfosf} (\fow) or in linear temporal logic~\cite{kltl}~(\ltl). Schützenberger established an algebraic characterization of \sfr: a regular language is star-free if and only if its \emph{syntactic monoid is aperiodic}. This yields a membership procedure for \sfr because the syntactic monoid can be computed and aperiodicity is a decidable property.

\smallskip
\noindent
\textbf{\textsf{Operators.}} This seminal result prompted researchers to look at other natural classes,~spawning a fruitful line of research (see \emph{e.g.},~\cite{bslt73,simonthm,knast83,pwdelta2,twfo2,pzjacm19}). Although there are numerous classes, they can be grouped into families based on ``variants'' of the same syntax. Let~us use logic to clarify this point. Each logical fragment can use several \emph{signatures} (\emph{i.e.}, sets~of predicates allowed in formulas), each giving rise to a class. For instance, first-order logic is commonly equipped with predicates such as the linear order ``$<$''~\cite{mnpfosf,schutzsf}, the successor~``$+1$''~\cite{Beauquier_1991} or the modular predicates ``$\mathit{MOD}$''~\cite{MIXBARRINGTON1992478}. While it is worth looking at multiple variants of prominent classes, doing so individually for each of them has an obvious disadvantage: the proof has to be systematically modified to accommodate each change. This can be tedious, difficult, and not necessarily enlightening. To overcome this drawback, a natural approach is to capture a whole family of variants with an \emph{operator}. An operator ``\opo'' takes a class~\Cs as input, and outputs a larger one $\opo(\Cs)$. Thus, we can study all classes $\opo(\Cs)$ \emph{simultaneously}: the question becomes: ``what hypotheses about \Cs guarantee the decidability of $\opo(\Cs)$-membership?''. For example, one can generalize the three definitions of star-free languages through~operators:
\begin{enumerate}
  \item The \emph{star-free closure} $\Cs \mapsto \sfp{\Cs}$ has been introduced in~\cite{schutzbd,STRAUBING1979319}. Languages in \sfp{\Cs} are defined by ``extended'' star-free expressions, which can freely use languages from~\Cs.
  \item A construction associating a signature \infsigc to a class \Cs has been given in~\cite{PZ:generic18}. For~each~\mbox{$L \in \Cs$}, the set \infsigc contains a binary predicate $I_L(x,y)$: for a word $w$ and two positions $i,j$ in $w$, $I_L(i,j)$ holds if and only if $i < j$ and the infix of $w$ between $i$ and $j$ belongs to $L$. We get an operator $\Cs \mapsto \foc$ based on first-order logic. It captures \mbox{many choices of signature.}
  \item Similarly, an operator $\Cs \mapsto \ltlc{\Cs}$ that generalizes \ltl has been defined in~\cite{pzsfclosj}.
\end{enumerate}
It is shown in~\cite{PZ:generic18,pzsfclosj} that $\sfp{\Cs} = \foc = \ltlc{\Cs}$ for any class \Cs (with mild hypotheses). Moreover, a \emph{generic} algebraic characterization is proved in~\cite{pzsfclos,pzsfclosj}. Given a regular language~$L$, it relies on a construction that identifies monoids inside its syntactic monoid, called the \emph{\Cs-orbits}: $L \in \sfp{\Cs}$ if and only if its \Cs-orbits are all \emph{aperiodic}. This elegantly generalizes Schützenberger's theorem and gives a \emph{transfer theorem} for membership. Indeed, the \Cs-orbits are connected with a decision problem that strengthens membership: \emph{\Cs-separation}. Given \emph{two} input regular languages $L_1$ and $L_2$, \Cs-separation asks whether there is  $K \in \Cs$ such that \mbox{$L_1 \subseteq K$} and $L_2 \cap K = \emptyset$. The crucial point is that \Cs-orbits are computable if \Cs-separation is decidable. Thus, \sfp{\Cs}-membership is also decidable in this case. Similar results are known for other operators such as \emph{polynomial closure}~\cite{PZ:generic18} or its \emph{unambiguous} restriction~\cite{pzupolicalp,pzupol2}.

\smallskip
\noindent
\textbf{\textsf{Unary temporal logic and two-variable first-order logic.}} The operator we investigate generalizes another important class admitting multiple definitions~\cite{Tesson02diamondsare,small-fragments} (see \cite{fragments-infinite,fragments-traces} for extensions). We are interested in two of them. It consists of languages that can be defined in \emph{two-variable first-order logic} with the linear order (\fodw) or  equivalently in \emph{unary temporal logic} (\tls) with the modalities $\textup{F}$ (sometimes in the future) and $\textup{P}$ (sometimes in the past). Etessami, Vardi and Wilke~\cite{evwutl} have shown that $\fodw=\tls$. Its algebraic characterization  by Thérien and Wilke~\cite{twfo2} is one of the famous results of this type: a regular language belongs to $\fodw=\tls$ if and only if its syntactic monoid belongs to the variety of monoids~\davar.

Both definitions extend to natural operators. First, the generic signatures~\infsigc yield an operator $\Cs \mapsto \fodc$ based on \fod. Second, an operator $\Cs \mapsto \tlc{\Cs}$ has been~defined in~\cite{pzupol2}. It enriches \tls with new modalities $\textup{F}_L$ and $\textup{P}_L$, both depending on the languages~$L\in \Cs$. For example, the formula $\finallyl{\varphi}$ holds at a position $i$ in a word $w$ if there is a position~$j > i$ in $w$ such that $\varphi$ holds at $j$ and the infix between~$i$~and~$j$ belongs to $L$. We know that $\fodc=\tlc{\Cs}$ when \Cs is closed under Boolean operations~\cite{pzupol2}. Here, we work with the $\tlc{\cdot}$ operator, which encompasses all classic classes based on two-variable first-order logic or unary temporal logic. This includes the original variants $\fodw = \tls$ and \mbox{$\fodws = \tlxs$}, both of which were studied by Thérien and~Wilke~\cite{twfo2} (here, ``$+1$'' is the successor predicate and \tlxs is defined by enriching \tls with ``next'' and ``yesterday''  modalities). Another example is the variant \fodwm endowed with \emph{modular predicates}, investigated by Dartois and Paperman~\cite{DartoisP13}. Finally, we capture the variant $\fod(<,Bet) = \textup{BInvTL}$ equipped with ``\emph{between}''  relations, defined and characterized by Krebs, Lodaya, Pandya and Straubing~\cite{betweenlics,betweenconf,between}.

\smallskip
\noindent
\textbf{\textsf{Contributions.}} We prove a \emph{generic} algebraic characterization of the classes $\fodc=\tlc{\Cs}$. We reuse the \Cs-orbits introduced for star-free closure: for any class \Cs (having mild properties) we show that a regular language belongs to \tlc{\Cs} if and only if all \Cs-orbits of its syntactic monoid belong to \davar. In particular, this yields a \emph{transfer theorem} for membership: if \Cs has decidable \emph{separation}, then $\fodc=\tlc{\Cs}$ has decidable membership. Moreover, this characterization generalizes the characterizations known for all the above instances.

A key feature of our proof is that we use a third auxiliary operator. It combines two other operators: Boolean polynomial closure (\bpolo) and \mbox{unambiguous polynomial closure~(\upolo)}. We have $\upol{\bpol{\Cs}} \subseteq \tlc{\Cs}$ if \Cs has mild properties~\cite{pzupol2}. In fact, for many natural classes, \mbox{$\upol{\bpol{\Cs}}\,{=}\, \tlc{\Cs}$}. For example, $\upol{\bpol{\{\emptyset,A^*\}}}$ is the class \ul of~\emph{unambiguous languages} defined by Schützenberger~\cite{schul}. It is known~\cite{twfo2} that $\ul =\tls=\fodw$. More generally, $\upol{\bpol{\Cs}} = \tlc{\Cs}$ for every class \Cs consisting of \emph{group languages}~\cite{pzupol2}. Yet, this is a strong hypothesis and the inclusion $\upol{\bpol{\Cs}} \subseteq \tlc{\Cs}$ is \emph{strict} in general. For example, the results of~\cite{between} imply that $\upol{\bpol{\at}} \neq \tlc{\at}$, where \at consists of all Boolean combinations of languages $A^*aA^*$ (with $a \in A$). Nevertheless, the classes \upol{\bpol{\Cs}} serve as a central ingredient in the most difficult direction of our proof: ``\emph{If a language satisfies our characterization on \Cs-orbits, prove that it belongs to \tlc{\Cs}}''. More precisely, we exploit the known characterization of \upol{\bpol{\Cs}} to prove that auxiliary languages belong to this class, and we then conclude using the \mbox{inclusion $\upol{\bpol{\Cs}} \subseteq \tlc{\Cs}$}.

Finally, we look at two additional operators: $\Cs \mapsto \tlfoc{\Cs}$ and $\Cs \mapsto \tlpoc{\Cs}$. They are also defined in terms of unary temporal logic, as the pure-future and the pure-past restrictions of $\Cs \mapsto \tlc{\Cs}$. We present generic algebraic characterizations for these two operators as well. Again, they are based on \Cs-orbits. For every class \Cs (with mild hypotheses), we show that a regular language belongs to \tlfoc{\Cs} (resp.~\tlpoc{\Cs}) if and only if all the \Cs-orbits inside its syntactic monoid are \Lrel-trivial (resp. \Rrel-trivial) monoids. As before,  these results yield \emph{transfer theorems}: if \Cs has decidable \emph{separation}, then \tlfoc{\Cs} and \tlpoc{\Cs} have \mbox{decidable membership.}

\smallskip\noindent\textbf{\textsf{Organization of the paper.}} We recall the notation and background in Section~\ref{sec:prelims}. In Section~\ref{sec:orbits}, we present the \Cs-orbits and their properties. In Section~\ref{sec:utl}, we define the operator~$\Cs\mapsto\tlc\Cs$. Section~\ref{sec:utlcar} is devoted to the generic characterization of \tlc\Cs and to its proof. In Section~\ref{sec:fpcar}, finally, we state the characterizations of the pure-future and pure-past restrictions of~$\tlc\Cs$.

\section{Preliminaries}
\label{sec:prelims}
We fix a finite \emph{alphabet} $A$ for the paper. As usual, $A^*$ denotes the set of all finite words over~$A$, including the empty word \veps. A \emph{language} is a subset of $A^*$. We let $A^+ = A^* \setminus \{\veps\}$. For $u,v \in A^*$, we write~$uv$ for the word obtained by concatenating $u$ and $v$. We lift the concatenation to languages as follows: if $K,L\subseteq A^*$, we let $KL=\{uv\mid u\in K, v\in L\}$. If $w \in A^*$, we write $|w| \in \nat$ for its length. A word $w =a_1 \cdots a_{|w|} \in A^*$ is viewed as an \emph{ordered set $\pos{w} = \{0,1,\dots,|w|,|w|+1\}$ of $|w|+2$ positions}. In addition, we let $\posc{w} = \{1,\dots,|w|\} \subsetneq \pos{w}$. Position $i \in \posc{w}$ carries label $a_i \in A$, which we write $\wpos{w}{i} = a_i$. On the other hand, positions $0$ and $|w|+1$ carry \emph{no label}. We write $\wpos{w}{0} = min$ and $\wpos{w}{|w|+1} = max$. For $v,w \in A^*$, we say that $v$ is an \emph{infix} (resp.~\emph{prefix}, \emph{suffix}) of~$w$ when there exist $x,y \in A^*$ such that $w = xvy$ (resp.~$w=vy$, $w=xv$). Given a word $w = a_1\cdots a_{|w|} \in A^*$ and $i,j \in \pos{w}$ such that $i < j$, we write $\infix{w}{i}{j} = a_{i+1} \cdots a_{j-1} \in A^*$ (\emph{i.e.}, the infix obtained by keeping the letters carried by positions \emph{strictly} between $i$ and $j$).

\smallskip
\noindent
\textbf{\textsf{Classes}.}
 A \emph{class} of languages \Cs is simply a set of languages. Such a class \Cs is a \emph{lattice} when $\emptyset\in\Cs$, $A^* \in \Cs$ and \Cs is closed under both union and intersection: for all $K,L \in \Cs$, we have $K \cup L \in \Cs$ and $K \cap L \in \Cs$. Moreover, a class of languages \Cs is a \emph{Boolean algebra} if it is a lattice closed under complement: for all $L \in \Cs$, we have $A^* \setminus L \in \Cs$. Finally, the class \Cs is \emph{closed under quotients} if for all $L \in \Cs$ and $u \in A^*$, we have $u^{-1}L \stackrel{\smash{\text{\tiny def}}}= \{w\in A^*\mid uw\in L\} \in \Cs$ and $Lu^{-1}  \stackrel{\smash{\text{\tiny def}}}=\{w\in A^*\mid wu\in L\}\in \Cs$. A \emph{\vari} is a Boolean algebra closed under quotients and containing only \emph{regular languages}. Regular languages are those which can be equivalently defined by finite automata, finite monoids or monadic second-order logic. We work with the definition by monoids, which we now recall.

\smallskip
\noindent
\textbf{\textsf{Monoids}.} A \emph{monoid} is a set $M$  endowed with an associative multiplication $(s,t)\mapsto st$ having an identity element $1_M$ (\emph{i.e.}, such that ${1_M}s=s{1_M}=s$ for every~$s \in M$). An \emph{idempotent} of a monoid $M$ is an element $e \in M$ such that $ee = e$. We write $E(M) \subseteq M$ for the set of all idempotents in $M$. It is folklore that for every \emph{finite} monoid~$M$, there exists a natural number $\omega(M)$ (denoted by $\omega$ when $M$ is understood) such that for every $s \in M$, the element $s^\omega$ is an idempotent.
Finally, we shall use the following Green relations~\cite{green} defined on monoids. Given a monoid $M$ and $s,t \in M$, we write:
\[
  \begin{array}{ll@{\ }l}
	s \Jord t & \text{when} & \text{there exist $x,y\in M$ such that $s=xty$}, \\
	s \Lord t & \text{when} & \text{there exists $x \in M$ such that $s = xt$}, \\
	s \Rord t & \text{when} & \text{there exists $y \in M$ such that $s = ty$}. \\
  \end{array}
\]
Clearly, \Jord, \Lord and \Rord are preorders (\emph{i.e.}, they are reflexive and transitive). We write \Jords, \Lords and \Rords for their strict variants (for example, $s \Jords t$ when $s \Jord t$ but $t \not\Jord s$). Finally, we write \Jrel, \Lrel and \Rrel for the corresponding equivalence relations (for example, $s \Jrel t$ when $s \Jord t$ and $t \Jord s$). There are many technical results about Green relations. We will just need the following easy and standard lemma, which applies to \emph{finite} monoids (see \emph{e.g.},~\cite{pinvars,pingoodref}).

\begin{restatable}{lemma}{jlr} \label{lem:jlr}
  Let $M$ be  a finite monoid and let $s,t \in M$. If $s \Jrel t$ and $s \Rord t$, then $s \Rrel t$. \end{restatable}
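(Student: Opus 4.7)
The plan is to derive the missing half, namely $t \Rord s$, directly from the hypotheses by an idempotent-power argument. From $s \Rord t$ I will pick some $u \in M$ with $s = tu$, and from $s \Jrel t$ (specifically from $t \Jord s$) I will pick $x,y \in M$ with $t = xsy$.

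Substituting the first equality into the second gives $t = x(tu)y = xtuy$. Iterating this substitution $n$ times yields $t = x^n\, t\, (uy)^n$ for every $n \geq 1$. Taking $n = \omega$ (using that $M$ is finite, so the exponent $\omega$ exists and $(uy)^\omega$ is idempotent), I get $t = x^\omega\, t\, (uy)^\omega$. Multiplying this equality on the right by $(uy)^\omega$ and using that $(uy)^\omega \cdot (uy)^\omega = (uy)^\omega$, I obtain $t \cdot (uy)^\omega = x^\omega\, t\, (uy)^\omega = t$. Hence $t = t\,(uy)^\omega$.

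It then suffices to expand one factor of $(uy)^\omega$ from the left: since $\omega \geq 1$, I write $(uy)^\omega = uy\,(uy)^{\omega-1}$, so
\[
  t \;=\; t\,uy\,(uy)^{\omega-1} \;=\; (tu)\cdot y(uy)^{\omega-1} \;=\; s \cdot y(uy)^{\omega-1}.
\]
This exhibits $t$ as a right multiple of $s$, i.e.\ $t \Rord s$. Combined with the assumption $s \Rord t$, this gives $s \Rrel t$, as desired.

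There is essentially no obstacle beyond the standard trick of raising the substitution to the power $\omega$; the only subtle point to articulate cleanly is the step $t\,(uy)^\omega = t$, which relies on idempotency of $(uy)^\omega$ together with the iterated substitution, and the fact that we are in a \emph{finite} monoid so that $\omega$ is well defined. I do not expect to need any further results beyond the existence of $\omega(M)$ recalled earlier in the preliminaries.
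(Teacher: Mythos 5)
Your proof is correct: iterating the substitution to get $t = x^n t (uy)^n$, passing to the idempotent power $(uy)^\omega$ to deduce $t(uy)^\omega = t$, and then peeling off one factor to write $t = s\,y(uy)^{\omega-1}$ indeed yields $t \Rord s$, hence $s \Rrel t$. The paper does not prove this lemma itself but defers to standard references, and your argument is exactly the standard stability argument found there, so there is nothing to flag.
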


\noindent
\textbf{\textsf{Regular languages and syntactic morphisms}.} Since $A^*$ is a monoid whose multiplication is concatenation (the identity element is \veps), we may consider monoid morphisms $\alpha: A^* \to M$ where $M$ is an arbitrary monoid. That is, $\alpha:A^*\to M$ is a map satisfying $\alpha(\veps)=1_M$ and $\alpha(uv)=\alpha(u)\alpha(v)$ for all $u,v\in A^*$. We say that a language $L \subseteq A^*$ is \emph{recognized} by $\alpha$ when there exists a set $F \subseteq M$ such that $L= \alpha\inv(F)$.

It is well known that regular languages are exactly those recognized by a morphism into a \emph{finite} monoid. Moreover, every language $L$ is recognized by a canonical morphism, which we briefly recall. One can associate to $L$ an equivalence $\equiv_L$ over $A^*$: the \emph{syntactic congruence of~$L$}. Given $u,v \in A^*$, we let $u \equiv_L v$ if and only if $xuy \in L \Leftrightarrow xvy \in L$ for every $x,y \in A^*$.
One can check that ``$\equiv_L$'' is indeed a congruence on $A^*$: it is an equivalence compatible with word concatenation. Thus, the set of equivalence classes $M_L = {A^*}/{\equiv_L}$ is a monoid. It is called the \emph{syntactic monoid of $L$}. Finally, the map $\alpha_L: A^* \to M_L$ sending every word to its equivalence class is a morphism recognizing~$L$, called the \emph{syntactic morphism of~$L$}. It is known that a language~$L$ is regular if and only if $M_L$ is finite (\emph{i.e.}, $\equiv_L$ has finite index): this is the Myhill-Nerode theorem. In this case, one can compute the syntactic morphism $\alpha_L: A^* \to M_L$ from any representation of $L$ (such as an automaton or a monoid morphism).

\medskip
\noindent
\textbf{\textsf{Decision problems}.} We consider two decision problems, both depending on an arbitrary class \Cs. They serve as mathematical tools for analyzing it, as obtaining an algorithm for one of these problems requires a solid understanding of that class~\Cs.
The \emph{\Cs-membership problem} is the simplest: it takes as input a single regular language~$L$ and simply asks whether $L\in \Cs$. The second problem, \emph{\Cs-separation}, is more general. Given three languages $K,L_1,L_2$, we say that $K$ \emph{separates} $L_1$ from $L_2$ if $L_1 \subseteq K$ and $L_2 \cap K = \emptyset$. Given a class~\Cs, we say that $L_1$ is \emph{\Cs-separable} from $L_2$ if some language of \Cs separates $L_1$ from $L_2$. The \emph{\Cs-separation problem} takes as input two regular languages $L_1,L_2$ and asks whether $L_1$ is \Cs-separable from $L_2$.

\begin{remark} \label{rem:sepgenmem}
  The \Cs-separation problem generalizes \Cs-membership. Indeed, a regular language belongs to $\Cs$ if and only if it is \Cs-separable from its complement, which is regular.
\end{remark}

\section{Orbits}
\label{sec:orbits}
Instead of looking at single classes, we consider \emph{operators}. These are correspondences $\Cs\mapsto \opo(\Cs)$ that take as input a class \Cs to build a new one $\opo(\Cs)$. We investigate three operators in Sections~\ref{sec:utl} to \ref{sec:fpcar}. For now, we present general tools for handling such operators. Given a class \Cs and a morphism $\alpha: A^* \to M$, we define special subsets of $M$: the \Cs-orbits for~$\alpha$. This notion was introduced in~\cite{pzsfclosj}. We shall use it to formulate \emph{generic characterizations} of the operators $\Cs\mapsto \opo(\Cs)$ that we consider: for each input \vari \Cs, the languages in $\opo(\Cs)$ are characterized by a property of the \Cs-orbits for their syntactic~morphisms.

\smallskip
\noindent
\textbf{\textsf{\boldmath{\Cs}-pairs}.} Consider a class \Cs and a morphism $\alpha: A^* \to M$. We say that a pair $(s,t) \in M^2$  is a \emph{\Cs-pair for $\alpha$} if and only if $\alpha\inv(s)$ is \emph{not} \Cs-separable from $\alpha\inv(t)$. Note that if \Cs-separation is decidable, then one can compute all \Cs-pairs for an input morphism.

We turn to a useful technical result, which characterizes the \Cs-pairs using morphisms. Consider \emph{two} morphisms $\alpha: A^*\to M$ and $\eta: A^* \to N$. For every pair $(s,t) \in M^2$, we say that $(s,t)$ is an \emph{$\eta$-pair for $\alpha$} when there exist $u,v \in A^*$ such that $\eta(u) = \eta(v)$, $\alpha(u) = s$ and $\alpha(v) = t$. In addition, for each class \Cs, we define the \emph{\Cs-morphisms} as the \emph{surjective} morphisms $\eta: A^*\to N$ into a finite monoid $N$ such that all languages recognized by $\eta$ belong to \Cs. We have the following elementary lemma, proved in~\cite[Lemma 5.11]{pzupol2}.

\begin{restatable}{lemma}{pmor}\label{lem:pairmor}
  Let \Cs be a \vari and $\alpha: A^* \to M$ be a morphism. Then,
  \begin{enumerate}
    \item For every \Cs-morphism $\eta: A^* \to N$, all \Cs-pairs for $\alpha$ are also $\eta$-pairs for $\alpha$.
    \item There exists a \Cs-morphism $\eta: A^* \to N$ such that all $\eta$-pairs for $\alpha$ are also $\Cs$-pairs for $\alpha$.
  \end{enumerate}

\end{restatable}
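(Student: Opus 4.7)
For item~(1), I would argue by contrapositive. Suppose $(s,t)$ is \emph{not} an $\eta$-pair for~$\alpha$, so that no $u \in \alpha\inv(s)$ and $v \in \alpha\inv(t)$ satisfy $\eta(u) = \eta(v)$. Then the images $F = \eta(\alpha\inv(s))$ and $\eta(\alpha\inv(t))$ are disjoint subsets of $N$. The language $K = \eta\inv(F)$ is recognized by $\eta$, hence belongs to \Cs because $\eta$ is a \Cs-morphism. By construction, $\alpha\inv(s) \subseteq K$ and $K \cap \alpha\inv(t) = \emptyset$, so $\alpha\inv(s)$ is \Cs-separable from $\alpha\inv(t)$; therefore $(s,t)$ is not a \Cs-pair.

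For item~(2), the idea is to tuple together syntactic morphisms of witnessing separators. Since $M$ is finite, only finitely many pairs $(s,t) \in M^2$ fail to be \Cs-pairs; for each such pair, fix a separator $K_{s,t} \in \Cs$ with $\alpha\inv(s) \subseteq K_{s,t}$ and $K_{s,t} \cap \alpha\inv(t) = \emptyset$. Each $K_{s,t}$ is regular (\Cs is a \vari), with syntactic morphism $\beta_{s,t}: A^* \to N_{s,t}$. The crucial step is to verify that each $\beta_{s,t}$ is itself a \Cs-morphism. This uses that \Cs is a Boolean algebra closed under quotients: the classes of the syntactic congruence $\equiv_{K_{s,t}}$ are the atoms of the finite Boolean algebra generated by the quotients $\{x\inv K_{s,t} y\inv \mid x,y \in A^*\}$, all of which lie in \Cs. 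Thus every class $\beta_{s,t}\inv(n)$ belongs to \Cs, and so does every language recognized by $\beta_{s,t}$.

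Finally, let $\eta : A^* \to N$ be the product morphism $w \mapsto (\beta_{s,t}(w))_{s,t}$, co-restricted to its image so that $\eta$ is surjective. Each singleton preimage $\eta\inv(n)$ is a finite intersection of singleton preimages of the $\beta_{s,t}$, hence lies in \Cs; taking finite unions, every language recognized by $\eta$ lies in \Cs, so $\eta$ is a \Cs-morphism. Conversely, if $(s,t)$ is not a \Cs-pair, then for every $u \in \alpha\inv(s)$ and $v \in \alpha\inv(t)$ the separator $K_{s,t}$ distinguishes $u$ from $v$, forcing $\beta_{s,t}(u) \neq \beta_{s,t}(v)$ and therefore $\eta(u) \neq \eta(v)$. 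Hence $(s,t)$ is not an $\eta$-pair, which is the desired contrapositive. The one nontrivial point is the intermediate claim that the syntactic morphism of any $K \in \Cs$ is a \Cs-morphism; everything else is routine bookkeeping with separators and product morphisms.
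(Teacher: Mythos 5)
Your proof is correct. The paper does not prove this lemma itself---it simply cites \cite{pzupol2} (Lemma 5.11)---and your argument is the standard one behind that reference: item (1) by separating $\alpha\inv(s)$ from $\alpha\inv(t)$ with the language $\eta\inv\bigl(\eta(\alpha\inv(s))\bigr)$, which is recognized by $\eta$ and hence in \Cs, and item (2) by tupling the syntactic morphisms of finitely many chosen separators, the one nontrivial ingredient being exactly the point you isolate and justify, namely that the syntactic morphism of any language of a \vari is a \Cs-morphism (each class of the syntactic congruence being a finite Boolean combination of two-sided quotients of that language).
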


\smallskip\noindent
\textbf{\textsf{\boldmath{\Cs}-orbits}.} Consider a class \Cs and a morphism $\alpha: A^* \to M$. For every \emph{idempotent} $e \in E(M)$, the \emph{\Cs-orbit of $e$ for $\alpha$} is the set $M_e \subseteq M$ consisting of all elements $ete \in M$ such that $(e,t) \in M^2$ is a \Cs-pair. If \Cs is a \vari and $\alpha$ is surjective, it is proved in~\cite[Lemma~5.5]{pzsfclosj} that $M_e$ is a monoid in $M$: it is closed under multiplication and $e \in M_e$ is its identity. On the other hand, $M_e$ is not a ``\emph{sub}monoid'' of $M$ (this is because $1_M$ needs not belong to $M_e$).

\begin{restatable}{lemma}{orbitmono} \label{lem:orbitmono}
  Let \Cs be a \vari and $\alpha: A^* \to M$ be a surjective morphism into a finite monoid. For all $e \in E(M)$, the \Cs-orbit of $e$ for $\alpha$ is a monoid in $M$ whose identity is $e$.
\end{restatable}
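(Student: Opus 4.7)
To prove that $M_e$ is a monoid with identity $e$, three things must be verified: (i) $e \in M_e$, so the identity candidate actually belongs to the set; (ii) $e$ acts as a two-sided identity on $M_e$ under the multiplication of $M$; and (iii) $M_e$ is closed under this multiplication. Associativity will then be inherited from $M$.

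Steps (i) and (ii) are straightforward. For (i), I would first check that $(e,e)$ is a \Cs-pair for $\alpha$. Since $\alpha$ is surjective, $\alpha^{-1}(e)$ is nonempty; any $K \in \Cs$ separating $\alpha^{-1}(e)$ from itself would satisfy both $\alpha^{-1}(e) \subseteq K$ and $\alpha^{-1}(e) \cap K = \emptyset$, forcing $\alpha^{-1}(e) = \emptyset$, a contradiction. Thus $(e,e)$ is a \Cs-pair, and since $e$ is idempotent we have $eee = e \in M_e$. For (ii), given any $ete \in M_e$, idempotency gives $e \cdot (ete) = (ee)te = ete$ and $(ete) \cdot e = et(ee) = ete$, so $e$ is indeed a two-sided identity on $M_e$.

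The main work lies in step (iii), and the key tool is Lemma~\ref{lem:pairmor}. Let $ete, et'e \in M_e$, so that both $(e,t)$ and $(e,t')$ are \Cs-pairs for $\alpha$. Their product in $M$ equals $ete \cdot et'e = et(ee)t'e = e(tet')e$, so I need to show $(e, tet')$ is itself a \Cs-pair. Using Lemma~\ref{lem:pairmor}(2), fix a \Cs-morphism $\eta: A^* \to N$ such that every $\eta$-pair for $\alpha$ is a \Cs-pair for $\alpha$. By Lemma~\ref{lem:pairmor}(1) applied to $(e,t)$ and $(e,t')$, there exist words $u_1, v_1, u_2, v_2 \in A^*$ with $\alpha(u_1) = e$, $\alpha(v_1) = t$, $\alpha(u_2) = e$, $\alpha(v_2) = t'$, and $\eta(u_1) = \eta(v_1)$, $\eta(u_2) = \eta(v_2)$. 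Consider the words $w = u_1 u_1 u_2$ and $w' = v_1 u_1 v_2$. Then $\alpha(w) = eee = e$, $\alpha(w') = tet'$, and $\eta(w) = \eta(u_1)\eta(u_1)\eta(u_2) = \eta(v_1)\eta(u_1)\eta(v_2) = \eta(w')$. Hence $(e, tet')$ is an $\eta$-pair, and by the choice of $\eta$ it is a \Cs-pair. This shows $ete \cdot et'e = e(tet')e \in M_e$, completing closure.

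\textbf{Main obstacle.} The only nontrivial point is step (iii): one must produce a concrete witness of inseparability for the pair $(e, tet')$ from witnesses for $(e,t)$ and $(e,t')$. Reasoning directly with separability is awkward, which is precisely why Lemma~\ref{lem:pairmor} is invoked — it reformulates \Cs-pairs in terms of $\eta$-pairs (a congruence-style condition), and such conditions are preserved under concatenation, making the ``glueing'' $u_1 u_1 u_2$ versus $v_1 u_1 v_2$ possible. Without this bridging lemma, multiplicative closure of orbits would be the genuinely hard step.
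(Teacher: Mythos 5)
Your proof is correct. Note that the paper does not actually prove this lemma in the present text: it is recalled from prior work (Lemma~5.5 of the cited extended paper on star-free closure), so there is no in-paper argument to match step by step. Your three steps are all sound: surjectivity of $\alpha$ ensures $\alpha\inv(e)\neq\emptyset$, so $\alpha\inv(e)$ cannot be separated from itself and $(e,e)$ is a \Cs-pair, giving $e\in M_e$; idempotency of $e$ makes it a two-sided identity; and for closure you correctly reduce $ete\cdot et'e=e(tet')e$ to showing that $(e,tet')$ is a \Cs-pair. Your way of doing this last step — fixing, by Lemma~\ref{lem:pairmor}(2), a single \Cs-morphism $\eta$ whose $\eta$-pairs are \Cs-pairs, pulling witnesses for $(e,t)$ and $(e,t')$ through Lemma~\ref{lem:pairmor}(1), and gluing them as $u_1u_1u_2$ versus $v_1u_1v_2$ — is a clean and legitimate route; it amounts to the standard fact that \Cs-pairs are closed under multiplication, which is the crux of the cited proof as well. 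The difference is mainly one of machinery: the original argument establishes multiplicative closure of \Cs-pairs directly from the prevariety axioms (working with separators and closure under quotients, union and intersection of the regular languages involved), whereas you obtain it for free from the congruence-style reformulation via $\eta$-pairs, which is readily available here since Lemma~\ref{lem:pairmor} is already stated in the paper. Both approaches rest on the same algebraic identity and yield the same statement, so your proposal is a valid proof, arguably shorter given the tools at hand, though it does depend on the nontrivial Lemma~\ref{lem:pairmor}(2) where the direct argument does not.
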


As seen above, when \Cs has decidable separation, one can compute the \Cs-pairs associated with an input morphism. Hence, one can also compute the \Cs-orbits in this case.

\begin{restatable}{lemma}{orbitcomp} \label{lem:orbitcomp}
  Let \Cs be a class with decidable separation. Given as input a morphism $\alpha: A^* \to M$ into a finite monoid and $e \in E(M)$, one can compute the \Cs-orbit of $e$ for $\alpha$.
\end{restatable}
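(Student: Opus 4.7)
The strategy is to reduce the computation of the \Cs-orbit of $e$ to finitely many calls to the decision procedure for \Cs-separation, which is available by hypothesis. Since $M$ is finite and given as part of the input, I would enumerate all elements $t \in M$. By definition, $(e,t)$ is a \Cs-pair for $\alpha$ precisely when $\alpha^{-1}(e)$ is not \Cs-separable from $\alpha^{-1}(t)$. Hence the task reduces to deciding, for each $t \in M$, whether this separation holds.

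The key point is that for every $s \in M$, the language $\alpha^{-1}(s)$ is regular, since it is recognized by the input morphism $\alpha$ itself. Consequently, both $\alpha^{-1}(e)$ and $\alpha^{-1}(t)$ can be presented as inputs to the \Cs-separation algorithm. Querying this algorithm for each $t \in M$ yields the exact set of elements $t$ such that $(e,t)$ is a \Cs-pair, and the \Cs-orbit $M_e$ is then obtained by collecting, for each such $t$, the product $ete$ computed directly inside the finite monoid $M$. Since $M$ is finite, the whole procedure terminates, and by construction it outputs precisely the set defined in the statement.

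There is no substantial obstacle here: once one unfolds the definitions of \Cs-orbit and \Cs-pair, the proof is essentially a direct reduction to the separation procedure. The only mild technicality is to ensure that the regular languages $\alpha^{-1}(s)$ are delivered in a format compatible with the input expected by the \Cs-separation algorithm, but this is routine because $\alpha$ itself recognizes each of them, so any standard representation (automaton or recognizing morphism) is readily available.
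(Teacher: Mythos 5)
Your proposal is correct and matches the paper's argument: the paper simply observes that decidable \Cs-separation allows computing all \Cs-pairs $(e,t)$ (since each $\alpha^{-1}(s)$ is regular and recognized by $\alpha$), after which the orbit is obtained by collecting the products $ete$ in the finite monoid $M$. Your enumeration over $t \in M$ with one separation query per element is exactly this reduction.
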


Finally, the following  lemma connects \Cs-orbits with \Cs-morphisms.

\begin{restatable}{lemma}{orbitnec} \label{lem:orbitnec}
  Let \Cs be a \vari and $\alpha: A^* \to M$ be a morphism. Moreover, let $\eta: A^* \to N$ be a \Cs-morphism. For every $e \in E(M)$, there exists $f \in E(N)$ such that the \Cs-orbit of $e$ for~$\alpha$ is contained in the set $\alpha(\eta\inv(f))$.
\end{restatable}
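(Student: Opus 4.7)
The plan is to extract a suitable idempotent $f$ from the structure of the sub-semigroup $R = \eta(\alpha\inv(e)) \subseteq N$. First I would dispose of the degenerate case: if $\alpha\inv(e) = \emptyset$, then no pair $(e,t)$ can be a \Cs-pair (the empty set is trivially \Cs-separable from anything), so $M_e = \emptyset$ and any $f \in E(N)$ works. Assuming otherwise, the idempotency of $e$ gives $\alpha(w_1 w_2) = e$ whenever $\alpha(w_1) = \alpha(w_2) = e$, so $R$ is a nonempty sub-semigroup of the finite monoid $N$.

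The key move is to choose $f$ to be an idempotent in the \emph{minimal ideal} of $R$. By standard finite-semigroup theory, this guarantees that $fRf$ is a group with identity $f$. To verify $M_e \subseteq \alpha(\eta\inv(f))$, I would take an arbitrary $ete \in M_e$, so that $(e,t)$ is a \Cs-pair, and apply Lemma~\ref{lem:pairmor}(1) to the \Cs-morphism $\eta$ to obtain $u,v \in A^*$ with $\eta(u) = \eta(v) = g$, $\alpha(u) = e$, and $\alpha(v) = t$; this forces $g \in R$, and hence $fgf \in fRf$ admits a group-inverse $h = (fgf)\inv$ lying in $fRf \subseteq R$. Choosing $w_0, y \in \alpha\inv(e)$ with $\eta(w_0) = f$ and $\eta(y) = h$, the word $x = y\,w_0\,v\,w_0$ then satisfies $\alpha(x) = ete$ (using $e^2 = e$) and $\eta(x) = h \cdot f \cdot g \cdot f = h \cdot (fgf) = f$, which completes the argument.

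The main obstacle is that the single idempotent $f$ must serve uniformly for \emph{all} \Cs-pairs $(e,t)$, whereas the element $g \in R$ produced by Lemma~\ref{lem:pairmor}(1) depends on $t$. A naive choice of any idempotent of $R$ is insufficient: the simpler candidate $w_0\,v\,w_0$ only yields $\eta$-image $fgf$, which is generally different from $f$. Picking $f$ in the minimal ideal of $R$ is what forces $fgf$ to live in the group $\mathcal{H}$-class of $f$, so that the mismatch can be cancelled by prepending the correcting factor $y$ that maps to $(fgf)\inv$ under $\eta$ and still to $e$ under $\alpha$.
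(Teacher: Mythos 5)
Your argument is correct, but it follows a genuinely different route from the paper's. The paper avoids all structure theory: it enumerates the finitely many elements $t_1,\dots,t_n$ with $(e,t_i)$ a \Cs-pair, obtains via Lemma~\ref{lem:pairmor}(1) words $x_i,y_i$ with $\eta(x_i)=\eta(y_i)$, $\alpha(x_i)=e$, $\alpha(y_i)=t_i$, and then simply sets $f=\eta\bigl((x_1\cdots x_n)^{\omega(N)}\bigr)$. The word witnessing $et_ie\in\alpha(\eta\inv(f))$ is $(x_1\cdots x_n)^{p}\,x_1\cdots x_{i-1}\,y_i\,x_{i+1}\cdots x_n\,(x_1\cdots x_n)^{2p-1}$: replacing one $x_i$ by $y_i$ leaves the $\eta$-image unchanged and changes the $\alpha$-image to $et_ie$. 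The single $f$ works uniformly because it is built from all the pairs at once, not because of any structural property. Your approach instead fixes $f$ intrinsically as an idempotent in the minimal ideal of $R=\eta(\alpha\inv(e))$, and pays for the resulting independence from $t$ by prepending a correcting factor whose $\eta$-image is the group inverse of $fgf$ in $fRf$. Both proofs are valid: the paper's is elementary and self-contained (only needs that $\omega(N)$-powers are idempotent), while yours invokes the local-group structure of the minimal ideal of a finite semigroup, which is heavier machinery but makes the choice of $f$ canonical rather than dependent on an arbitrary enumeration and on the chosen witnesses $x_i,y_i$.
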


\begin{proof}
  Let $t_1,\dots,t_n \in M$ be all elements of the set $\{t\in M\mid (e,t)\text{ is a \Cs-pair}\}$. By definition, the \Cs-orbit of $e$ for $\alpha$ is $M_e = \{et_1e,\dots,et_ne\}$. Since $\eta$ is a \Cs-morphism, Lemma~\ref{lem:pairmor} implies that $(e,t_i)$ is an $\eta$-pair for all $i \leq n$. This yields $x_i,y_i \in A^*$ such that $\eta(x_i) = \eta(y_i)$, $\alpha(x_i) = e$ and $\alpha(y_i) = t_i$. Let $p = \omega(N)$, $w = (x_1 \cdots x_n)^p$ and $f = \eta(w)$. Note that~$f$ is idempotent by choice of $p$. We show that $et_ie \in \alpha(\eta\inv(f))$ for $i \leq n$. We define $w_i = (x_1 \cdots x_n)^p x_1 \cdots x_{i-1} y_i x_{i+1} \cdots x_n (x_1 \cdots x_n)^{2p-1}$. By definition, we have $\alpha(w_i) = et_ie$. Now, since $\eta(x_i) = \eta(y_i)$, we get $\eta(w_i) = \eta(w) = f$. Hence, $et_ie \in \alpha(\eta\inv(f))$, as desired.
\end{proof}

\section{Generalized unary temporal logic}
\label{sec:utl}
In this section, we define generalized unary temporal logic. We introduce an \emph{operator} $\Cs\mapsto\tlc{\Cs}$ that associates a new class of languages \tlc{\Cs} with every input class~\Cs. We first recall its definition (taken from~\cite{pzupol2}), and we then complete it with useful properties.

\subsection{Definition}

\noindent
\textbf{\textsf{Syntax}.} We associate with any class \Cs a set of temporal formulas denoted by \tla{\Cs} as follows. A \tla{\Cs} formula is built from atomic formulas using Boolean connectives and temporal operators. The atomic formulas are $\top$, $\bot$, $min$, $max$ and ``$a$'' for every letter $a \in A$. All Boolean connectives are allowed: if $\psi_1$ and $\psi_2$ are \tla{\Cs} formulas, then so are $(\psi_{1} \vee \psi_{2})$, $(\psi_{1} \wedge \psi_{2})$ and $(\neg \psi_1)$. We associate \emph{two temporal modalities} with every language $L \in \Cs$, which we denote by $\textup{F}_L$ and $\textup{P}_L$: if $\psi$ is a \tla{\Cs} formula, then so are $(\finallyl{\psi})$ and $(\finallyml{\psi})$. For the sake of improved readability, we omit parentheses when there is no ambiguity.

\smallskip
\noindent
\textbf{\textsf{Semantics}.} Evaluating a \tla{\Cs} formula $\varphi$ requires a word $w \in A^*$ and a position $i \in \pos{w}$. We define by induction what it means for \emph{$(w,i)$ to satisfy $\varphi$}, which one denotes by $w,i \models \varphi$.
\begin{itemize}
  \item \textbf{Atomic formulas:} $w,i \models \top$ always holds, $w,i \models \bot$ never holds and for every symbol $\ell \in A \cup \{min,max\}$, $w,i \models \ell$ holds when $\ell = \wpos{w}{i}$.
  \item \textbf{Disjunction:} $w,i \models \psi_1 \vee \psi_2$ when $w,i \models \psi_1$ or $w,i \models \psi_2$.
  \item \textbf{Conjunction:} $w,i \models \psi_1 \wedge \psi_2$ when $w,i \models \psi_1$ and $w,i \models \psi_2$.
  \item \textbf{Negation:} $w,i \models \neg \psi$ when $w,i \models \psi$ does not hold.
  \item \textbf{Finally:}  for $L \in \Cs$, we let $w,i \models \finallyl{\psi}$ when there exists $j \in \pos{w}$ such that $i < j$,   $\infix{w}{i}{j} \in L$ and $w,j \models \psi$.
  \item \textbf{Previously:} for $L \in \Cs$, we let $w,i \models \finallyml{\psi}$ when there exists $j \in \pos{w}$ such that $j < i$,   $\infix{w}{j}{i} \in L$ and $w,j \models \psi$.
\end{itemize}
When no distinguished position is specified, it is customary to evaluate formulas at the~\emph{leftmost} unlabeled position. One could also consider the symmetrical convention of evaluating formulas at the \emph{rightmost} unlabeled position. The convention chosen does not matter: we end-up with the same class of languages. However, we shall consider restrictions of \tla{\Cs} for which this choice \emph{does} matter. This is why we introduce notations for both conventions. Given a formula $\varphi\in\tla{\Cs}$ we let $L_{min}(\varphi) = \{w \in A^* \mid w,0 \models \varphi\}$ and $L_{max}(\varphi) = \{w \in A^* \mid w,|w|+1 \models \varphi\}$.

We are now ready to define the operator $\Cs\mapsto\tlc\Cs$. Consider an arbitrary class \Cs. We write \tlc{\Cs} for the class consisting of all languages $L_{min}(\varphi)$ where $\varphi \in \tla{\Cs}$. Observe that by definition, \tlc{\Cs} is a Boolean algebra. Actually, the results of~\cite{pzupol2} imply that when \Cs is a \vari, then so is \tlc{\Cs} (we do not need this fact in the present paper).

\medskip
\noindent
\textbf{\textsf{Classic unary temporal logic}.} Let $\stzer =\{\emptyset,A^*\}$ and $\dotzer = \{\emptyset,\{\veps\}, A^+,A^*\}$. The modalities $\textup{F}_{A^*}$ and $\textup{P}_{A^*}$ have the same semantics as the modalities $\textup{F}$ and $\textup{P}$ of standard unary temporal logic---\emph{e.g.}, $w,i \models \finally{\varphi}$ when there exists $j \in \pos{w}$ such that $i < j$ and $w,j \models \varphi$. Similarly, the modalities $\textup{F}_{\{\veps\}}$ and $\textup{P}_{\{\veps\}}$ have the same semantics as the modalities $\textup{X}$ (next) and $\textup{Y}$ (yesterday)---\emph{e.g.}, $w,i \models \nex{\varphi}$ when $i+1 \in \pos{w}$ and $w,i+1 \models \varphi$. Using these facts, one can check that the classes $\tlc{\stzer}$ and $\tlc{\dotzer}$ correspond exactly to the two original standard variants of unary temporal logic (see \emph{e.g.},~\cite{evwutl}): we have $\tls = \tlc{\stzer}$ and $\tlxs = \tlc{\dotzer}$.

\begin{remark}[Robustness of classes to which \tls is applied] \label{rem:robustness}
  Note that including $\emptyset$ in an input class does not bring any new modality in unary temporal logic. Similarly, the classes \tlc\dotzer and $\tlc{\dotzer\setminus \{A^+\}}$ are identical. However, in order to use generic results such as those from Section~\ref{sec:orbits}, we require the classes to which the operator $\Cs\mapsto\tlc\Cs$ is applied to have robust properties: they should be \varis (hence, they should be closed under~complement).
\end{remark}

\begin{remark}[Connection with \fod]\label{rem:fo2}
  Etessami, Vardi and Wilke~\cite{evwutl} have shown that the variant~\tls corresponds to the class \fodw (two-variable first-order logic equipped with the linear order), and that \tlxs corresponds to \fodws (which also allows the successor). In~\cite{pzupol2}, these results are generalized to all classes \tlc{\Cs} where \Cs is~a \emph{Boolean algebra}. In this case, we can construct from \Cs a set of predicates \infsigc such that $\tlc{\Cs} = \fod(\infsigc)$.
\end{remark}

\begin{remark} \label{rem:at}
  Another important input is the class \at of alphabet testable languages. It consists of all Boolean combinations of languages $A^*aA^*$, where $a \in A$ is a letter. The class \tlc{\at} has been studied by Krebs, Lodaya, Pandya and Straubing~\cite{betweenlics,betweenconf,between}, who worked with the definition based on two-variable first-order logic (\emph{i.e.}, with the class $\fod(\infsig{\at})$, see Remark~\ref{rem:fo2}). In particular, they proved that \tlc{\at} has decidable membership. We shall obtain this result as a corollary of our generic characterization of the classes \tlc{\Cs}.
\end{remark}

\subsection{Connection with unambiguous polynomial closure}

It is shown in~\cite{pzupol2} that $\Cs \mapsto \tlc{\Cs}$ can be expressed by other operators for very specific inputs: \varis of \emph{group languages}. If \Gs is such a class, then $\tlc{\Gs}$ coincides with $\upol{\bpol{\Gs}}$, a class built on top of \Gs with the two standard operators \upolo and \bpoln. We do not use this result here, since we are tackling \emph{arbitrary} input \varis, and in general, \upol{\bpol{\Cs}} is \emph{strictly} included in \tlc{\Cs} (it follows from~\cite{between} that the inclusion is strict for the class $\at$ of Remark~\ref{rem:at}). However, the operators \upolo and \bpolo remain key tools in the paper: we use two results of~\cite{pzupol2} about them. Let us first briefly recall their~definitions.

Given finitely many languages $L_0,\dots,L_n \subseteq A^*$, a \emph{marked product} of $L_0,\dots,L_n$ is a~product of the form $L_0a_1L_1 \cdots a_n L_n$ where $a_1,\dots,a_n \in A$. A single language $L_0$ is a marked product (this is the case $n = 0$). The \emph{polynomial closure} of a class \Cs, denoted by \pol{\Cs}, consists of all \mbox{\emph{finite}} \emph{unions} of marked products $L_0a_1L_1\cdots a_nL_n$ such that $L_0,\dots, L_n \in \Cs$. If \Cs is a \vari, then \pol{\Cs} is a lattice (this is due to Arfi~\cite{arfi87}, see also~\cite{jep-intersectPOL,PZ:generic18} for recent proofs). However, \pol{\Cs} need not be closed under complement. This is why it is often combined with another operator: the Boolean closure of a class \Ds, denoted by \bool{\Ds}, is the least Boolean algebra containing \Ds. We write \bpol{\Cs} for \bool{\pol{\Cs}}. It is standard that if \Cs is a \vari,  then so is \bpol{\Cs} (see~\cite{PZ:generic18} for example). Finally, \upolo is the \emph{unambiguous} restriction of \poln. A marked product $L_0a_1L_1\cdots a_nL_n$ is \emph{unambiguous} when every word $w \in L_0a_1L_1 \cdots a_n L_n$ has a \emph{unique} decomposition $w = w_0a_1w_1 \cdots a_nw_n$ where $w_i \in L_i$ for $0 \leq i \leq n$. The \emph{unambiguous polynomial closure} of a class \Cs, written \upol{\Cs}, consists of all \emph{finite disjoint unions} of \emph{unambiguous marked products} $L_0a_1L_1 \cdots a_nL_n$ such that $L_0, \dots,L_n \in \Cs$ (by ``disjoint'' we mean that the languages in the union must be pairwise disjoint). While this is not apparent on the definition, it is known~\cite{pzupol2} that if the input class~\Cs is a \vari, then so is~\upol{\Cs}. Thus, \upolo preserves closure under complement.

\smallskip
In the paper, we are interested in the ``combined'' operator $\Cs \mapsto \upol{\bpol{\Cs}}$. Indeed, it is connected to the classes \tlc{\Cs} by the following proposition proved in~\cite[Proposition~9.12]{pzupol2}.

\begin{restatable}{proposition}{ubptl} \label{prop:ubptl}
  For every \vari \Cs, we have $\upol{\bpol{\Cs}} \subseteq \tlc{\Cs}$.
\end{restatable}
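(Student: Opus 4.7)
The plan is to reduce the inclusion to unambiguous marked products and then to induct on their length. Since $\tlc{\Cs}$ is a Boolean algebra, hence closed under union, and by definition $\upol{\bpol{\Cs}}$ consists of finite disjoint unions of unambiguous marked products $L_0 a_1 L_1 \cdots a_n L_n$ with $L_i \in \bpol{\Cs}$, it suffices to show that each such product belongs to $\tlc{\Cs}$. I would proceed by induction on~$n$.

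For the base case $n = 0$, I would in fact establish the stronger inclusion $\bpol{\Cs} \subseteq \tlc{\Cs}$. Every marked product $K_0 b_1 K_1 \cdots b_m K_m$ with each $K_i \in \Cs$ coincides with $L_{min}(\varphi)$ for the nested formula
\[
\varphi = \finallyp{K_0}{(b_1 \wedge \finallyp{K_1}{(b_2 \wedge \cdots \wedge \finallyp{K_{m-1}}{(b_m \wedge \finallyp{K_m}{max})})})},
\]
and Boolean closure of $\tlc{\Cs}$ then yields $\bpol{\Cs} \subseteq \tlc{\Cs}$.

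For the inductive step, consider an unambiguous product $L = L_0 a_1 L_1 \cdots a_n L_n$ with $n \geq 1$. First, one checks that the shorter product $L' = L_1 a_2 \cdots a_n L_n$ is itself unambiguous: otherwise, two distinct decompositions of some $u \in L'$ prefixed by $v a_1$ (for any $v \in L_0$, which is nonempty whenever $L$ is, the empty case being trivial) would yield two distinct decompositions of $v a_1 u \in L$, contradicting unambiguity. By the induction hypothesis $L' \in \tlc{\Cs}$, and by the base case $L_0 \in \tlc{\Cs}$. Unambiguity of $L$ then gives: $w \in L$ iff there exists $j \in \pos{w}$ with $\wpos{w}{j} = a_1$, $\prefix{w}{j} \in L_0$, and $\suffix{w}{j} \in L'$. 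Hence the proof reduces to a \emph{relativization lemma}: given $\psi \in \tla{\Cs}$, produce $\psi^{\flat}, \psi^{\sharp} \in \tla{\Cs}$ such that $w,j \models \psi^{\flat}$ iff $\prefix{w}{j} \in L_{min}(\psi)$ and $w,j \models \psi^{\sharp}$ iff $\suffix{w}{j} \in L_{min}(\psi)$. Given the lemma, the formula $\finallyp{A^*}{(a_1 \wedge \psi_0^{\flat} \wedge \psi_{L'}^{\sharp})}$, with $\psi_0, \psi_{L'}$ defining $L_0, L'$ and $A^* \in \Cs$ by the \vari assumption, defines $L$.

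The main obstacle is the relativization lemma, proved by structural induction on~$\psi$. Boolean connectives pass through directly, and atomic formulas require only careful bookkeeping, since the prefix's $min$ and $max$ correspond to position $0$ and position $j$ of $w$ respectively, while the suffix's $min$ and $max$ correspond to position $j$ and position $|w|+1$. The delicate case is the temporal modalities: every subformula $\finallyp{L}{\chi}$ appearing inside $\psi^{\flat}$ must restrict its future search to positions $\leq j$, and every $\finallymp{L}{\chi}$ inside $\psi^{\sharp}$ must restrict its past search to positions $\geq j$. These restrictions are enforced by conjoining an appropriate guard built from modalities of $\Cs$ anchored at the pivot reached by the outer $\finallyp{A^*}{\cdot}$; note that past modalities inside $\psi^{\flat}$ and future modalities inside $\psi^{\sharp}$ need no modification, since prefix and suffix share, respectively, their past and their future with $w$. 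The combinatorial verification that these guards compose correctly through arbitrary nestings of modalities and Boolean connectives is the technically substantive step.
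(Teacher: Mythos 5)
Your base case is fine (nested $\finallyp{K_i}{\cdot}$ modalities do define marked products over \Cs, so $\bpol{\Cs}\subseteq\tlc{\Cs}$), but the inductive step rests on a relativization lemma that is false, and the way you reduce to it shows why it must be. The equivalence ``$w\in L$ iff there is $j\in\pos{w}$ with $\wpos{w}{j}=a_1$, $\prefix{w}{j}\in L_0$ and $\suffix{w}{j}\in L'$'' holds for \emph{every} marked product, unambiguous or not, and your final formula $\finallyp{A^*}{(a_1\wedge\psi_0^{\flat}\wedge\psi_{L'}^{\sharp})}$ only uses the \emph{existence} of such a $j$; unambiguity is invoked solely to keep $L'$ in the induction, not in the construction of any formula. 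Hence, if the relativization lemma were available, the very same induction would prove $\pol{\bpol{\Cs}}\subseteq\tlc{\Cs}$. This is false already for $\Cs=\stzer$: there $\pol{\bpol{\stzer}}=\siwd$, while $\tlc{\stzer}=\tls=\fodw$ is contained in \piwd, and \siwd is not (the Straubing--Thérien hierarchy is not closed under complement at level $2$). So no proof of the proposition can avoid using unambiguity in the formulas themselves, and the lemma you rely on cannot be true.

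The concrete failure point is exactly the step you flag as ``technically substantive''. To build $\psi^{\sharp}$ you must bound every past search inside $\psi_{L'}$ by the pivot $j$; but a subformula $\finallymp{K}{\chi}$ is evaluated at a position $i$ reached after a chain of jumps to the right of $j$, and unary temporal logic offers no way to refer back to $j$ from $i$: the ``guard anchored at the pivot'' would have to compare a quantified position simultaneously with $i$ and with $j$, which is precisely what two variables (equivalently, unary modalities) cannot express --- this is the same obstruction that makes \fodw fail to be closed under marked concatenation. Making the pivot ``definable'' via the unambiguity of $L$ does not help as stated, since detecting it again requires the very relativized properties you are constructing. A correct argument must exploit the determinism hidden in unambiguous products so that the splitting positions are locally detectable (or argue algebraically through the characterization of $\upol{\bpol{\Cs}}$); this is how the result is actually established in the reference the paper cites for this proposition ([pzupol2, Proposition~9.12]) --- the present paper imports it rather than reproving it.
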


Although the inclusion of Proposition~\ref{prop:ubptl} is \emph{strict} in general, it is essential for proving that \emph{particular} languages belong to \tlc{\Cs}. Indeed, we will combine it with the next result~\cite[Theorem~6.7]{pzupol2} to prove that languages belong to \upol{\bpol{\Cs}}---and therefore to~\tlc\Cs.

\begin{restatable}{theorem}{ubpcar}\label{thm:ubpcar}
  Let \Cs be a \vari, $L \subseteq A^*$ be a regular language and $\alpha: A^* \to M$ be its syntactic morphism. Then, $L \in \upol{\bpol{\Cs}}$ if and only if $\alpha$ satisfies the following~property:
  \begin{equation} \label{eq:ubp}
	(esete)^{\omega+1} = (esete)^{\omega} ete(esete)^{\omega} \quad \text{for every \Cs-pair $(e,s) \in M^2$ and every $t \in M$}.
  \end{equation}
\end{restatable}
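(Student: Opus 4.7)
The plan is to prove both directions of the equivalence. The necessity direction proceeds by structural induction on the construction of languages in $\upol{\bpol{\Cs}}$, while the sufficiency direction requires an explicit decomposition of $L$ as an unambiguous marked product.

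For necessity, assume $L \in \upol{\bpol{\Cs}}$, so $L$ is a finite disjoint union of unambiguous marked products $L_0 a_1 L_1 \cdots a_n L_n$ with each $L_i \in \bpol{\Cs}$. The base case asserts that for any $K \in \bpol{\Cs}$, the syntactic morphism of $K$ satisfies~(\ref{eq:ubp}). Given a \Cs-pair $(e,s)$ and an arbitrary $t \in M$, Lemma~\ref{lem:pairmor} yields a \Cs-morphism $\eta: A^* \to N$ together with words $u \in \alpha\inv(e)$ and $v \in \alpha\inv(s)$ satisfying $\eta(u) = \eta(v)$. Since languages in $\bpol{\Cs}$ should be recognized by (a suitable refinement of) $\eta$, the words $u$ and $v$ become interchangeable inside long concatenations, which forces the equation when one builds long words of the form $(uwuwuv)^{\omega+1}$ with $\alpha(w) = t$. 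The inductive step then consists of showing that property~(\ref{eq:ubp}) is preserved under unambiguous marked products. This hinges on the unique decomposition property of unambiguous products: any word admits exactly one factorization, so the ``same'' parsing must apply to both sides of~(\ref{eq:ubp}), letting the equation propagate from each $L_i$ to the product.

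For sufficiency, assume $\alpha$ satisfies~(\ref{eq:ubp}). Write $L = \bigcup_{s \in F} \alpha\inv(s)$ with $F = \alpha(L) \subseteq M$, and aim to describe each $\alpha\inv(s)$ as a finite disjoint union of unambiguous marked products of $\bpol{\Cs}$ languages. I would stratify $M$ by its \Jrel-classes and, for each word $w$, read off the ordered sequence of \Jrel-classes visited by the prefixes $\alpha(\infix{w}{0}{i})$. The equation~(\ref{eq:ubp}) should force this sequence to have bounded depth in \Jord, so that $w$ admits a canonical factorization determined by the letters triggering strict \Jrel-decreases. These letters serve as markers in the marked product, and the factors between two consecutive markers remain within a single \Jrel-class of $M$. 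For each such factor, one would invoke a separate algebraic characterization of $\bpol{\Cs}$ to express it as a language in $\bpol{\Cs}$, using~(\ref{eq:ubp}) to verify that the required $\bpol{\Cs}$-equations hold on the restricted sub-monoid induced by the \Jrel-class. The disjointness of the union and the unambiguity of each product follow from the canonical choice of markers (the first, respectively last, letter producing a given \Jrel-transition).

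The main obstacle lies in the sufficiency direction, specifically in arguing that the intermediate factors between two consecutive markers truly belong to $\bpol{\Cs}$. This amounts to translating the purely algebraic equation~(\ref{eq:ubp}) into a concrete language-theoretic statement about those factors, and typically requires invoking a separate (and itself nontrivial) algebraic characterization of $\bpol{\Cs}$, checking that its defining equations follow from~(\ref{eq:ubp}) when one restricts attention to a single \Jrel-class. Keeping the decomposition unambiguous while simultaneously staying inside $\bpol{\Cs}$ is a delicate balancing act: the markers must be chosen canonically enough to guarantee uniqueness, yet coarsely enough that the intermediate factors are still \emph{definable} within $\bpol{\Cs}$ rather than needing the full expressive power of $\upol{\bpol{\Cs}}$.
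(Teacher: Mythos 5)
The first thing to note is that the paper does not prove Theorem~\ref{thm:ubpcar} at all: it is imported as a black box from \cite[Theorem~6.7]{pzupol2} and only \emph{used} in the proof of Theorem~\ref{thm:utlmain}. So your attempt has to be judged as a standalone reconstruction of that external result, and as such it has genuine gaps in both directions. For necessity, the base case is wrong as stated: languages of \bpol{\Cs} are \emph{not} recognized by ``a suitable refinement of'' a \Cs-morphism $\eta$ --- if they were, \bpol{\Cs} would collapse to \Cs. What is actually needed is the nontrivial closure property of \pol{\Cs} (and of its complement class) with respect to \Cs-pairs: if $(e,s)$ is a \Cs-pair, then marked products of \Cs-languages absorb insertions of a word mapping to $s$ inside idempotent loops mapping to $e$. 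This is itself a theorem about \pol{\Cs}, and nothing in your sketch supplies it or a substitute. Moreover, \eqref{eq:ubp} is a property of the \emph{syntactic} morphism of $L$, and syntactic monoids do not decompose along unions and marked products, so ``structural induction showing \eqref{eq:ubp} is preserved under unambiguous marked products'' is not a well-posed induction; one must first recast the property as a condition on arbitrary recognizing morphisms (or argue via quotients of a recognizer of all the factors), a step you do not address.

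For sufficiency, the step you yourself flag as the ``main obstacle'' is precisely the entire content of the theorem, and the route you propose does not lead there as described. Placing markers at the letters where the prefix image strictly decreases for \Jord is the classical recipe for the variety \davar (i.e., for $\Cs=\{\emptyset,A^*\}$), but in the generalized setting the hypothesis \eqref{eq:ubp} only constrains elements arising from \Cs-\emph{pairs}, and relating \Cs-pairs to the \Jrel-structure of $M$ requires extra machinery (a \Cs-morphism $\eta$ as in Lemma~\ref{lem:pairmor}, and an analysis of which elements are reachable within a fixed $\eta$-class), none of which appears in the sketch. In particular, it is not true that the factors between consecutive markers are automatically \bpol{\Cs}-definable because their prefix images stay in one \Jrel-class, nor is disjointness/unambiguity of the resulting union free from the ``canonical choice of markers'': both claims are exactly what must be proved, and the bounded-\Jord-depth observation you offer holds trivially (depth is at most $|M|$) without \eqref{eq:ubp}, so it cannot be carrying the weight of the equation. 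The known proof goes through a general characterization of \upol{\Ds} for an arbitrary \vari \Ds in terms of \Ds-pairs, combined with a description of \bpol{\Cs}-pairs in terms of \Cs-pairs; reconstructing it would essentially mean redoing that work, which your outline does not do.
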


\section{Algebraic characterization of \tlc{\Cs}}
\label{sec:utlcar}
We present a generic characterization of \tlc{\Cs} when \Cs is a \vari. It elegantly generalizes the characterizations of $\tls = \fodw$ and $\tlxs = \fodws$ by Thérien and Wilke~\cite{twfo2} and that of $\tlc{\at} = \fod(\infsig{\at})$ by Krebs, Lodaya, Pandya and Straubing~\cite{betweenlics,betweenconf,between}.

\subsection{Statement}

The characterization is based on the well-known variety of finite monoids \davar (see~\cite{Tesson02diamondsare} for a survey on this class). A finite monoid $M$ belongs to \davar if it satisfies the following~equation:
\begin{equation} \label{eq:da}
  (st)^\omega = (st)^\omega t (st)^\omega \quad \text{for every $s,t \in M$}.
\end{equation}
Thérien and Wilke~\cite{twfo2} showed that a regular language belongs to \tls if and only if its syntactic monoid is in~\davar (strictly speaking, they considered two-variable first-order logic, the equality $\fodw = \tls$ is due to Etessami, Vardi and Wilke~\cite{evwutl}). We extend this result in the following generic characterization of \tlc{\Cs}, based on \Cs-orbits introduced in Section~\ref{sec:orbits}.

\begin{restatable}{theorem}{utlmain} \label{thm:utlmain}
  Let \Cs be a \vari, $L \subseteq A^*$ be a regular language and $\alpha: A^* \to M$ be its syntactic morphism. The two following properties are equivalent:
  \begin{enumerate}
    \item $L \in \tlc{\Cs}$.
    \item For every idempotent $e \in E(M)$, the \Cs-orbit of $e$ for $\alpha$ belongs to \davar.
  \end{enumerate}
\end{restatable}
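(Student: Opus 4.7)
The proof splits into the two directions.

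For $(1)\Rightarrow(2)$, I would run an Ehrenfeucht--Fraïssé-style argument. Fix an idempotent $e \in E(M)$ and two \Cs-pairs $(e,s_1), (e,s_2) \in M^2$. Since the \Cs-orbit $M_e$ is a monoid with identity $e$ and the multiplication of $M$, the \davar equation in $M_e$ unwinds to
\[
(es_1es_2e)^\omega = (es_1es_2e)^\omega \cdot es_2e \cdot (es_1es_2e)^\omega,
\]
which is what we must establish. By Lemma~\ref{lem:pairmor}, pick a \Cs-morphism $\eta \colon A^* \to N$ such that every \Cs-pair for $\alpha$ is also an $\eta$-pair; this yields words $u_i, v_i \in A^*$ with $\eta(u_i) = \eta(v_i)$, $\alpha(u_i) = e$ and $\alpha(v_i) = s_i$. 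From suitably padded iterates of these words one builds witnesses $w_1, w_2$ whose $\alpha$-images are the two sides of the equation while $\eta(w_1) = \eta(w_2)$. A routine induction on the modal depth of a formula $\varphi \in \tla{\Cs}$ then shows that in every surrounding context no such formula distinguishes corresponding positions of $xw_1y$ and $xw_2y$; the only nontrivial cases are the modalities $\textup{F}_K, \textup{P}_K$ for $K \in \Cs$, and they are handled because $\eta$ recognizes every language in \Cs, so $K$ cannot distinguish two infixes with the same $\eta$-image. Applied to the defining formula of $L$, this equivalence forces $xw_1y$ and $xw_2y$ to agree on $L$ for all contexts $x, y$, so $\alpha(w_1) = \alpha(w_2)$ by the syntactic congruence.

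For $(2)\Rightarrow(1)$, I would follow the strategy suggested in the introduction and build $L$ from auxiliary languages that lie in $\upol{\bpol{\Cs}}$, hence in $\tlc{\Cs}$ by Proposition~\ref{prop:ubptl}. Concretely, fix a \Cs-morphism $\eta \colon A^* \to N$ via Lemma~\ref{lem:pairmor} and proceed by induction on the \Jrel-order of $M$: for each $s \in M$, we aim to define $\alpha^{-1}(s)$ by a $\tla{\Cs}$ formula. The base case concerns $\Jrel$-maximal elements, where $\alpha^{-1}(s)$ can be shown directly to satisfy equation~\eqref{eq:ubp} (the structure at the top of the $\Jrel$-order trivialises the condition). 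The inductive step decomposes any $w \in \alpha^{-1}(s)$ around a position whose $\alpha$-image is strictly $\Jrel$-larger than $s$ and recurses on the prefix and the suffix; this decomposition is naturally captured by $\textup{F}_K, \textup{P}_K$ modalities where each auxiliary language $K$ encodes the $\eta$-type of the relevant infix together with the inductively constructed $\tla{\Cs}$ descriptions of its neighbouring factors.

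The main obstacle is showing that each auxiliary language $K$ indeed lies in $\upol{\bpol{\Cs}}$ via Theorem~\ref{thm:ubpcar}. The delicate point is that equation~\eqref{eq:ubp} is strictly \emph{stronger} than the \davar equation on $M_e$: it demands $(esete)^{\omega+1} = (esete)^\omega ete (esete)^\omega$ for an \emph{arbitrary} element $t$, whereas \davar in $M_e$ supplies this equality only when $(e,t)$ is itself a \Cs-pair. The technical heart of the argument must therefore be to tune the auxiliary languages so that, in their syntactic monoids, every element that can occupy the role of $t$ in~\eqref{eq:ubp} necessarily arises from a \Cs-pair partner of $e$ in $M$. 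Once this reduction is performed, the hypothesis on \Cs-orbits supplies~\eqref{eq:ubp} for each auxiliary language, closing the induction. I expect the delicate interplay between the $\Jrel$-level bookkeeping, the boundary labels $min$ and $max$, and the matching of $\eta$-types in the decomposition to be where the bulk of the technical work lives.
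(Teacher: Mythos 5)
Your left-to-right direction is essentially the argument the paper uses: reduce the \davar identity on the orbit to showing that two suitably padded words with the same $\eta$-image are indistinguishable, in every context, by formulas of rank at most the rank of a formula defining $L$ (this is the paper's Proposition~\ref{prop:efg}, imported from earlier work), and then conclude through the syntactic congruence. Two details need repair: $\eta$ cannot recognize \emph{every} language of \Cs (one chooses $\eta$ \emph{after} fixing the formula defining $L$, so that it recognizes the finitely many languages occurring in it, cf.\ Fact~\ref{fct:esuit}), and one must arrange that the witnesses for $e$, $s_1$, $s_2$ share a \emph{common} idempotent $\eta$-image before the pumping argument can be applied; this is exactly what Lemma~\ref{lem:orbitnec} provides. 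These are fixable, and this direction is in substance the paper's.

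The genuine gap is in the converse direction. You correctly isolate the crux --- equation~\eqref{eq:ubp} demands $(esete)^{\omega+1}=(esete)^{\omega}ete(esete)^{\omega}$ for an \emph{arbitrary} $t$, whereas the orbit hypothesis yields it only when $(e,t)$ is a \Cs-pair --- but you then merely posit that the auxiliary languages can be ``tuned'' so that every element in the $t$-role arises from a \Cs-pair, without giving any construction; that tuning \emph{is} the proof. In the paper it is achieved as follows: fix $\eta$ with \Cs-pairs equal to $\eta$-pairs (Lemma~\ref{lem:pairmor}), pass to an extended alphabet $B$ whose letters record pairs $(\alpha(w),\eta(w))$ (needed so that factors can later be abstracted into single letters having the same images under \emph{both} morphisms), and induct not on the \Jrel-order of $M$ but on a parameter $\ell\le|N|$ through the languages $K_\ell\subseteq B^*$ of words having no infix of length $k\le\ell$ whose $\delta$-image has \Jrel-depth less than $k$ in $N$. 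The auxiliary languages are $K_\ell\cap\beta\inv(s)\cap\delta\inv(r)$ with the depth of $r$ at most $\ell$ (Proposition~\ref{prop:bcase}); when verifying~\eqref{eq:ubp} for such a language via Theorem~\ref{thm:ubpcar}, it is precisely the conjunction of the $K_\ell$-constraint with the bound on $r$ that forces the elements occupying the $s$- and $t$-roles to share the idempotent $\delta$-image of the witness for $e$, hence to be $\delta$-pairs, i.e.\ \Cs-pairs, so that the orbit hypothesis (Lemma~\ref{lem:hypbeta}) can be invoked; the induction step then splits a word of $K_\ell$ at ``pointed'' positions, abstracts each factor into a $B$-letter, and lands in $K_{\ell+1}$. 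Your scheme --- induction on the \Jrel-order of $M$, splitting at positions whose $\alpha$-image is \Jrel-larger --- has no counterpart of this mechanism: such splitting points are not expressible by the modalities $\textup{F}_{K},\textup{P}_{K}$, which only see $\eta$-types of infixes; the recursion on ``prefix and suffix'' is not obviously well founded; and the claim that \Jrel-maximal $s$ make $\alpha\inv(s)$ satisfy~\eqref{eq:ubp} is unsubstantiated, since \eqref{eq:ubp} concerns the syntactic morphism of the auxiliary language itself, not $\alpha$. In short, the proposal frames the difficulty accurately but leaves its resolution --- the heart of the theorem --- unproved.
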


Given as input a regular language $L \subseteq A^*$, one can compute its syntactic morphism $\alpha: A^* \to M$. In view of Theorem~\ref{thm:utlmain}, $L \in \tlc{\Cs}$ if and only if for every $e \in E(M)$, the \Cs-orbit of $e$ for $\alpha$ belongs to \davar. The latter condition can be decided by checking all \Cs-orbits, provided that we are able to compute them. By Lemma~\ref{lem:orbitcomp}, this is possible when \Cs-separation is decidable. Altogether, we obtain the following corollary of Theorem~\ref{thm:utlmain}.

\begin{restatable}{corollary}{cutlmain} \label{cor:utlmain}
  If a \vari \Cs has decidable separation, \tlc{\Cs} has decidable~membership.
\end{restatable}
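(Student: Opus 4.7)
The plan is to deduce the corollary directly from Theorem~\ref{thm:utlmain} combined with Lemma~\ref{lem:orbitcomp}, which together carry the entire technical content; the argument is essentially a decision procedure assembled from tools already available in the paper.

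First, given a regular language $L \subseteq A^*$ as input (say, presented by an automaton or a recognising morphism), I would compute its syntactic morphism $\alpha\colon A^* \to M$ into the finite syntactic monoid $M$, as recalled after the Myhill--Nerode theorem in Section~\ref{sec:prelims}. This yields explicit access to $M$, its multiplication table, and in particular to the finite set $E(M)$ of idempotents, which can be enumerated in time polynomial in $|M|$.

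Next, for each $e \in E(M)$, I would invoke Lemma~\ref{lem:orbitcomp} on $\alpha$ and $e$ to compute the \Cs-orbit $M_e$. This is the unique step where the hypothesis that \Cs-separation is decidable actually enters the argument: computing $M_e$ amounts to identifying, among the pairs $(e,t) \in M^2$, those that are \Cs-pairs for $\alpha$, and the latter is a direct application of the \Cs-separation oracle to the pairs $(\alpha^{-1}(e),\alpha^{-1}(t))$. By Lemma~\ref{lem:orbitmono}, every such $M_e$ is a finite monoid (with identity $e$) inside $M$, so its operation can be read off directly from the multiplication table of $M$.

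Finally, for each $e \in E(M)$ I would test whether $M_e \in \davar$ by brute force: choose a common exponent $\omega$ for $M_e$ (for instance $\omega = |M_e|!$) and verify equation~\eqref{eq:da}, namely $(st)^\omega = (st)^\omega t (st)^\omega$ for all pairs $s,t \in M_e$. By Theorem~\ref{thm:utlmain}, $L$ belongs to $\tlc{\Cs}$ if and only if all these finitely many tests succeed. There is no substantive obstacle: the entire conceptual difficulty has been packed into Theorem~\ref{thm:utlmain} (for the equivalence) and Lemma~\ref{lem:orbitcomp} (for the effectiveness), so the proof is little more than stitching these two statements together.
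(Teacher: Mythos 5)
Your proposal is correct and follows essentially the same route as the paper: compute the syntactic morphism, use the separation oracle via Lemma~\ref{lem:orbitcomp} to compute every \Cs-orbit, test each finite orbit against equation~\eqref{eq:da}, and conclude by Theorem~\ref{thm:utlmain}. The only difference is that you spell out the brute-force \davar test (with a common exponent such as $|M_e|!$), which the paper leaves implicit.
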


\begin{remark}\label{rem:upolbpol-vs-tlc}
  Let $L\subseteq A^*$  be a regular language and $\alpha: A^* \to M$ be its syntactic morphism. The fact that the \Cs-orbit of $e\in E(M)$ for $\alpha$ belongs to \davar means that we have,
  \begin{equation}\label{eq:orbit-in-davar}
    (esete)^\omega = (esete)^\omega ete (esete)^\omega \quad \text{for all $s,t \in M$ such that $(e,s)$ and $(e,t)$ are \Cs-pairs.}
  \end{equation}
  One can check that~\eqref{eq:orbit-in-davar} follows from~\eqref{eq:ubp}, which characterizes~$\upol{\bpol{\Cs}}$ (this is consistent with Proposition~\ref{prop:ubptl}  asserting that $\upol{\bpol{\Cs}}\subseteq\tlc\Cs$). Indeed, choosing $t=s$ in~\eqref{eq:ubp} shows that the \Cs-orbit of $e$ is aperiodic, \emph{i.e.}, $(ese)^{{\omega+1}}=(ese)^\omega$ if $(e,s)$ is a \Cs-pair. However, note that the element $t$ is ``free'' in~\eqref{eq:ubp}, whereas it must be part of a \Cs-pair $(e,t)$ in~\eqref{eq:orbit-in-davar}.
\end{remark}

Before proving Theorem~\ref{thm:utlmain}, we first explain why it generalizes the original characterizations of the classes \tls, \tlxs and \tlc{\at}, as mentioned at the beginning of the section.

\subsection{Application to historical classes}

We first deduce the original characterizations of the classes $\tls = \tlc{\stzer}$ and $\tlxs = \tlc{\dotzer}$ by Thérien and Wilke~\cite{twfo2} as simple corollaries of Theorem~\ref{thm:utlmain}. We start with the former.

\begin{restatable}[Thérien and Wilke~\cite{twfo2}]{theorem}{thmtls} \label{thm:tls}
  Let $L \subseteq A^*$ be a regular language and let $M$ be its syntactic monoid. The two following properties are equivalent:
  \begin{enumerate}
    \item $L$ belongs to $\tls$.
    \item $M$ belongs to \davar.
  \end{enumerate}
\end{restatable}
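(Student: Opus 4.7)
The plan is to apply the generic Theorem~\ref{thm:utlmain} with $\Cs = \stzer = \{\emptyset,A^*\}$, using the fact, recalled just before the theorem, that $\tls = \tlc{\stzer}$. The task therefore reduces to showing that condition~2 of Theorem~\ref{thm:utlmain} specializes to ``\emph{$M$ belongs to \davar}'' in this case.

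The first step is to identify the \stzer-pairs for $\alpha$. Since \stzer contains only $\emptyset$ and $A^*$, a language $L_1$ is \stzer-separable from $L_2$ if and only if $L_1=\emptyset$ (witnessed by $\emptyset$) or $L_2=\emptyset$ (witnessed by $A^*$). The syntactic morphism $\alpha\colon A^*\to M$ is surjective, so $\alpha\inv(s)\neq\emptyset$ for every $s\in M$. Hence \emph{every} pair $(s,t)\in M^2$ is a \stzer-pair, and it follows that for each $e\in E(M)$, the \stzer-orbit of $e$ for $\alpha$ is exactly the local submonoid
\[
M_e \;=\; \{e t e \mid t\in M\} \;=\; e M e.
\]
Consequently, Theorem~\ref{thm:utlmain} gives: $L\in\tls$ if and only if $eMe\in\davar$ for every $e\in E(M)$.

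It remains to prove that this condition is equivalent to $M\in\davar$. For the forward direction I would take $e=1_M$, which is an idempotent with $1_M\cdot M\cdot 1_M = M$, so the hypothesis immediately yields $M\in\davar$. For the converse, I would fix $e\in E(M)$ and check that the defining equation $(xy)^\omega=(xy)^\omega y(xy)^\omega$ transfers from $M$ to the submonoid $eMe$: for $s,t\in eMe$ the product $st$ lies in $eMe$, and the idempotent power $(st)^\omega$ is the unique idempotent of the cyclic subsemigroup generated by $st$ (hence is the same whether computed in $M$ or in $eMe$), so the equation valid in $M$ is valid in $eMe$.

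No step presents a genuine obstacle: the identification of \stzer-pairs is forced by the tiny size of $\stzer$, the computation of the \stzer-orbits is then immediate, and the equivalence between the orbit condition and $M\in\davar$ follows by instantiating $e=1_M$ in one direction and by a straightforward inheritance of the \davar equation to submonoids in the other. The only subtlety worth spelling out in the write-up is that $\alpha$ is surjective (so that preimages of monoid elements are nonempty), which is what makes every pair a \stzer-pair.
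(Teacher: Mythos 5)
Your proposal is correct and follows essentially the same route as the paper: identify that every pair is a \stzer-pair (using surjectivity of the syntactic morphism), conclude that the \stzer-orbit of $e$ is $eMe$ with the orbit of $1_M$ being $M$ itself, and then apply Theorem~\ref{thm:utlmain}. The only difference is that you spell out the inheritance of equation~\eqref{eq:da} by the submonoids $eMe$, which the paper leaves implicit in its ``if and only if''.
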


\begin{proof}
  Let $\alpha: A^* \to M$ be the syntactic morphism of $L$. Since $\tls = \tlc{\stzer}$, Theorem~\ref{thm:utlmain} implies that $L \in \tls$ if and only if every \stzer-orbit for $\alpha$ belongs to \davar. Since $\stzer=\{\emptyset,A^*\}$, every pair $(e,s)\in E(M)\times M$ is a \Cs-pair, so that the \stzer-orbit of $e\in E(M)$ for $\alpha$ is $eMe$. In particular the \stzer-orbit of $1_M$ is the whole monoid $M$. Hence, every \stzer-orbit for $\alpha$ belongs to \davar if and only if $M$ belongs to \davar, which completes the proof.
\end{proof}

We turn to the characterization of $\tlxs = \tlc{\dotzer}$, also due to Thérien and Wilke~\cite{twfo2}. In order to state it, we need an additional definition. Consider a regular language~$L$ and let $\alpha: A^* \to M$ be its syntactic morphism. The \emph{syntactic semigroup of $L$} is the set $S = \alpha(A^+)$. Note that for every idempotent $e \in E(S)$, the set $eSe$ is a monoid whose neutral element is $e$.

\begin{restatable}[Thérien and Wilke~\cite{twfo2}]{theorem}{thmtlxs} \label{thm:tlxs}
  Let $L \subseteq A^*$ be a regular language and $S$ be its syntactic semigroup. The two following properties are equivalent:
  \begin{enumerate}
    \item $L$ belongs to $\tlxs$.
    \item For every $e \in E(S)$, the monoid $eSe$ belongs to \davar.
  \end{enumerate}

\end{restatable}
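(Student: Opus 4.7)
The plan is to derive Theorem~\ref{thm:tlxs} directly from the generic Theorem~\ref{thm:utlmain} applied to $\Cs = \dotzer$. Let $\alpha: A^* \to M$ be the syntactic morphism of $L$, with syntactic semigroup $S = \alpha(A^+) \subseteq M$, so that $M = \{1_M\} \cup S$. The task reduces to showing that the \dotzer-orbits for $\alpha$ coincide with the monoids $eSe$ appearing in the statement, up to the trivial extra case $e = 1_M \notin S$.

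The first step is to describe the \dotzer-pairs for $\alpha$ explicitly. I would introduce the two-element monoid $N = \{1_N, \beta\}$ with $\beta\beta = \beta$, together with the morphism $\eta: A^* \to N$ defined by $\eta(\veps) = 1_N$ and $\eta(w) = \beta$ for $w \in A^+$. The four languages recognized by $\eta$ are precisely those in \dotzer, so $\eta$ is a \dotzer-morphism. By Lemma~\ref{lem:pairmor}, every \dotzer-pair for $\alpha$ is an $\eta$-pair. For the converse, I would observe that the $\eta$-pairs are exactly $\{(1_M, 1_M)\} \cup (S \times S)$, and that each such pair is a \dotzer-pair: for $s, t \in S$, both $\alpha\inv(s)$ and $\alpha\inv(t)$ meet $A^+$, and none of $\emptyset$, $\{\veps\}$, $A^+$, $A^*$ can separate two languages both meeting $A^+$.

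From this description, the \dotzer-orbits are immediate. For every $e \in E(S)$, the \dotzer-orbit of $e$ for $\alpha$ is $\{ete : t \in S\} = eSe$. If $1_M \notin S$, then $1_M$ is an additional idempotent in $E(M) \setminus E(S)$, the only \dotzer-pair starting at $1_M$ is $(1_M, 1_M)$, and its orbit is the trivial monoid $\{1_M\}$, which always lies in \davar. Feeding this back into Theorem~\ref{thm:utlmain}, the condition \emph{every \dotzer-orbit of $\alpha$ belongs to \davar} collapses to \emph{$eSe \in \davar$ for every $e \in E(S)$}, which is exactly Theorem~\ref{thm:tlxs}.

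There is no real obstacle here beyond careful bookkeeping: the deep work is carried by Theorem~\ref{thm:utlmain}, and the only point requiring attention is the case split on whether $1_M$ belongs to $S$, which is harmless since the extra orbit, when it occurs, is trivially in \davar.
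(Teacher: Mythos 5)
Your proposal is correct and follows essentially the same route as the paper: apply Theorem~\ref{thm:utlmain} with $\Cs=\dotzer$, identify the \dotzer-orbit of each $e\in E(S)$ with $eSe$, and observe that the possible extra orbit of $1_M$ (when $\alpha\inv(1_M)=\{\veps\}$) is the trivial monoid $\{1_M\}\in\davar$. The only cosmetic difference is that you compute the \dotzer-pairs via the explicit two-element \dotzer-morphism $\eta$ and Lemma~\ref{lem:pairmor}, whereas the paper argues non-separability of $\alpha\inv(e)$ from $\alpha\inv(s)$ directly; both amount to the same bookkeeping.
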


\begin{proof}
  Let $\alpha: A^* \to M$ be the syntactic morphism of $L$. For $e \in E(M)$, let $M_e \subseteq M$ be the \dotzer-orbit of $e$ for $\alpha$. Since $\dotzer = \{\emptyset,\{\veps\},A^+,A^*\}$, for $(e,s)\in E(S)\times S$, the language $\alpha\inv(e)$ is not \dotzer-separable from $\alpha\inv(s)$. Hence, $(e,s)$ is a \Cs-pair, so that $M_e = eSe$ for all $e \in E(S)$. Moreover, if $1_M\not\in E(S)$ (which means that $\alpha\inv(1_M)=\{\veps\}$), then we have $M_{1_M} = \{1_M\}$ (which clearly belongs to \davar). Hence, every \dotzer-orbit for $\alpha$ belongs to \davar if and only if $eSe \in \davar$ for every $e \in E(S)$. In view of Theorem~\ref{thm:utlmain}, this implies Theorem~\ref{thm:tlxs}.
\end{proof}

Finally, we consider the class \tlc{\at}, defined and characterized by Krebs, Lodaya,~Pandya and Straubing~\cite{betweenlics,betweenconf,between}. Let us first present their characterization. It is based on a variety of finite monoids called \meda. Let $M$ be a finite monoid. For each $e \in E(M)$, let $N_e \subseteq M$ be the submonoid of $M$ generated by the set $\{s\in M\mid e\Jord s\}$. We say that $M$ belongs to \meda if and only if for every idempotent $e\in E(M)$, the monoid of $eN_ee$ belongs to \davar.

\begin{restatable}[Krebs, Lodaya, Pandya and Straubing~\cite{between}]{theorem}{thmtlat} \label{thm:tlat}
  Let $L \subseteq A^*$ be a regular language and $M$ be its syntactic monoid. The two following properties are equivalent:
  \begin{enumerate}
    \item $L \in \tlc{\at}$.
    \item $M$ belongs to \meda.
  \end{enumerate}

\end{restatable}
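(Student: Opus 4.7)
The plan is to reduce Theorem~\ref{thm:tlat} to the generic characterization of Theorem~\ref{thm:utlmain} applied to $\Cs = \at$. Let $\alpha: A^* \to M$ be the syntactic morphism of $L$. By Theorem~\ref{thm:utlmain}, $L \in \tlc{\at}$ holds iff, for every $e \in E(M)$, the \at-orbit $M_e$ of $e$ for $\alpha$ belongs to \davar. Thus what remains is to identify $M_e$ with the monoid $eN_ee$ appearing in the definition of \meda; once $M_e = eN_ee$ is proved for every $e \in E(M)$, the equivalence ``$M \in \meda$ iff every \at-orbit lies in \davar'' is immediate.

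First I would characterize \at-pairs concretely. Since membership in a language of \at depends only on the set of letters occurring in a word, two regular languages are \at-separable iff their collections of word-alphabets are disjoint. Hence $(s,t) \in M^2$ is an \at-pair for $\alpha$ iff there exist $u \in \alpha\inv(s)$ and $v \in \alpha\inv(t)$ with $\cont{u} = \cont{v}$.

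Armed with this criterion, I would prove the two inclusions $M_e = eN_ee$. For $M_e \subseteq eN_ee$: given an \at-pair $(e,t)$, pick $u,v$ with $\cont{u} = \cont{v}$ and decompose $v = b_1 \cdots b_m$ into letters. Each $b_j$ occurs in $u$, which gives a factorization $u = x \cdot b_j \cdot y$ and hence $e \Jord \alpha(b_j)$, so $\alpha(b_j) \in N_e$. Taking the product yields $t = \alpha(v) \in N_e$, whence $ete \in eN_ee$. For the reverse inclusion $eN_ee \subseteq M_e$: write $t = s_1 \cdots s_k \in N_e$ with $e \Jord s_i$ for each $i$, fix factorizations $e = x_i s_i y_i$ and lift them to words $u_i, v_i, w_i$ with $\alpha(u_iv_iw_i) = e$ and $\alpha(v_i) = s_i$. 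Setting $W = u_1v_1w_1 \cdots u_kv_kw_k$ and $V = v_1 \cdots v_k$, the words $WW$ and $WVW$ share the common alphabet $\cont{W}$ and map respectively to $e$ and to $ete$. This shows $(e, ete)$ is an \at-pair, hence $ete = e(ete)e \in M_e$.

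The main obstacle is this reverse inclusion $eN_ee \subseteq M_e$. The generators of $N_e$ individually satisfy $e \Jord s$, but this property is not preserved under multiplication, so one cannot simply invoke that $M_e$ is a monoid (Lemma~\ref{lem:orbitmono}) and build up the target element generator by generator. The key idea is the padding trick: sandwiching $V$ between two copies of $W$ inflates the alphabet of the representative of $t$ to match that of the representative of $e$, directly exhibiting $(e,ete)$ as an \at-pair. Once $M_e = eN_ee$ holds for every $e \in E(M)$, Theorem~\ref{thm:tlat} is an immediate instance of Theorem~\ref{thm:utlmain}.
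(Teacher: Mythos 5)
Your proposal is correct and follows essentially the same route as the paper: reduce to Theorem~\ref{thm:utlmain} for $\Cs=\at$, characterize \at-pairs via equality of word alphabets, and prove $M_e=eN_ee$ by the same two arguments, including the same padding trick (the paper uses $w=x_1u_1y_1\cdots x_nu_ny_n$ and $w'=wu_1\cdots u_nw$ where you use $WW$ and $WVW$). The only cosmetic omission is the trivial case of an empty product ($t=1_M$), where the sandwiching construction degenerates but $e\in M_e$ holds directly by Lemma~\ref{lem:orbitmono}.
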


\begin{proof}
  For $w \in A^*$, let $\cont{w} \subseteq A$ be the set of letters occurring in $w$ (\emph{i.e.}, the least set $B \subseteq A$ such that $w \in B^*$). For $e \in E(M)$, let $M_e$ be the \at-orbit of $e$ for $\alpha$. We prove that $M_e = e N_e e$ for every $e \in E(M)$. It will follows that $M$ belongs to \meda if and only if every \at-orbit for $\alpha$ belongs to~\davar. In view of Theorem~\ref{thm:utlmain} this implies Theorem~\ref{thm:tlat}.

  We first consider $s' \in eN_ee$ and prove that $s' \in M_e$. We have $s \in N_e$ such that $s' = ese$. By definition, $s = s_1 \cdots s_n$ where $e \Jord s_i$ for every $i \leq n$. If $n = 0$, then $s = 1_M$ and $ese = e \in M_e$. Assume now that we have \mbox{$n \geq 1$}. Since $e \Jord s_i$, we have $q_i,r_i \in M$ such that $e = q_is_ir_i$ for every $i \leq n$. Hence, since $e \in E(M)$, we have $e = q_1s_1r_1 \cdots q_n s_nr_n$. For every $i \leq n$, let $x_i \in \alpha\inv(q_i)$, $y_i \in \alpha\inv(r_i)$ and $u_i \in \alpha\inv(s_i)$. Finally, let $w = x_1u_1y_1 \cdots x_nu_ny_n$ and $w' = w u_1 \cdots u_n w$. By definition, we have $e = \alpha(w)$ and $ese = \alpha(w')$. Moreover, it is clear that $\cont{w} = \cont{w'}$. By definition of \at, it follows that $\alpha\inv(e)$ is not \at-separable from $\alpha\inv(ese)$. Thus, $(e,ese)$ is an \at-pair for $\alpha$, which yields $s ' = ese \in M_e$, as desired.

  Conversely, let $s' \in M_e$. By definition, there exists an \at-pair $(e,s) \in M^2$ with $e\in E(M)$ such that $s' = ese$.  Therefore, by definition of \at, there exist $u,v \in A^*$ such that $\cont{u} = \cont{v}$, $\alpha(u) = e$ and $\alpha(v) = s$. Let $a_1\dots,a_n \in A$ be the letters such that $v = a_1 \cdots a_n$. Since $\cont{u} = \cont{v}$, it is immediate that for each $i \leq n$, there are $x_i,y_i \in A^*$ such that $u = x_ia_iy_i$. Hence $e = \alpha(u) \Jord \alpha(a_i)$ and we conclude that $s = \alpha(a_1 \cdots a_n) \in N_e$. Consequently, $s' = ese \in eN_ee$, as desired.
\end{proof}

\subsection{Proof of Theorem~\ref{thm:utlmain}}

We fix a \vari~\Cs, a regular language $L \subseteq A^*$ and its syntactic morphism $\alpha: A^* \to M$ for the proof. We prove that $L \in \tlc{\Cs}$ if and only if all \Cs-orbits for $\alpha$ belong to \davar. We start with the left-to-right implication.

\medskip
\noindent
\textbf{\textsf{From \tlc{\Cs} to \davar.}} This direction follows from results of~\cite{pzupol2}. To use them, we need some preliminary terminology. We introduce equivalence relations connected to the class \tlc{\Cs} when \Cs is a \vari. Given a morphism $\eta: A^*\to N$ into a finite monoid~$N$, denote by $\Cs_\eta$ be the class of all languages recognized by $\eta$. The following fact is easy (see~\cite[Fact~9.3]{pzupol2}).

\begin{restatable}{fact}{esuit} \label{fct:esuit}
  Let \Cs be a \vari. For every \tla{\Cs} formula $\varphi$, there exists a \Cs-morphism $\eta: A^* \to N$ such that $\varphi$ is a \tla{\Cs_\eta} formula.
\end{restatable}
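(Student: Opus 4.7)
The plan is to build $\eta$ by aggregating the finitely many languages of \Cs that actually occur in $\varphi$. Let $L_1,\dots,L_k \in \Cs$ be the languages appearing as subscripts of modalities $\textup{F}_{L_i}$ or $\textup{P}_{L_i}$ in $\varphi$ (this list is finite since $\varphi$ is a finite formula). Since \Cs is a \vari, each $L_i$ is regular, so it has a finite syntactic monoid $M_i$ and syntactic morphism $\alpha_i : A^* \to M_i$. Form the product morphism $\beta : A^* \to M_1 \times \cdots \times M_k$ defined by $\beta(w) = (\alpha_1(w),\dots,\alpha_k(w))$. Set $N = \beta(A^*)$, equipped with the induced monoid structure, and let $\eta : A^* \to N$ be the surjective corestriction of $\beta$. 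The claim is that $\eta$ is the desired \Cs-morphism.

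\textbf{Each $L_i$ lies in $\Cs_\eta$.} For every $i \leq k$, the morphism $\alpha_i$ factors through $\eta$ via the $i$-th projection $\pi_i : N \to M_i$. Hence if $F_i \subseteq M_i$ is such that $L_i = \alpha_i^{-1}(F_i)$, then $L_i = \eta^{-1}(\pi_i^{-1}(F_i))$, which shows that $L_i$ is recognized by $\eta$. Therefore $L_1,\dots,L_k \in \Cs_\eta$, and $\varphi$ is a \tla{\Cs_\eta} formula.

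\textbf{$\eta$ is a \Cs-morphism.} I need to show that every language recognized by $\eta$ lies in \Cs. Such a language is a finite union of preimages $\eta^{-1}(n)$ for $n = (s_1,\dots,s_k) \in N$, and $\eta^{-1}(n) = \bigcap_{i \leq k} \alpha_i^{-1}(s_i)$. Since \Cs is closed under finite union and finite intersection, it suffices to prove that each $\alpha_i^{-1}(s_i)$ belongs to \Cs. This reduces to the following standard fact: for any regular language $L \in \Cs$ with syntactic morphism $\alpha_L : A^* \to M_L$ and any $s \in M_L$, the language $\alpha_L^{-1}(s)$ is a finite Boolean combination of two-sided quotients $x^{-1}Ly^{-1}$, and hence lies in \Cs because \Cs is a Boolean algebra closed under quotients and contains~$L$.

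\textbf{Expected main obstacle.} The only nontrivial point is justifying the last fact. Fixing a representative $u_s \in \alpha_L^{-1}(s)$ (or observing $\alpha_L^{-1}(s) = \emptyset$), we have $\alpha_L^{-1}(s) = \{w : w \equiv_L u_s\} = \bigcap_{x,y \in A^*} K_{x,y}$, where $K_{x,y}$ equals $x^{-1}Ly^{-1}$ if $xu_sy \in L$ and its complement otherwise. A priori this intersection is infinite, but because $M_L$ is finite, $x^{-1}Ly^{-1}$ depends only on $(\alpha_L(x), \alpha_L(y))$, so the intersection collapses to finitely many distinct terms. As \Cs is a \vari, this finite Boolean combination of quotients of $L$ belongs to \Cs, which completes the argument.
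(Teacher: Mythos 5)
Your proof is correct, and it is essentially the standard argument behind the cited fact (the paper itself delegates the proof to~\cite[Fact~9.3]{pzupol2}): take the product of the syntactic morphisms of the finitely many languages of \Cs occurring in $\varphi$, and observe that each $\alpha_{L}^{-1}(s)$ collapses to a \emph{finite} Boolean combination of two-sided quotients of $L$, hence lies in \Cs since \Cs is a \vari. The collapse of the a priori infinite intersection via the finiteness of the syntactic monoid, which you identify as the main obstacle, is handled correctly, so nothing is missing.
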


We use the standard notion of rank of a \tla{\Cs_\eta} formula: the \emph{rank} of $\varphi$ is defined as the length of the longest sequence of nested temporal operators within its parse tree. Formally:
\begin{itemize}
  \item Any atomic formula has rank $0$.
  \item The rank of $\neg \varphi$ is the same as the rank of $\varphi$.
  \item The rank of $\varphi \vee \psi$ and $\varphi \wedge \psi$ is the maximum between the ranks of $\varphi$ and $\psi$.
  \item For every language $L \subseteq A^*$, the rank of \finallyl{\varphi} and \finallyml{\varphi} is the rank of $\varphi$ plus $1$.
\end{itemize}

Two \tla{\Cs_\eta} formulas $\varphi$ and $\psi$ are \emph{equivalent} if they have the same semantics. That is, for every $w \in A^*$ and every position $i \in \pos{w}$, we have $w,i \models \varphi \Leftrightarrow w,i \models \psi$. The following key lemma is immediate from a simple induction on the rank of \tls formulas.

\begin{restatable}{lemma}{rank}\label{lem:rank}
  Let $\eta: A^* \to N$ be a morphism into a finite monoid and let $k \in \nat$. There are only finitely many non-equivalent \tla{\Cs_\eta} formulas of rank at most $k$.
\end{restatable}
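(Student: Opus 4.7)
The proof proceeds by induction on the rank $k$. The key ingredient is that $\Cs_\eta$ itself is finite: since $N$ is a finite monoid, there are at most $2^{|N|}$ languages of the form $\eta^{-1}(F)$ with $F \subseteq N$, so only finitely many possible labels $L \in \Cs_\eta$ can decorate the modalities $\textup{F}_L$ and $\textup{P}_L$.

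\textbf{Base case ($k=0$).} A rank-$0$ formula is, by definition, a Boolean combination of atomic formulas. The set of atoms is $\{\top,\bot,min,max\} \cup \{a \mid a \in A\}$, which is finite since $A$ is finite. Up to semantic equivalence, a Boolean combination of a finite set of $m$ atoms is determined by a function from $\{0,1\}^m$ to $\{0,1\}$ (its truth table restricted to the realizable atom-valuations on pairs $(w,i)$), giving a finite bound on the number of equivalence classes.

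\textbf{Inductive step.} Assume there are finitely many equivalence classes of \tla{\Cs_\eta} formulas of rank at most $k$; pick a finite set $\Phi_k$ of representatives. Any rank-$(k{+}1)$ formula is a Boolean combination of atoms and of formulas $\finallyp{L}{\varphi}$, $\finallymp{L}{\varphi}$ where $L \in \Cs_\eta$ and $\varphi$ has rank at most $k$. Because replacing a subformula by an equivalent one preserves the semantics, we may assume $\varphi \in \Phi_k$. Hence the possible ``modal subformulas'' range over the finite set $\{\finallyp{L}{\varphi},\finallymp{L}{\varphi} \mid L \in \Cs_\eta,\ \varphi \in \Phi_k\}$, whose finiteness uses both the induction hypothesis and the finiteness of $\Cs_\eta$. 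Combined with the finite set of atomic formulas, we obtain a finite set $\Psi$ of possible subformulas. Every rank-$(k{+}1)$ formula is then equivalent to a Boolean combination of elements of $\Psi$, and as in the base case the number of such Boolean combinations is finite up to equivalence.

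\textbf{Main obstacle.} There is no genuine difficulty: the statement is a routine compactness-style fact. The only point that deserves attention is justifying that $\Cs_\eta$ is finite, which is immediate since a language recognized by $\eta: A^* \to N$ is of the form $\eta^{-1}(F)$ for some $F \subseteq N$, and $N$ is finite. This finiteness is what prevents the set of ``outer'' modalities $\textup{F}_L, \textup{P}_L$ from being unboundedly large, and is therefore the one hypothesis of the lemma that is used in an essential way.
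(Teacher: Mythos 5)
Your proof is correct and follows exactly the route the paper has in mind: the paper does not spell out the argument but simply declares the lemma ``immediate from a simple induction on the rank,'' which is precisely your induction using the finiteness of $\Cs_\eta$ (at most $2^{|N|}$ languages recognized by $\eta$), the finiteness of the atomic formulas, and the fact that a Boolean combination of finitely many subformulas has only finitely many semantic equivalence classes. Nothing further is needed.
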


We now define equivalence relations. Let $\eta: A^* \to N$ be a morphism into a finite monoid and let $k \in \nat$. Given $w,w' \in A^*$, $i \in \pos{w}$ and $i'\in \pos{w'}$, we write, $w,i \tleqk w',i'$~when:
\[
  \text{For every \tla{\Cs_\eta} formula $\varphi$ of rank at most $k$,} \quad w,i \models \varphi \Longleftrightarrow w',i' \models \varphi.
\]
It is straightforward that \tleqk is an equivalence relation. Moreover, it is immediate from the definition and Lemma~\ref{lem:rank}, that \tleqk has finite index. We lift each relation \tleqk to~$A^*$ (abusing terminology, we also denote by \tleqk the new relation): given $w,w' \in A^*$, we write $w \tleqk w'$ when $w,0 \tleqk w',0$. Clearly, $\tleqk$ is an equivalence relation of finite index over~$A^*$. Moreover, we have the following connection between \tlc\Cs and the relations \tleqk.

\begin{restatable}{lemma}{operequ}\label{lem:operequ}
  Let \Cs be a \vari and $L \subseteq A^*$. If $L \in \tlc{\Cs}$, then there exists a \Cs-morphism $\eta: A^* \to N$ and $k \in \nat$ such that $L$ is a union of \tleqk-classes.
\end{restatable}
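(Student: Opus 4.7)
The plan is to unpack the hypothesis $L\in\tlc{\Cs}$ using Fact~\ref{fct:esuit} and then invoke the definition of $\tleqk$ directly. First, by definition of $\tlc{\Cs}$, there exists a formula $\varphi\in\tla{\Cs}$ such that $L=L_{min}(\varphi)$. Fact~\ref{fct:esuit} then produces a \Cs-morphism $\eta:A^*\to N$ such that $\varphi$ is in fact a \tla{\Cs_\eta} formula. Let $k\in\nat$ denote the rank of $\varphi$ (which is well defined since $\varphi$ is a finite syntactic tree).

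Next, I would show that $L$ is a union of $\tleqk$-classes for this choice of $\eta$ and $k$. Given $w,w'\in A^*$ with $w\tleqk w'$, the definition of $\tleqk$ on $A^*$ says that $w,0\tleqk w',0$, which in turn states that every \tla{\Cs_\eta} formula of rank at most $k$ has the same truth value at $(w,0)$ and at $(w',0)$. Since $\varphi$ is a \tla{\Cs_\eta} formula of rank at most $k$, this applies to $\varphi$ itself, giving $w,0\models\varphi\iff w',0\models\varphi$, i.e.\ $w\in L\iff w'\in L$. Hence the $\tleqk$-class of $w$ lies entirely inside $L$ or entirely outside $L$, which proves that $L$ is a union of $\tleqk$-classes.

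There is essentially no obstacle here: the lemma is a bookkeeping step bridging the syntactic definition of $\tlc{\Cs}$ (formulas over the language-valued modality set $\Cs$) with the combinatorial definition of $\tleqk$ (formulas over the finitely many languages recognized by a single \Cs-morphism $\eta$). The only non-trivial ingredient is Fact~\ref{fct:esuit}, which guarantees that a single \Cs-morphism $\eta$ simultaneously captures all the modalities $\textup{F}_K,\textup{P}_K$ appearing in $\varphi$; this step is what lets us pass from an arbitrary formula of $\tla{\Cs}$ to a formula in the countable family $\tla{\Cs_\eta}$ whose rank is a meaningful finite parameter. Everything else is purely formal from the definitions.
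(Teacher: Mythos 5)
Your proof is correct and follows exactly the paper's own argument: invoke Fact~\ref{fct:esuit} to obtain a \Cs-morphism $\eta$ making $\varphi$ a \tla{\Cs_\eta} formula, take $k$ to be its rank, and conclude directly from the definition of \tleqk that $L$ is a union of \tleqk-classes. Nothing is missing.
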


\begin{proof}
  Let $L \in \tlc{\Cs}$. There exists a \tla{\Cs} formula $\varphi$ such that $w \in L \Leftrightarrow w,0 \models \varphi$ for all $w\in A^*$. By Fact~\ref{fct:esuit}, there exists a \Cs-morphism $\eta: A^* \to N$ such that $\varphi$ is a \tla{\Cs_\eta} formula. Let $k\in\nat$ be the rank of $\varphi$. We prove that $L$ is a union of \tleqk-classes. Given $w,w' \in A^*$ such that $w \tleqk w'$, we have to prove that $w \in L \Leftrightarrow w' \in L$. By symmetry, we only prove the left to right implication. Thus, we assume that $w \in L$. By definition of $\varphi$, it follows that  $w,0\models\varphi$. Moreover, since $w \tleqk w'$ (\emph{i.e.}, $w,0 \tleqk w',0$) and $\varphi$ is a \tla{\Cs_\eta} formula of rank $k$, we have $w',0 \models\varphi$ by definition of \tleqk. Hence, $w' \in L$ by definition of $\varphi$, as~desired.
\end{proof}

In addition to the link stated in Lemma~\ref{lem:operequ} between \tlc\Cs and the equivalence relations \tleqk, we use a property of \tleqk that follows from
\cite[Lemma~9.6 and Proposition~9.7]{pzupol2}.

\begin{restatable}{proposition}{efg} \label{prop:efg}
  Consider a morphism $\eta: A^* \to N$ into a finite monoid, let $f \in E(N)$ be an idempotent, let $u,v,z \in \eta\inv(f)$ and let $x,y\in A^*$. For every $k \in \nat$, we have:
  \[
    x(z^{k}uz^{2k}vz^{k})^{k}(z^{k}uz^{2k}vz^{k})^{k}y \tleqk x(z^{k}uz^{2k}vz^{k})^{k} z^kvz^k(z^{k}uz^{2k}vz^{k})^{k}y.
  \]
\end{restatable}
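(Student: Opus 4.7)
Write $E = z^k u z^{2k} v z^k$, so the two sides of the equivalence are $w_1 = xE^{2k}y$ and $w_2 = xE^k \cdot z^kvz^k \cdot E^ky$. Since $\eta(u)=\eta(v)=\eta(z)=f$ and $f$ is idempotent, every factor of the form $E^j$ or $z^jvz^j$ with $j\ge 1$ has $\eta$-image $f$. My plan is to proceed by induction on $k$; the base case $k=0$ is trivial because both words reduce to $xy$. For the inductive step, I would prove by structural induction on a formula $\varphi$ of rank $r\le k$ that $w_1,0\models\varphi \iff w_2,0\models\varphi$. Atomic formulas and Boolean connectives are routine, since position $0$ carries label $min$ in both words.

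The nontrivial case is $\varphi=\finallyl{\psi}$ with $L\in\Cs_\eta$ and $\mathrm{rank}(\psi)\le k-1$; the mirror case $\finallyml{\psi}$ is handled symmetrically. I would argue with a back-and-forth correspondence between the positions of $w_1$ and those of $w_2$. Given a witnessing position $i_1$ in $w_1$, set $i_2=i_1$ if $i_1$ lies in $x$ or in the first $E^k$ block, and $i_2=i_1+|z^kvz^k|$ if $i_1$ lies in the second $E^k$ block or in $y$. In either sub-case, $\infix{w_2}{0}{i_2}$ differs from $\infix{w_1}{0}{i_1}$ by at most an insertion of $z^kvz^k$ \emph{immediately adjacent to a $z^k$-block}, so by idempotence of $f$ the two infixes have the same $\eta$-image; hence either both or neither belongs to $L$. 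Conversely, a witnessing position $i_2$ lying inside the inserted $z^kvz^k$ can be matched to a position with the same label inside the $z^{2k}$-buffer in the middle of $w_1$ (when $i_2$ is labeled by a letter of $z$) or inside any $v$-occurrence of the surrounding $E^k$ blocks (when $i_2$ sits inside the $v$ factor); once more the $\eta$-images of the two prefixes coincide.

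It remains to check $w_2,i_2\models\psi$ given $w_1,i_1\models\psi$ (and conversely). For this one observes that the pointed pair $(w_1,i_1),(w_2,i_2)$ is itself an instance of the proposition: after ``cutting'' at $i_1$ and $i_2$, one still has at least $k-1$ entire copies of the $E$-pattern on each side of the cut, so the inductive hypothesis at parameter $k-1$ yields $(w_1,i_1)\tleqp{k-1}(w_2,i_2)$, and hence $w_2,i_2\models\psi$ since $\mathrm{rank}(\psi)\le k-1$.

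The main obstacle is precisely this last step: to feed the pointed pair back into the induction, one must rephrase it as a bona fide instance of the proposition with a smaller parameter. The cleanest approach is to replace the statement by a more general one about pointed words, asserting $\tleqk$-equivalence for arbitrary matching positions $j,j'$ in $xE^{2k}y$ and $xE^kz^kvz^kE^ky$, with the $k$ copies of $E$ providing the slack needed for the induction---essentially the strengthening used in \cite[Lemma~9.6 and Proposition~9.7]{pzupol2} that the authors quote.
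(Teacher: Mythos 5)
Your high-level plan---a back-and-forth between positions of the two words combined with a structural induction on formulas---is the standard route, and it is essentially how the needed statement is established in \cite[Lemma~9.6 and Proposition~9.7]{pzupol2}; note that the paper itself does not reprove Proposition~\ref{prop:efg} but obtains it directly from that reference, its only internal model being the pure-future analogue, Proposition~\ref{prop:fefg}, proved in the appendix via the notion of $\ell$-candidates. However, your proposal has a genuine gap, which you name yourself but do not close. The induction ``on $k$'' cannot be run on the statement as given: $k$ occurs both as the rank bound and in every exponent, so the pointed pair $(w_1,i_1),(w_2,i_2)$ obtained after the cut is \emph{not} an instance of the proposition at parameter $k-1$ (that statement concerns words of the shape $x(z^{k-1}uz^{2(k-1)}vz^{k-1})^{k-1}\cdots$). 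What is needed is a strengthened statement about \emph{pointed} words, with an explicit invariant on the pair of distinguished positions that is preserved when a modality is unfolded, proved by induction on the rank alone; formulating and proving that invariant is the entire content of the proposition, and you delegate precisely this to the cited literature.

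Moreover, the invariant you do state is false. Writing $E=z^{k}uz^{2k}vz^{k}$ as you do, if $i_1$ lies in $y$ or near the right end of the second $E^{k}$ block (symmetrically, in $x$ or near the left end of the first block), there are no complete copies of $E$ on one side of the cut, let alone $k-1$ of them. In those boundary cases the correct argument is that one side of the cut is literally identical in $w_1$ and $w_2$ while all the slack lies on the other side; and since \tleqk admits both $\textup{F}_L$ and $\textup{P}_L$ modalities, ``identical on one side'' does not dispose of the formulas of the other polarity, so the invariant must be genuinely two-sided---exactly where the full statement is more delicate than the pure-future Proposition~\ref{prop:fefg}, whose boundary case can simply invoke equality of suffixes because only future modalities occur. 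A minor additional slip: for $k=0$ the right-hand word is $xvy$, not $xy$; the base case survives only because rank-$0$ equivalence at position $0$ holds trivially for any two words.
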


We are ready to conclude this direction of the proof: assuming that $L \in \tlc{\Cs}$, we show that all \Cs-orbits for its syntactic monoid belong to \davar. Let $e\in E(M)$ and $M_e$ be its \Cs-orbit. Proving that $M_e\in\davar$ amounts to proving that any elements $s,t\in M_e$ satisfy~\eqref{eq:da}. Fix $e,s,t\in E(M)\times M_e\times M_e$. Lemma~\ref{lem:operequ} yields a \Cs-morphism $\eta: A^* \to N$ and $k \in \nat$ such that $L$ is a union of \tleqk-classes. Since $\eta$ is a \Cs-morphism, Lemma~\ref{lem:orbitnec} yields $f \in E(N)$ such that $M_e \subseteq \alpha(\eta\inv(f))$. Since $e,s,t \in M_e$, we get $z,u,v \in A^*$ such that $z,u,v \in \eta\inv(f)$, $\alpha(z) = e$, $\alpha(u) = s$ and $\alpha(v) = t$. Let $x,y\in A^*$ be two arbitrary words. By Proposition~\ref{prop:efg}, we obtain,
\[
  x(z^{k}uz^{2k}vz^{k})^{k}(z^{k}uz^{2k}vz^{k})^{k}y \tleqk x(z^{k}uz^{2k}vz^{k})^{k} z^kvz^k(z^{k}uz^{2k}vz^{k})^{k}y.
\]
Since $L$ is a union of \tleqk-classes, the words $(z^kuz^kvz^k)^{2k}$ and $(z^kuz^kvz^k)^{k} z^kvz^k(z^kuz^kvz^k)^{k}$ are equivalent for the syntactic congruence of $L$, so they have the same image under its syntactic morphism~$\alpha$. Since $e \in E(M)$, this yields $(esete)^{2k} =(esete)^{k} ete(esete)^{k}$. Hence, $(st)^{2k} =(st)^{k} t(st)^{k}$ since $e,s,t \in M_e$ and $e$ is neutral in $M_e$ by Lemma~\ref{lem:orbitmono}. It now suffices to multiply by enough copies of $st$ on both sides to get $(st)^\omega = (st)^\omega t (st)^\omega$. Therefore, \eqref{eq:da}~holds.

\newcommand{\cdelta}{\ensuremath{\Cs_\delta}\xspace}
\medskip
\noindent
\textbf{\textsf{From \davar to \tlc{\Cs}.}} Assuming that every \Cs-orbit for the syntactic morphism $\alpha:A^*\to M$ of~$L$ belongs to \davar, we have to show that $L \in \tlc{\Cs}$, \emph{i.e.}, to build a \tla\Cs formula defining~$L$. Let us start by giving a high-level overview of the proof for this direction.

Since \tlc{\Cs} is closed under union, it suffices to prove that for all $s\in M$, the language $\alpha\inv(s)$ is in $\tlc{\Cs}$. We achieve this by inductively constructing a \tla{\Cs} formula defining $\alpha\inv(s)$. According to Lemma~\ref{lem:pairmor}, there exists a \Cs-morphism $\eta: A^* \to N$ such that the \Cs-pairs for $\alpha$ are exactly the $\eta$-pairs for $\alpha$. We use $\eta$ to leverage the assumption that all \Cs-orbits for $\alpha$ belong to \davar. More precisely, $\eta$ recognizes all the basic languages in \Cs that we shall use in our \tla{\Cs} formulas. The induction proceeds as follows: using~$\eta$, we define a sequence of languages $K_0\supseteq K_1 \supseteq \cdots \supseteq K_{|N|}$ and show by induction on $|N| - \ell$ that $K_\ell \cap \alpha\inv(s)$ can be defined by a \tla{\Cs} formula for each $\ell\leq |N|$. The induction basis is the case $\ell=|N|$, which is simple because $K_{|N|}$ is a finite language. Furthermore, the case $\ell=0$ gives the desired result since $K_0$ contains all words. The induction step consists in building a \tla{\Cs} formula describing $K_\ell\cap\alpha\inv(s)$ from several \tla{\Cs} formulas that describe the languages $K_{\ell+1} \cap \alpha\inv(t)$ for all $t \in M$. However, the actual argument is slightly more involved. Indeed, in order to perform the induction step, we must abstract each word in $K_\ell\cap\alpha\inv(s)$ by considering a specific decomposition of this word and viewing each infix as a new letter. We then argue that the resulting word belongs to $K_{\ell+1}\cap\alpha\inv(s)$, which allows us to apply induction. Yet, for this process to work, the letter that we use to abstract an infix must have the same images as this original infix under both $\alpha$ and $\eta$. This is problematic, because such a letter does not necessarily exist. We solve this issue by considering an extended alphabet~$B$, replacing $\alpha:A^*\to M$ and $\eta:A^*\to N$ with two new morphisms $\beta:B^*\to M$ and $\delta:B^*\to N$ that have the required property. Of course, this involves some preliminary work: we must reformulate both our objective (proving that all languages $\alpha\inv(s)$ can be defined in \tlc{\Cs}) and our hypothesis (that every \Cs-orbit for $\alpha$ belongs to \davar) on the new morphisms $\beta$ and $\eta$.

\medskip
We now start the proof by first defining $\beta$ and $\delta$. Recall that $\eta: A^* \to N$ is the \Cs-morphism provided by Lemma~\ref{lem:pairmor}: it is such that the \Cs-pairs for $\alpha$ are exactly the $\eta$-pairs for~$\alpha$. We fix $\eta$ for the entire proof. We define an auxiliary alphabet $B$. Let $P \subseteq M \times N$ be the set of all pairs $(\alpha(w),\eta(w)) \in M\times N$ where $w \in A^+$ is a nonempty word. For each pair $(s,r) \in P$, we create a fresh letter $b_{s,r} \not\in A$ and we define $B = \{b_{s,r} \mid (s,r) \in P\}$.

Let $\beta: B^* \to M$ and $\delta: B^* \to N$ be the morphisms defined by $\beta(b_{s,r}) = s$ and $\delta(b_{s,r}) = r$ for $(s,r) \in P$. By definition, we have $(\beta(w),\delta(w)) \in P$ for all $w \in B^+$. Let \cdelta be the class of all languages over $B$ recognized by $\delta$. One can check that \cdelta is a \vari. We now reduce membership of inverse images under $\alpha$ to $\tlc\Cs$ to that of inverse images under $\beta$ to~\tlc\cdelta.

\begin{restatable}{lemma}{thebeta} \label{lem:beta}
  For every $F \subseteq M$, if $\beta\inv(F) \in \tlc{\cdelta}$, then $\alpha\inv(F) \in \tlc{\Cs}$.
\end{restatable}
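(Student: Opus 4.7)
The plan is to translate each \tla{\cdelta} formula defining a subset of $B^*$ into a \tla{\Cs} formula defining the corresponding subset of $A^*$, through a natural letter-to-letter morphism. First, I would define $\pi\colon A^* \to B^*$ as the extension of the map sending each $a \in A$ to $b_{\alpha(a),\eta(a)}$, which is well-defined since $(\alpha(a),\eta(a)) \in P$ for every $a \in A$. By construction $\beta \circ \pi = \alpha$ and $\delta \circ \pi = \eta$, and $\pi$ preserves positions: $\pos{w} = \pos{\pi(w)}$. The crucial \emph{pullback property} I need is that for every $K \in \cdelta$, $\pi\inv(K) \in \Cs$. Indeed, by definition of \cdelta we have $K = \delta\inv(H)$ for some $H \subseteq N$, so $\pi\inv(K) = \eta\inv(H)$, which belongs to \Cs because $\eta$ is a \Cs-morphism.

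Next, assuming $\beta\inv(F) = L_{min}(\varphi)$ for some $\varphi \in \tla{\cdelta}$, I would build a formula $\varphi' \in \tla{\Cs}$ by structural induction on $\varphi$: the atoms $\top, \bot, min, max$ are kept unchanged; each atom $b_{s,r}$ (with $(s,r) \in P$) is replaced by the finite disjunction $\bigvee\{a \in A \mid \alpha(a) = s \text{ and } \eta(a) = r\}$ (taken to be $\bot$ if this set is empty); Boolean connectives commute with the translation; and each modality $\textup{F}_K\,\psi$ (resp.\ $\textup{P}_K\,\psi$) becomes $\textup{F}_{\pi\inv(K)}\,\psi'$ (resp.\ $\textup{P}_{\pi\inv(K)}\,\psi'$). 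This is a legitimate \tla{\Cs} formula thanks to the pullback property.

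Finally, I would prove by induction on $\varphi$ that for every $w \in A^*$ and every position $i \in \pos{w} = \pos{\pi(w)}$, we have $\pi(w), i \models \varphi$ if and only if $w, i \models \varphi'$. The atom cases rely on the fact that position $i$ of $\pi(w)$ carries label $b_{\alpha(\wpos{w}{i}),\eta(\wpos{w}{i})}$. The temporal cases use the identity $\infix{\pi(w)}{i}{j} = \pi(\infix{w}{i}{j})$, which gives $\infix{\pi(w)}{i}{j} \in K$ iff $\infix{w}{i}{j} \in \pi\inv(K)$; the inductive hypothesis then handles the inner subformula. Evaluating at $i = 0$ yields the chain $w \in \alpha\inv(F) \iff \alpha(w) \in F \iff \beta(\pi(w)) \in F \iff \pi(w), 0 \models \varphi \iff w, 0 \models \varphi'$, whence $\alpha\inv(F) = L_{min}(\varphi') \in \tlc{\Cs}$. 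No genuine obstacle arises here: once $\pi$ is defined and the pullback property is established, the rest is a routine syntactic translation combined with a standard structural induction.
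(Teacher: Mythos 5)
Your proposal is correct and follows essentially the same route as the paper's proof: the same letter-to-letter morphism $a\mapsto b_{\alpha(a),\eta(a)}$, the same replacement of atoms $b_{s,r}$ by disjunctions of letters, and the same substitution of each modality $\textup{F}_K$/$\textup{P}_K$ by $\textup{F}_{\eta\inv(H)}$/$\textup{P}_{\eta\inv(H)}$ (your $\pi\inv(K)$ is exactly this language, and it lies in \Cs because $\eta$ is a \Cs-morphism). The only difference is that you spell out the structural induction establishing the semantic correspondence, which the paper leaves to the reader.
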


\begin{proof}
  We first define a morphism $\gamma: A^* \to B^*$. Consider a letter $a \in A$. By definition, $(\alpha(a),\eta(a)) \in P$. Hence, we may define $\gamma(a) = b_{\alpha(a),\eta(a)} \in B$. By definition, we have $\alpha(w) = \beta(\gamma(w)) \in M$ and $\eta(w) = \delta(\gamma(w))$ for every $w \in A^*$. It follows that for every $F \subseteq M$, we have $\alpha\inv(F) = \gamma\inv(\beta\inv(F)) \subseteq A^*$. Consequently, it now suffices to prove that for every $K \subseteq B^*$ such that $K \in \tlc{\cdelta}$, we have $\gamma\inv(K) \in \tlc{\Cs}$. We fix $K$ for the proof. Since $K \in \tlc{\cdelta}$, it is defined by a formula $\psi \in \tla{\cdelta}$. We apply two kinds of modifications to $\psi$ in order to build a new formula $\psi' \in \tla{\Cs}$ defining $\gamma\inv(K)$:
  \begin{enumerate}
    \item We replace every atomic subformula ``$b$'' for $b \in B$ by the \tla{\Cs}-formula $\bigvee_{\{a \in A \mid \gamma(a) = b\}} a$.
    \item For every temporal modality $\textup{F}_H$ (resp. $\textup{P}_H$) occurring in $\psi$, we have $H \in \cdelta$ by hypothesis. Hence, $H$ is recognized by $\delta$ and there exists $G \subseteq N$ such that $H = \delta\inv(G)$. Note that $\eta\inv(G) \in \Cs$ since $\eta$ is a \Cs-morphism. We replace the temporal modality $\textup{F}_H$ (resp. $\textup{P}_H$) by  $\textup{F}_{\eta\inv(G)}$ (resp. $\textup{P}_{\eta\inv(G)}$).
  \end{enumerate}
  By definition the resulting formula $\psi'$ belongs to \tla{\Cs} and one can verify that for every $w \in A^*$, we have $w,0 \models \psi' \Leftrightarrow \gamma(w),0 \models \psi$. Since $L_{min}(\psi) = K$, we get $L_{min}(\psi') = \gamma\inv(K)$, which implies that $K \in \tlc{\Cs}$. This completes the proof.
\end{proof}

In view of Lemma~\ref{lem:beta}, it suffices to prove that any language recognized by $\beta$ belongs to \tlc{\cdelta}. Since $L$ is recognized by $\alpha$, this will imply $L \in \tlc{\Cs}$, which is our goal. In the next lemma, we reformulate on $\beta$ and $\delta$ the assumption that every \Cs-orbit for $\alpha$ belongs to~\davar.

\begin{restatable}{lemma}{hypbeta} \label{lem:hypbeta}
  For every $e \in E(M)$ and every $s,t \in M$, if $(e,s)$ and $(e,t)$ are $\delta$-pairs for $\beta$, then $(esete)^{\omega} = (esete)^{\omega}ete(esete)^{\omega}$.
\end{restatable}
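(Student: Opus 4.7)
The plan is to reduce the statement to the hypothesis already available on $\alpha$: that every \Cs-orbit for $\alpha$ belongs to \davar. The bridge between $(\beta,\delta)$ and $(\alpha,\eta)$ is a ``realization'' map from $B^*$ to $A^*$, built straight from the definition of the alphabet~$B$.

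First, I would construct a morphism $\widehat{\cdot}: B^* \to A^*$ satisfying $\alpha(\widehat{u}) = \beta(u)$ and $\eta(\widehat{u}) = \delta(u)$ for every $u \in B^*$. This is possible because, by the definition of the set~$P$, every letter $b_{s',r'} \in B$ is indexed by a pair $(s',r') \in P$ realized by some nonempty word $w_{s',r'} \in A^+$ with $\alpha(w_{s',r'}) = s'$ and $\eta(w_{s',r'}) = r'$; setting $\widehat{b_{s',r'}} = w_{s',r'}$ and extending by concatenation (with $\widehat{\varepsilon} = \varepsilon$) gives the required map.

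Next, suppose that $(e,s)$ and $(e,t)$ are $\delta$-pairs for $\beta$. Unfolding the definition yields words $u_1, v_1, u_2, v_2 \in B^*$ with $\beta(u_1) = \beta(u_2) = e$, $\beta(v_1) = s$, $\beta(v_2) = t$, $\delta(u_1) = \delta(v_1)$ and $\delta(u_2) = \delta(v_2)$. Applying $\widehat{\cdot}$ produces four words in $A^*$ witnessing that $(e,s)$ and $(e,t)$ are both $\eta$-pairs for $\alpha$. By our choice of $\eta$ via Lemma~\ref{lem:pairmor}, they are then \Cs-pairs for $\alpha$.

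Finally, I would invoke the global hypothesis: the \Cs-orbit $M_e$ of $e$ for $\alpha$ belongs to \davar. By Lemma~\ref{lem:orbitmono}, $e$ is the identity of $M_e$, and the elements $ese$ and $ete$ both lie in $M_e$ since $(e,s)$ and $(e,t)$ are \Cs-pairs; their product in $M_e$ (which coincides with their product in $M$) equals $esete$. Applying the defining identity~\eqref{eq:da} of \davar inside $M_e$ to $x = ese$ and $y = ete$ yields $(esete)^{\omega} = (esete)^{\omega}\, ete\,(esete)^{\omega}$, which is the claimed equality. I expect no substantive obstacle: the lemma is essentially a change-of-alphabet transfer of the hypothesis on $\alpha$. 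The only delicate point is the existence of $\widehat{\cdot}$, which rests on each letter of $B$ coming from a nonempty word of $A^+$, together with the convention $\widehat{\varepsilon} = \varepsilon$ that handles the empty case.
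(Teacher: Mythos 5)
Your proposal is correct and follows essentially the same route as the paper: lift the $\delta$-pair witnesses from $B^*$ to $A^*$ (the paper simply notes that every $w\in B^*$ admits some $w'\in A^*$ with $\beta(w)=\alpha(w')$ and $\delta(w)=\eta(w')$, which is exactly what your explicit morphism $\widehat{\cdot}$ provides), conclude via Lemma~\ref{lem:pairmor} that $(e,s)$ and $(e,t)$ are \Cs-pairs for $\alpha$, and then apply the \davar equation inside the \Cs-orbit of $e$. Your extra care about the empty word and the morphism construction is a harmless elaboration of the same argument.
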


\begin{proof}
  By hypothesis, there exist $u,v,x,y \in B^*$ such that $\delta(u) = \delta(v)$, $\delta(x) = \delta(y)$, $\beta(u) = \beta(x) = e$, $\beta(v) = s$ and $\beta(y) = t$. The definitions of $\beta$ and $\delta$ imply that for any $w \in B^*$, there exists $w' \in A^*$ such that $\delta(w) = \eta(w')$ and $\beta(w) = \alpha(w')$. Therefore, we obtain $u',v',x',y' \in A^*$ such that $\eta(u') = \eta(v')$, $\eta(x') = \eta(y')$, $\alpha(u') = \alpha(x') = e$, $\alpha(v') = s$ and $\alpha(y') = t$. Thus, $(e,s) \in M^2$ and $(e,t) \in M^2$ are $\eta$-pairs for $\alpha$. By definition of $\eta$, it follows that they are \Cs-pairs for $\alpha$. Hence, $ese$ and $ete$ both belong to the \Cs-orbit of $e$ for $\alpha$. Since all \Cs-orbits for $\alpha$ belong to \davar by hypothesis, this gives $(esete)^{\omega} = (esete)^{\omega}ete(esete)^{\omega}$.
\end{proof}

\newcommand{\dpj}[1]{\ensuremath{d_{\Jrel}(#1)}\xspace}

We now use the Green relation \Jrel over $N$ to associate a number $\dpj{r} \in \nat$ with every element $r \in N$. We let $\dpj{r}$ be the maximal number $n\in\nat$ such that there exist $n$ elements $r_1,\dots,r_n \in N$ satisfying $r \Jords r_1 \Jords \cdots \Jords r_n$. By definition, $0 \leq \dpj{r} \leq |N| - 1$. In particular, we have $\dpj{r} = 0$ if and only if $r$ is maximal for \Jord (\emph{i.e.}, if and only if $r \Jrel 1_N$). Finally, given a word $w \in B^*$, we write $\dpj{w} \in \nat$ for $\dpj{\delta(w)}$. Observe that for all $x,y,z \in B^*$, we have $\dpj{y} \leq \dpj{xyz}$ (as $xyz\Jord y$), a fact that we shall use frequently.

\medskip\noindent
In order to argue inductively, we define a family of languages $K_\ell \subseteq B^*$ for $\ell \in \nat$ as follows:
\[
  K_\ell = \bigl\{w \in B^* \mid \text{for all $k \leq \ell$ and $x,y,z \in B^*$, if $w = xyz$ and $|y| = k$, then $\dpj{y} \geq k$}\bigr\}.
\]
Note that $K_0 = B^*$ as $\dpj{y} \geq 0$ for all $y \in B^*$. Also, if $\ell \geq |N|$, then $K_{\ell}$ is \emph{finite} (it contains words of length at most $|N|-1$ as $\dpj{y} < |N|$ for all $y \in B^*$). We now have the next~lemma.

\begin{restatable}{lemma}{llpo} \label{lem:llpo}
  Let $\ell \in \nat$ and $w \in K_\ell$. Then $\dpj{w} \leq \ell$ if and only if for all $x,y,z \in B^*$ such that $w = xyz$ and $|y| \leq \ell+1$, we have $\dpj{y} \leq \ell$.
\end{restatable}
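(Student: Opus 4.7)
The forward implication will be routine and will not use the hypothesis $w\in K_\ell$: whenever $w=xyz$, we have $\delta(w)=\delta(x)\delta(y)\delta(z)$, so $\delta(w)\Jord\delta(y)$, and any strict $\Jrel$-chain above $\delta(y)$ either sits also strictly above $\delta(w)$ (when $\delta(w)\Jords\delta(y)$) or has the same length (when $\delta(w)\Jrel\delta(y)$); in both cases $\dpj{y}\leq\dpj{w}\leq\ell$.

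For the backward implication, I will write $w=b_1\cdots b_n$ and set $p_i=\delta(b_1\cdots b_i)$, so that $p_0=1_N$ and $p_n=\delta(w)$. If $n\leq\ell$, then $w$ is itself an infix of length at most $\ell+1$ and the hypothesis directly gives $\dpj{w}\leq\ell$. Otherwise I will assume $n\geq\ell+1$ and prove by induction on $i$ that $\dpj{p_i}=\ell$ for every $\ell\leq i\leq n$; the case $i=n$ will close the proof. The base case $i=\ell$ is where both hypotheses combine: the prefix $b_1\cdots b_\ell$ has length exactly $\ell$, so $w\in K_\ell$ gives $\dpj{p_\ell}\geq\ell$ while the second hypothesis gives $\dpj{p_\ell}\leq\ell$.

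The inductive step will be the heart of the argument and will rely crucially on Lemma~\ref{lem:jlr}. Setting $u=b_{i-\ell}\cdots b_{i-1}$ and $v=ub_i$, of lengths $\ell$ and $\ell+1$, the same double bound as in the base case forces $\dpj{\delta(u)}=\dpj{\delta(v)}=\ell$, hence $\delta(v)\Jrel\delta(u)$. Moreover $\delta(v)=\delta(u)\delta(b_i)$ gives $\delta(v)\Rord\delta(u)$, so Lemma~\ref{lem:jlr} upgrades this to $\delta(v)\Rrel\delta(u)$, producing some $x\in N$ with $\delta(u)=\delta(u)\delta(b_i)x$. Multiplying on the left by $p_{i-\ell-1}$ and using $p_{i-1}=p_{i-\ell-1}\delta(u)$ together with $p_i=p_{i-1}\delta(b_i)$, this will rearrange to $p_{i-1}=p_ix$, that is, $p_{i-1}\Lord p_i$. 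Combined with the automatic $p_i\Rord p_{i-1}$, this yields $p_i\Jrel p_{i-1}$, and hence $\dpj{p_i}=\dpj{p_{i-1}}=\ell$ by the inductive hypothesis.

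The main subtlety I expect is the lower bound $\dpj{\delta(u)}\geq\ell$ on the length-$\ell$ infix $u$ in the inductive step: this is the precise place where $w\in K_\ell$ is indispensable, for without it one could only assert $\delta(v)\Jord\delta(u)$, which would be too weak to apply Lemma~\ref{lem:jlr}.
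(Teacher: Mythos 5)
Your proof is correct and takes essentially the same route as the paper's: the paper runs the same one-letter-at-a-time induction, comparing the length-$\ell$ window $y$ with $yb_i$, combining the $K_\ell$ lower bound with the hypothesis' upper bound to get $\delta(yb_i)\Jrel\delta(y)$, and then applying Lemma~\ref{lem:jlr} and a left multiplication exactly as you do (the paper's invariant is $\delta(v)\Jrel\delta(vb_1\cdots b_i)$ for the fixed length-$(\ell+1)$ prefix $v$, which is equivalent to your invariant $\dpj{p_i}=\ell$). One cosmetic slip: in the paper's convention $p_{i-1}=p_ix$ reads $p_{i-1}\Rord p_i$ rather than $p_{i-1}\Lord p_i$, but your conclusion $p_i\Jrel p_{i-1}$ is unaffected.
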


\begin{proof}
  The ``only if'' direction is immediate since $\dpj{y} \leq \dpj{w}$ for every infix $y$ of $w$. Conversely, assume that for all $x,y,z \in B^*$ such that $w = xyz$ and $|y| \leq \ell+1$, we have $\dpj{y} \leq \ell$. We prove that $\dpj{w} \leq \ell$. If $|w|\leq\ell + 1$, this is immediate. Assume now that $|w| > \ell+1$. We get $b_1,\dots,b_n \in B$ and $v \in B^*$ such that $|v| = \ell+1$ and $w =vb_1 \cdots b_n$. We use induction on $i$ to prove that $\delta(v) \Jrel \delta(vb_1 \cdots b_i)$ for all $i \leq n$. Since $\dpj{v} \leq \ell$ by hypothesis, the case $i = n$ yields $\dpj{w} \leq \ell$. The case $i = 0$ is trivial: we have $\delta(v) \Jrel \delta(v)$. Assume now that $i \geq 1$. By induction hypothesis, we know that $\delta(v) \Jrel \delta(vb_1 \cdots b_{i-1})$. Let $x,y \in B^*$ such that $|y| = \ell$ and $xy = vb_1 \cdots b_{i-1}$ (the words $x$ and $y$ exist because $|v| = \ell+1$). Since $w \in K_\ell$, and $y$ is an infix of $w$ such that $|y| = \ell$, we know that $\dpj{y} \geq \ell$. Moreover, $yb_i$ is an infix of $w$ such that $|yb_i| = \ell+1$, which yields $\dpj{yb_i} \leq \ell$ by hypothesis. Since $\dpj{y} \leq \dpj{yb_i}$, we get $\dpj{yb_i} = \dpj{y} = \ell$, which implies that $\delta(yb_i) \Jrel \delta(y)$. Moreover, we have $\delta(yb_i) \Rord \delta(y)$. Thus, Lemma~\ref{lem:jlr} yields $\delta(yb_i) \Rrel \delta(y)$. This implies that $\delta(xyb_i) \Rrel \delta(xy)$. Hence, $\delta(vb_1 \cdots b_{i}) \Jrel \delta(vb_1 \cdots b_{i-1}) \Jrel \delta(v)$. This completes the proof.
\end{proof}

We now prove that for all $s \in M$ and $\ell \in\nat$, we have $K_{\ell}\cap\beta\inv(s) \in \tlc{\cdelta}$. Our objective (every language recognized by $\beta$ belongs to \tlc{\cdelta}) follows from the case $\ell = 0$, since $K_0=B^*$. The proof involves two steps. The first settles the case of elements of~$K_{\ell}\cap\beta\inv(s)$ whose image under $\delta$ has a $d_{\Jrel}$ value at most~$\ell$. We do not use induction for this case, which relies on the inclusion $\upol{\bpol{\cdelta}}\subseteq \tlc{\cdelta}$. It is also the place where we use Lemma~\ref{lem:hypbeta}, \emph{i.e.}, the hypothesis that all \Cs-orbits \mbox{for $\alpha$ are in~\davar.}

\begin{restatable}{proposition}{bcase} \label{prop:bcase}
  Let  $(\ell,s,r) \in \nat\times M\times N\!$. If $\dpj{r} \leq \ell$ then \mbox{$K_{\ell} \cap \beta\inv(s) \cap \delta\inv(r) \in \tlc{\cdelta}$.}
\end{restatable}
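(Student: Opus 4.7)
The strategy is to prove the stronger inclusion
$L_0 := K_\ell \cap \beta\inv(s) \cap \delta\inv(r) \in \upol{\bpol{\cdelta}}$,
from which the proposition follows via Proposition~\ref{prop:ubptl}. I apply the algebraic characterization of Theorem~\ref{thm:ubpcar}: letting $\theta\colon B^* \to Q$ be the syntactic morphism of $L_0$, I must verify that for every $\cdelta$-pair $(e,p) \in Q^2$ with $e \in E(Q)$ and every $q \in Q$, the equation $(epqe)^{\omega+1} = (epqe)^\omega qe (epqe)^\omega$ holds.

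I argue at the level of witness words in $B^*$. Since $\delta$ is a $\cdelta$-morphism, Lemma~\ref{lem:pairmor} turns the $\cdelta$-pair $(e,p)$ into a $\delta$-pair for $\theta$, producing $u, v \in B^*$ with $\delta(u) = \delta(v)$, $\theta(u) = e$, and $\theta(v) = p$. I also fix $w \in B^*$ with $\theta(w) = q$, and replace $u$ by a suitable power so that $\beta(u)$, $\delta(u)$, and $\theta(u)$ are simultaneously idempotent. For $N$ a common multiple of $\omega(M)$, $\omega(N)$ and $\omega(Q)$, set $A = (uvwu)^{2N+1}$ and $B = (uvwu)^N wu (uvwu)^N$. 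Since $\theta$ is the syntactic morphism of $L_0$, the target equation is equivalent to proving that for all $x, y \in B^*$, we have $xAy \in L_0 \iff xBy \in L_0$.

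This equivalence decomposes into three sub-equivalences, one for each defining condition of $L_0$. The $\delta$-equivalence follows from $\delta(A) = \delta(B)$, which is immediate from $\delta(u) = \delta(v)$ and the idempotence of $\delta(u)$. The $\beta$-equivalence is settled via Lemma~\ref{lem:hypbeta}: the pair $(\beta(u), \beta(v))$ is a $\delta$-pair for $\beta$ (witnessed directly by $u$ and $v$), and by taking high powers of $u$ one also arranges $(\beta(u), \beta(uwu))$ to be a $\delta$-pair for $\beta$. The lemma then yields the identity $(\beta(uvwu))^{\omega+1} = (\beta(uvwu))^\omega \beta(wu) (\beta(uvwu))^\omega$ in $M$, and hence $\beta(A) = \beta(B)$.

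The most delicate step — and the one that uses $\dpj{r} \leq \ell$ — is establishing $xAy \in K_\ell \iff xBy \in K_\ell$. Assume $xAy \in L_0$. Then $\delta(xAy) = r$ with $\dpj{r} \leq \ell$, so Lemma~\ref{lem:llpo} forces every infix of $xAy$ of length at most $\ell+1$ to have $\dpj{\cdot} \leq \ell$. The words $xAy$ and $xBy$ agree outside the central block, and the central blocks are flanked by many identical copies of $uvwu$; any short infix of $xBy$ either lies inside a region common with $xAy$, or crosses the central modification but is matched, via the block repetition, by an infix of $xAy$ having the same $\delta$-image. This propagates the upper bound $\dpj{\cdot}\leq\ell$ to short infixes of $xBy$, while the lower bound required by the $K_\ell$ definition transfers by the same block-matching argument. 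The main obstacle is this careful infix-level bookkeeping, which crucially combines the hypothesis $\dpj{r} \leq \ell$ through Lemma~\ref{lem:llpo} with the heavy repetition of the $uvwu$-blocks to synchronize $K_\ell$ membership of $xAy$ and $xBy$.
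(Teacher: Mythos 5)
Your high-level plan (prove membership in \upol{\bpol{\cdelta}} via Theorem~\ref{thm:ubpcar} and conclude with Proposition~\ref{prop:ubptl}) is the same as the paper's, but the central step of your verification is broken. You claim that ``by taking high powers of $u$ one also arranges $(\beta(u),\beta(uwu))$ to be a $\delta$-pair for $\beta$''. A $\delta$-pair for $\beta$ requires two words with \emph{equal} $\delta$-images, and $\delta(uwu)=\delta(u)\delta(w)\delta(u)$ has no reason to equal $\delta(u)$: the word $w$ comes from the \emph{free} element $q$ of Theorem~\ref{thm:ubpcar}, so $\delta(w)$ is arbitrary, and powering $u$ changes nothing about it. Without this second $\delta$-pair, Lemma~\ref{lem:hypbeta} does not apply and the unconditional identity $\beta(A)=\beta(B)$ is simply not available. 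Indeed, if it were, you would be deriving equation~\eqref{eq:ubp} (free middle element) from the hypothesis that the \Cs-orbits are in \davar, i.e.\ from~\eqref{eq:orbit-in-davar} (constrained middle element); as Remark~\ref{rem:upolbpol-vs-tlc} and the strictness of $\upol{\bpol{\at}}\subsetneq\tlc{\at}$ show, that implication is false in general. This is also where you misplace the hypothesis $\dpj{r}\leq\ell$: in the paper it is \emph{not} needed for the $K_\ell$-synchronization (which only uses $|u|\geq\ell$ and the fact that the two test words then have the same infixes of length at most $\ell$), but precisely to rescue the $\beta$-step. The paper inserts $uv_2u$ with $v_2=uv'_2u(uv_1uv'_2u)^{p-1}$, so that $h=\delta(v_2)$ is an idempotent with $gh=hg=h$, and proves $\beta(z_1)=\beta(z_2)$ only \emph{conditionally on} $z_1,z_2\in K_\ell\cap\delta\inv(r)$: under that assumption, $\dpj{r}\leq\ell$ forces $\dpj{u}=\dpj{v_2}=\dpj{r}$, hence $g\Jrel h$, then Lemma~\ref{lem:jlr} gives $g=h$, and only now is $(\beta(u),\beta(v_2))$ a genuine $\delta$-pair to which Lemma~\ref{lem:hypbeta} applies.

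Two further problems with your construction. First, your insert $wu$ is not flanked by $u$ on the left, so $xBy$ contains factors of the form (suffix of $u$)(prefix of $w$) of length $\leq\ell$ that occur nowhere in $xAy$; your proposed repair---matching such an infix with an infix of $xAy$ ``having the same $\delta$-image''---is unsound, because $\delta(u)=\delta(v)$ says nothing about proper suffixes of $u$ versus $v$, and $K_\ell$-membership is a condition on the actual infixes. The paper's insert $uv_2u$ (with $uv_2u$ a factor of the block $uv_1uv_2u$ and $u$ at both junctions) is chosen exactly so that, once $|u|\geq\ell$, the two words have identical sets of infixes of length $\leq\ell$; the degenerate case $|u|<\ell$ (forcing $u=\veps$, $v'_1\in B^+$) must be treated separately, as must $v'_1=\veps$, and you address neither. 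Second, replacing $u$ by a power while leaving $v$ untouched destroys the equality $\delta(u)=\delta(v)$ on which everything rests; the paper avoids this by setting $u=(u')^p$ and $v_1=(u')^{p-1}v'_1$ simultaneously.
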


\begin{proof}
  We prove that $K_{\ell} \cap \beta\inv(s) \cap \delta\inv(r) \in \upol{\bpol{\cdelta}}$, which, by Proposition~\ref{prop:ubptl}, will give the desired result $K_{\ell} \cap \beta\inv(s) \cap \delta\inv(r)\in\tlc{\cdelta}$.  Let $\gamma: B^*\to Q$ be the syntactic morphism of $K_{\ell} \cap \beta\inv(s)\cap\delta\inv(r)$. By Theorem~\ref{thm:ubpcar}, it suffices to show that given $q_1,q_2\in Q$ and $f \in E(Q)$ such that $(f,q_1) \in Q^2$ is a \cdelta-pair for $\gamma$, the following equation holds:
  \begin{equation} \label{eq:basec}
    (fq_1fq_2f)^{\omega+1} = (fq_1fq_2f)^{\omega} fq_2f (fq_1fq_2f)^{\omega}.
  \end{equation}
  Let $q_1,q_2,f\in Q$ be such elements. By definition of \cdelta, we know that $\delta$ is a $\cdelta$-morphism. Therefore, Lemma~\ref{lem:pairmor} implies that $(f,q_1)$ is a $\delta$-pair for~$\gamma$. We get $u',v'_1 \in B^*$, such that $\delta(u') = \delta(v'_1)$, $\gamma(u')=f$ and $\gamma(v'_1)=q_1$. Note that if $v'_1 = \veps$, then $q_1 = 1_Q$ and~\eqref{eq:basec}~holds since it is clear that $(fq_2f)^{\omega+1}=(fq_2f)^{2\omega+1}$. Therefore, we assume from now on that $v'_1 \in B^+$. Let us also choose $v'_2 \in B^*$ such that $\gamma(v'_2) = q_2$. We now define $p = \ell \times \omega(N) \times \omega(M) \times \omega(Q)$, $u = (u')^p$, $v_1 = (u')^{p-1}v'_1$ and $v_2 = uv'_2u(uv_1uv'_2u)^{p-1}$. We compute $\gamma(u) = f$, $\gamma(v_1) = fq_1$ and $\delta(u) = \delta(v_1)$. Moreover, since $p$ is a multiple of $\omega(N)$, the element $\delta(u) = \delta(v_1)$ is an idempotent $g \in E(N)$. Finally, we have $\gamma(v_2) = fq_2f(fq_1fq_2f)^{p-1}$ and $\delta(v_2) = (g\delta(v'_2)g)^p$. In particular, it follows that $\delta(v_2)$ is an idempotent $h \in E(N)$ such that $gh = hg = h$.

  We prove that $(uv_1uv_2u)^{p}$ and  $(uv_1uv_2u)^{p} uv_2u (uv_1uv_2u)^{p}$ are equivalent for the syntactic congruence of $K_{\ell} \cap \beta\inv(s)\cap \delta\inv(r)$. This will imply that they have the same image under~$\gamma$, which yields $(fq_1fq_2f)^{\omega}  = (fq_1fq_2f)^{\omega}  fq_2f(fq_1fq_2f)^{2\omega-1}$. One may then multiply by $fq_1fq_2f$ on the right to get~\eqref{eq:basec}, as desired. For $x,y \in A^*$, let $z_1 = x (uv_1uv_2u)^{p} y$ and $z_2 = x(uv_1uv_2u)^{p} uv_2u (uv_1uv_2u)^{p}y$. We have to show that $z_1 \in K_{\ell} \cap \beta\inv(s)\cap\delta\inv(r)$ if and only if $z_2 \in K_{\ell}  \cap \beta\inv(s)\cap\delta\inv(r)$. We first treat the special case where $|u|<\ell$.

  Assume that $|u| < \ell$. We show that in this case $z_1 \not\in K_\ell$ and $z_2 \not\in K_\ell$ (which implies the desired result). Since $u = (u')^p$ and $p\geq\ell$, the hypothesis that $|u| < \ell$ yields $u = u' = \veps$. Since $\delta(u)= \delta(v_1)$, we get $\delta(v_1) = 1_N$. Recall that $v_1 = (u')^{p-1}v'_1$ and $v'_1 \in B^+$ by hypothesis. Thus, $v_1\in B^+$, which means that it contains a letter $b \in B$ such that $\delta(b) \Jrel 1_N$. In particular $\dpj{b} = 0$. Hence, $b$ is an infix of length $1$ of both $z_1$ and $z_2$ such that $\dpj{b} < 1$. Now $\ell >|u|=0$, so that $\ell \geq 1$. This implies $z_1 \not\in K_\ell$ and $z_2 \not\in K_\ell$.
  This completes the special~case.

  From now on, we assume that $|u| \geq \ell$. Since $\delta(u) = \delta(v_1) = g \in E(N)$, $\delta(v_2)=h\in E(N)$ and $gh=hg=h$, we have $\delta(z_1) = \delta(z_2) = \delta(x)h\delta(y)$.  Therefore, $z_1 \in  \delta\inv(r)$ if and only if $z_2 \in \delta\inv(r)$. Let us prove that $z_1 \in K_{\ell} \Leftrightarrow z_2 \in K_{\ell}$. This is trivial if $\ell = 0$ since $K_0 = B^*$. Assume now that $\ell \geq 1$. Since $|u| \geq \ell$ by hypothesis, it follows that for every $k \leq \ell$, $z_1$ and $z_2$ have the same infixes of length $k$. This implies that $z_1 \in K_{\ell} \Leftrightarrow z_2 \in K_{\ell}$, as~desired.

  It remains to prove that if $z_1,z_2 \in K_\ell \cap \delta\inv(r)$, then $\beta(z_1) = \beta(z_2)$. We first show that our assumptions imply $g \Jrel h$. Again, there are two cases. First, assume that $\ell = 0$. Since $\dpj{r} \leq \ell$ by hypothesis, we get $r \Jrel 1_N$. Thus, since $u$ and $v_2$ are infixes of $z_1 \in \delta\inv(r)$, we have $\delta(u) \Jrel \delta(v_2) \Jrel 1_N$, which exactly says that $g \Jrel h \Jrel 1_N$. Assume now that $\ell \geq 1$. Recall that $|u| \geq \ell$. Since $u$ is an infix of $v_2$, this also implies that $|v_2| \geq \ell$. Hence, since $u$ and $v_2$ are infixes of $z_2 \in K_\ell \cap \delta\inv(r)$, we get $\dpj{u} \geq \ell$ and $\dpj{v_2} \geq \ell$, $r \Jord \delta(u)$ and $r \Jord \delta(v_2)$. In particular, it follows that $\dpj{r} \geq \dpj{u} \geq \ell$ and $\dpj{r} \geq \dpj{v_2} \geq \ell$. Since $\dpj{r} \leq \ell$ by hypothesis on $r$, we get $\dpj{r} = \dpj{u} = \dpj{v_2} =\ell$. Together with $r \Jord \delta(u)$ and $r \Jord \delta(v_2)$, this yields $r \Jrel \delta(u) \Jrel \delta(v_2)$, \emph{i.e.}, $r \Jrel g \Jrel h$. This completes the proof that $g \Jrel h$. Since we also know that $hg = gh = h$, we have $h \Rord g$ and Lemma~\ref{lem:jlr} yields $g \Rrel h$. We get $z \in N$ such that $g = hz$. Thus, we have $h = hg = hhz = hz = g$.

  Altogether, we obtain $\delta(u) = \delta(v_1) = \delta(v_2) = g \in E(N)$. This implies that $(\beta(u),\beta(v_1))$ and $(\beta(u),\beta(v_2))$ are $\delta$-pairs for $\beta$. Moreover, recall that $u = (u')^p$ where $p$ is a multiple of $\omega(M)$. Hence, we have $\beta(u) \in E(M)$. Consequently, it follows from Lemma~\ref{lem:hypbeta} that $\beta((uv_1uv_2u)^p) = \beta((uv_1uv_2u)^p uv_2u(uv_1uv_2u)^p)$. It now suffices to multiply by $\beta(x)$ on the left and $\beta(y)$ on the right to obtain $\beta(z_1) = \beta(z_2)$, as desired.
\end{proof}

We now turn to the second step of the proof, which is formalized in the following statement.

\begin{restatable}{proposition}{pmain} \label{prop:pmain}
  Let $\ell \leq |N|$ and $s \in M$. There exists a \tla{\cdelta} formula $\varphi_{\ell,s}$ such that for every $w \in K_\ell$, we have $w,0 \models \varphi_{\ell,s} \Leftrightarrow \beta(w) = s$.
\end{restatable}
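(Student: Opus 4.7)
I would prove Proposition~\ref{prop:pmain} by downward induction on $\ell$, from $\ell=|N|$ down to $\ell=0$. For the base case $\ell=|N|$, observe that $\dpj{r} \leq |N|-1$ for every $r \in N$ (a strict $\Jords$-chain has at most $|N|$ elements), so a word in $K_{|N|}$ cannot contain an infix of length $|N|$, which means $K_{|N|}$ consists of finitely many words of length less than $|N|$. For each such $w$ I would write down a \tla{\cdelta} formula $\psi_w$ that describes $w$ exactly, using modalities $\textup{F}_L$ with $L\in\cdelta$ together with letter atoms to inspect positions one by one. Then $\varphi_{|N|,s}$ is the disjunction of $\psi_w$ over those $w \in K_{|N|}$ with $\beta(w)=s$.

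For the induction step, assume $\varphi_{\ell+1,t}$ is available for every $t \in M$. I would split $w \in K_\ell$ according to whether $\dpj{\delta(w)} \leq \ell$ or $\dpj{\delta(w)} \geq \ell+1$. The first sub-case is handled directly by Proposition~\ref{prop:bcase}: for each $r \in N$ with $\dpj{r} \leq \ell$ it yields a \tla{\cdelta} formula $\theta_{r,s}$ defining $K_\ell \cap \beta\inv(s) \cap \delta\inv(r)$, and I disjoin these into $\psi^s_A$. The second sub-case is where the induction hypothesis must be used. Following the proof sketch preceding the proposition, I would select a specific decomposition $w = u_1 u_2 \cdots u_n$ into nonempty infixes and consider the abstract word $w' = b_{\beta(u_1),\delta(u_1)} \cdots b_{\beta(u_n),\delta(u_n)} \in B^*$; by the construction of $B$ this abstraction is well-defined and satisfies $\beta(w') = \beta(w)$. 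If the decomposition is chosen so that every $u_i$ satisfies $\dpj{\delta(u_i)} \geq \ell+1$, then $w' \in K_{\ell+1}$, because any infix of $w'$ of length $k \leq \ell+1$ contains some $u_i$ as a sub-infix and hence has $\dpj$ at least $\ell+1 \geq k$. The induction hypothesis applied to $w'$ then yields $\varphi_{\ell+1,s}$ detecting $\beta(w')=s$, hence $\beta(w)=s$; translating this formula from $w'$ back to $w$ by reinterpreting every modality $\textup{F}_L$ as a jump between consecutive chunk boundaries in $w$ while accumulating the correct $\delta$-value produces a \tla{\cdelta} formula $\psi^s_B$. Setting $\varphi_{\ell,s} = \psi^s_A \vee \psi^s_B$ completes the construction.

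The main obstacle lies in this second sub-case of the induction step. The decomposition must simultaneously (i) place $w'$ in $K_{\ell+1}$, for which the natural criterion ``\textbf{every chunk has $\dpj{\delta(u_i)} \geq \ell+1$}'' is compatible with the assumption $\dpj{\delta(w)} \geq \ell+1$ and can be realized by a greedy left-to-right construction that, whenever the running $\dpj$ of the current candidate chunk reaches $\ell+1$, closes the chunk and starts a new one; and (ii) place the chunk boundaries at positions that are syntactically identifiable inside $w$ via $\cdelta$-definable conditions, so that the translation of $\varphi_{\ell+1,s}$ from $w'$ to $w$ is itself a \tla{\cdelta} formula. Verifying that such a canonical decomposition is indeed $\cdelta$-locatable and that the resulting syntactic translation preserves semantics is the crux of the remaining technical work, and where I expect the bulk of the argument to reside.
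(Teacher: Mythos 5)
Your overall skeleton (downward induction on $\ell$, finite base case, abstraction of a decomposition into a word over $B$, induction at level $\ell+1$, syntactic translation back) is the paper's, but the decomposition you choose creates a genuine gap, and it is exactly at the point you defer as ``the crux''. Your chunks are chosen so that every $u_i$ satisfies $\dpj{\delta(u_i)}\geq \ell+1$, via a greedy left-to-right factorization. Two problems follow. First, the greedy boundaries are defined recursively (each boundary depends on the previous one), and there is no apparent way to express ``position $i$ is a chunk boundary'' by a \tla{\cdelta} formula: unary temporal logic over \cdelta has no mechanism for such an inductive condition, so the translation step you need has no starting point. Second, even granting the boundaries, your translation must verify, inside $w$, that a given chunk $u_i$ has $\beta(u_i)=t$; but the only tool available for defining $\beta$-preimages at level $\ell$ is Proposition~\ref{prop:bcase}, which applies only when the relevant $\dpj$ value is at most $\ell$ --- precisely the opposite of your chunk condition $\dpj{\delta(u_i)}\geq\ell+1$. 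The induction hypothesis does not help either (it concerns $K_{\ell+1}$, not factors with large $\dpj$), so the $\beta$-value of your chunks cannot be computed, and you would be using $\varphi_{\ell,s}$ to build $\varphi_{\ell,s}$.

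The paper resolves both issues by inverting the roles: it marks as ``pointed'' the positions $i$ where the \emph{local} window $\sigma_{\ell+1}(w,i)$ of length $\ell+1$ has $\dpj{\sigma_{\ell+1}(w,i)}\geq\ell+1$. This condition is locally testable, hence definable by the formulas of Lemma~\ref{lem:checkdist}, which solves the locatability problem. The blocks \emph{between} consecutive pointed positions then consist of safe positions and have $\dpj$ at most $\ell$ (Lemma~\ref{lem:llpo}), so Proposition~\ref{prop:bcase} --- relativized to these blocks via Lemmas~\ref{lem:constrain} and~\ref{lem:checkletters} --- computes their images under $\beta$ and $\delta$; this is where Proposition~\ref{prop:bcase} is really needed, not (only) for the global case $\dpj{\delta(w)}\leq\ell$ as in your split. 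The price is that the abstracted letters (a pointed letter together with the following safe block) no longer individually have large $\dpj$, so $\wch{w}\in K_{\ell+1}$ requires a separate argument (Lemma~\ref{lem:bprops1}) exploiting that each abstracted letter starts at a pointed position. Without some such replacement for your greedy, large-$\dpj$ chunks, both the definability of the boundaries and the computation of the chunk $\beta$-values remain unjustified, so the proposal as it stands does not go through.
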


Let us first use Proposition~\ref{prop:pmain} to complete the main proof: we have to show that every language recognized by $\beta$ belongs to \tlc{\cdelta}. Clearly, it suffices to show that $\beta\inv(s) \in \tlc{\cdelta}$ for each $s \in M$. We apply Proposition~\ref{prop:pmain} for $\ell = 0$. Since $K_0=B^*$, this yields a formula $\varphi_{0,s} \in \tla{\cdelta}$ such that $L_{min}(\varphi_{0,s}) = \beta\inv(s)$. Thus, $\beta\inv(s) \in \tlc{\cdelta}$, as desired.

\medskip

It remains to prove Proposition~\ref{prop:pmain}. We construct $\varphi_{\ell,s} \in \tla{\cdelta}$ by induction on $|N| - \ell$. If $\ell = |N|$, we define $\varphi_{\ell,s}$ so that $L_{min}(\varphi_{\ell,s})=K_{\ell}\cap\beta\inv(s)$. Since $K_{|N|}\cap\beta\inv(s)$ is finite and $\tla{\cdelta}$ is closed under disjunction, it suffices to build for every word $w\in B^*$ a \tla{\cdelta} formula $\varphi_{w}$ defining~$\{w\}$. Since $B^{*}\in\cdelta$, one may use the ``$\finally{}\!$'' modality. For $w=b_1\cdots b_n$,~let
\[
  \psi_w=\finally(b_1 \wedge \finally{(b_2 \wedge \finally{(b_3 \wedge \cdots \wedge\finally{b_n})})}).
\]
One may then choose $\varphi_w=\psi_w\wedge\bigwedge_{\mathit{u}\in B^{*},|u|=|w|+1}\neg\psi_u$.

\smallskip
Assume now that $\ell < |N|$. We present a construction for splitting the words in $K_\ell$ into two parts: a prefix mapped to an element $r \in N$ such that $\dpj{r} \leq \ell$ (we handle it with Proposition~\ref{prop:bcase}) and a suffix that we abstract as a word in $K_{\ell+1}$ \mbox{(we handle it by~induction).}

Let $w \in K_\ell$. For each position $i \in \pos{w} \setminus \{0\}$ and $k \in \nat$, we write $\sigma_k(w,i) \in B^*$ for the infix $\infix{w}{i-1}{j}$ where $j = \omin(i+k,|w|+1)$. In other words, $\sigma_k(w,i)=\wpos{w}{i} \cdots \wpos{w}{i+k-1}$ if $i+k-1\leq |w|$ and $\sigma_k(w,i) = \wpos{w}{i} \cdots \wpos{w}{|w|}$ otherwise. In particular, we have $|\sigma_k(w,i)| \leq k$.

\begin{restatable}{lemma}{lem:checkdist}\label{lem:checkdist}
  Let $k \leq \ell+1$ and $u \in B^*$ be such that $|u| \leq k$. There exists a formula $\pi_{k,u} \in \tla{\cdelta}$ such that for all $w \in K_\ell$ and $i \in \pos{w} \setminus \{0\}$, $w,i \models \pi_{k,u} \Leftrightarrow\sigma_{k}(w,i) = u$.
\end{restatable}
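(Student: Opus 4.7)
The plan is to prove Lemma~\ref{lem:checkdist} by induction on $k$, building $\pi_{k,u}$ explicitly so that it mirrors the recursive identity $\sigma_k(w,i) = \wpos{w}{i} \cdot \sigma_{k-1}(w,i+1)$, which holds whenever $i \leq |w|$.

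For the base case $k = 0$, the only eligible $u$ is $\veps$ and $\sigma_0(w,i)=\veps$ for all $w$ and $i$, so I set $\pi_{0,\veps}=\top$. For the inductive step with $k\geq 1$, the case $u=\veps$ is handled directly via $\pi_{k,\veps}=max$, since $\sigma_k(w,i)=\veps$ iff $i=|w|+1$ iff $\wpos{w}{i}=max$. For $u=u_1u_2\cdots u_m$ with $m\geq 1$ I define
\[
  \pi_{k,u}=u_1\wedge\textup{F}_L\,\pi_{k-1,u_2\cdots u_m},
\]
where $L=\delta\inv(1_N)\in\cdelta$ and $\pi_{k-1,u_2\cdots u_m}$ comes from the induction hypothesis, which is applicable since $|u_2\cdots u_m|=m-1\leq k-1$ and $k-1\leq\ell\leq\ell+1$.

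The main obstacle is showing that, inside $K_\ell$, the modality $\textup{F}_L$ faithfully implements a ``next position'' modality. More precisely, I will establish the following key property: for every $w\in K_\ell$ with $\ell\geq 1$ and positions $i<j$ in $\pos{w}$, $\infix{w}{i}{j}\in L$ if and only if $j=i+1$. The ``if'' direction is immediate since $\infix{w}{i}{i+1}=\veps$ and $\delta(\veps)=1_N$. For the ``only if'' direction, suppose $j>i+1$, so $\infix{w}{i}{j}$ is a non-empty infix of $w$ containing some letter $b\in B$. Since $w\in K_\ell$ with $\ell\geq 1$, the definition of $K_\ell$ applied to the sub-infix $b$ of length $1\leq\ell$ gives $\dpj{b}\geq 1$, that is, $\delta(b)\not\Jrel 1_N$. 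Because $\delta(\infix{w}{i}{j})$ factors through $\delta(b)$, we have $\delta(\infix{w}{i}{j})\Jord\delta(b)$, so $\dpj{\delta(\infix{w}{i}{j})}\geq\dpj{b}\geq 1$. In particular $\delta(\infix{w}{i}{j})\neq 1_N$, hence $\infix{w}{i}{j}\notin L$.

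Once this property is established, verifying the inductive step is routine: for $w\in K_\ell$ with $\ell\geq 1$ and $i\in\pos{w}\setminus\{0\}$, we have $w,i\models\pi_{k,u}$ if and only if $\wpos{w}{i}=u_1$ and $w,i+1\models\pi_{k-1,u_2\cdots u_m}$, using both the next-position property of $L$ and the fact that $\wpos{w}{i}=u_1\in B$ forces $i\leq|w|$ so that $i+1\in\pos{w}$; combining this with the induction hypothesis and the decomposition $\sigma_k(w,i)=\wpos{w}{i}\cdot\sigma_{k-1}(w,i+1)$ yields $\sigma_k(w,i)=u$. The degenerate case $\ell=0$ requires no separate argument, since then $k\leq 1$ forces the recursion to depth at most one: for $k=1$ and $u=b\in B$, the formula $b\wedge\textup{F}_L\top$ simplifies to $b$ because $\wpos{w}{i}=b\in B$ already entails $i\leq|w|$, hence the existence of $i+1\in\pos{w}$ with $\infix{w}{i}{i+1}=\veps\in L$.
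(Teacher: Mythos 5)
Your proof is correct and takes essentially the same route as the paper: the same language $L=\delta\inv(1_N)$, the same base cases ($\top$, $max$, a bare letter when $k\leq 1$), and the same key observation that for $\ell\geq 1$ no letter $b$ occurring in a word of $K_\ell$ satisfies $\dpj{b}=0$, so that $\textup{F}_L$ acts as a ``next'' modality inside $K_\ell$ and your recursion unrolls to the paper's nested formula (with the trailing $max$ when $|u|<k$, and a harmless extra $\textup{F}_L\,\top$ when $|u|=k$). The only difference is cosmetic: you justify the next-position property directly via $\dpj{\cdot}$ applied to factors, which is in fact slightly more careful than the paper's shortcut of writing $\delta\inv(1_N)$ as the set of words all of whose letters map to $1_N$.
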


\begin{proof}
  If $u = \veps$, it suffices to define $\pi_{k,u} = \top$ when $k = 0$ and $\pi_{k,u} = max$ when $k \geq 1$. Assume now that $|u| \geq 1$. If $k \leq 1$, it follows that $|u| = 1=k$. Hence, $u$ is a letter $b \in B$ and it suffices to define $\pi_{k,u} = b$. Assume now that $k \geq 2$. Let $C\subseteq B$ be the set of letters mapped to~$1_N$ under $\delta$, so that $H\stackrel{\smash{\text{\tiny def}}}=\delta\inv(1_N)=C^*$. By definition, $H\in\Cs_{\delta}$.
  Since $2\leq k \leq \ell+1$, we have $\ell \geq 1$, which implies, by definition of $K_\ell$, that no word of $K_\ell$ can contain a letter $b$ with $\dpj b=0$. In particular, words of $K_{\ell}$ cannot contain letters of $C$. Therefore, if $w\in K_{\ell}$, $i\in\pos{w}$ and $\psi\in\tla\cdelta$, we have $w,i\models \finallyp{H}\psi$ if and only if $w,i+1\models \psi$. Let $u = b_1 \cdots b_n$ (with $b_i\in B$). We have $n = |u| \leq k$ by hypothesis. We consider two cases for defining $\pi_{k,u}$:
  \begin{itemize}
    \item If $n = k$, we let $\pi_{k,u} = (b_1 \wedge \finallyp{H}{(b_2 \wedge \finallyp{H}{(b_3 \wedge \cdots \finallyp{H}{b_n})})})$.
    \item If $n < k$, we let $\pi_{k,u} = (b_1 \wedge \finallyp{H}{(b_2 \wedge \finallyp{H}{(b_3 \wedge \cdots \finallyp{H}{(b_n \wedge \finallyp{H}{max})})})})$.
  \end{itemize}
  The above fact on $\finallyp{H}{}$ implies that this definition fulfills the desired property.
\end{proof}

\noindent
\textbf{\textsf{Pointed positions.}} Consider $w \in K_\ell$. We say that an arbitrary position $i \in \pos{w}$ is  \emph{pointed} when either $i \in \{0,|w|+1\}$, or $i \in \posc{w}$ and $\dpj{\sigma_{\ell+1}(w,i)} \geq \ell+1$.
\begin{definition}[Detection of pointed positions in \tla{\cdelta}]\label{rem:pointed}
  Let $\pi = min \vee max \vee \bigvee_{u \in U} \pi_{\ell+1,u}$ where $U = \{u \in B^* \mid |u| \leq \ell+1 \text{ and } \dpj{u} \geq \ell+1\}$. By definition of $\pi_{\ell+1,u}$ in Lemma~\ref{lem:checkdist}, we know that for  $w \in K_\ell$ and $i \in \pos{w}$, we have $w,i \models \pi$ if and only if position $i$ is~pointed.
\end{definition}
A position $i \in \pos{w}$ which is \emph{not} pointed is said to be \emph{safe}. We now prove that we may constrain the evaluation of \tla{\cdelta} formulas to infixes that only contain \emph{safe} positions.

\begin{restatable}{lemma}{constrain} \label{lem:constrain}
  Let $\psi \in \tla{\cdelta}$ and $H \in \cdelta$. There exist two formulas $\xfinallyp{H}{\psi}$ and $\xfinallymp{H}{\psi}$ of $\tla{\cdelta}$ such that for all $w \in K_\ell$ and all $i \in \pos{w}$, the two following properties hold:
  \begin{itemize}
    \item $w,i \models \xfinallyp{H}{\psi}$ if and only if there exists $j \in \pos{w}$ such that $j > i$, $w,j \models \psi$, $\infix{w}{i}{j} \in H$ and all positions $h \in \pos{w}$ such that $i < h < j$ are safe.
    \item $w,i \models \xfinallymp{H}{\psi}$ if and only if there exists $j \in \pos{w}$ such that $j < i$, $w,j \models \psi$, $\infix{w}{j}{i} \in H$ and all positions $h \in \pos{w}$ such that $j < h < i$ are safe.
  \end{itemize}
\end{restatable}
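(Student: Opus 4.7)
The plan is to build $\xfinallyp{H}{\psi}$ explicitly as a $\tla{\cdelta}$ formula; the companion $\xfinallymp{H}{\psi}$ is obtained by a symmetric construction dualizing past and future modalities. The building blocks are the subformula $\psi$, the pointedness detector $\pi$ from Definition~\ref{rem:pointed}, the local-pattern formulas $\pi_{k,u}$ of Lemma~\ref{lem:checkdist}, Boolean combinations, and nested applications of $\finallyp{H'}{\cdot}$ and $\finallymp{H'}{\cdot}$ for carefully chosen $H' \in \cdelta$.

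My strategy leverages the equivalence $\tla{\cdelta} = \fod(\infsig{\cdelta})$ from Remark~\ref{rem:fo2}, applicable since $\cdelta$ is a Boolean algebra. The target semantics is naturally expressed in three-variable first-order logic as
\[
  \exists y \bigl(x < y \wedge I_H(x,y) \wedge \psi^*(y) \wedge \neg \exists z(x < z < y \wedge \pi^*(z))\bigr),
\]
where $\psi^*$ and $\pi^*$ are the translations of $\psi$ and $\pi$, and the central task is to eliminate the third variable $z$ while preserving meaning. I would decompose the ``no pointed position in $(x,y)$'' condition according to the distance of the candidate $h$ to the right endpoint $y$: since pointedness of $h$ is determined by the forward window $\wpos{w}{h}\cdots\wpos{w}{\min(h+\ell,|w|)}$, positions $h$ with $h + \ell \leq y - 1$ have pointedness determined by $\infix{w}{x}{y}$ alone, so they can be ruled out by intersecting $H$ with an appropriate $\cdelta$-filter or, if that filter escapes $\cdelta$, by replacing the single $\finallyp{H}{\cdot}$ with a chain of nested $\cdelta$-modalities interleaved with $\neg \pi$ tests at intermediate safe checkpoints; positions $h$ within $\ell+1$ of $y$ are ruled out by a Boolean combination of the formulas $\pi_{k,u}$ evaluated at $y$, since only finitely many forbidden patterns arise and each involves at most $\ell+1$ letters located near $y$.

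Translating the resulting two-variable formula back to $\tla{\cdelta}$ via Remark~\ref{rem:fo2} yields the required $\xfinallyp{H}{\psi}$. The principal obstacle is the non-local character of pointedness: it depends on a forward window of up to $\ell+1$ letters, so a pointed position lying near the right endpoint of $\infix{w}{x}{y}$ is invisible to any purely language-theoretic test on the infix and must be caught by separate local tests at $y$. A secondary subtlety is to ensure that every language parameter of every modality used remains in $\cdelta$; when the natural bulk filter on $H$ escapes $\cdelta$, the nested-modalities workaround (hopping through safe checkpoints) replaces a single jump by a sequence of $\cdelta$-jumps each separated by a $\neg\pi$ conjunct, so that all intermediate positions are forced to be safe while every language parameter stays in $\cdelta$.
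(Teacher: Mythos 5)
Your proposal correctly locates the difficulty (pointedness is non-local, and every modality parameter must stay in \cdelta), but it does not resolve it, and the two fallbacks you offer do not close the gap. The paper's proof hinges on one precise reduction that is absent from your plan: for $w \in K_\ell$ and $i<j$, \emph{all} positions strictly between $i$ and $j$ are safe if and only if $\delta(\infix{w}{i}{j}) = 1_N$ or $\dpj{\infix{w}{i}{j}\,\sigma_{\ell}(w,j)} \leq \ell$; this equivalence is exactly what Lemma~\ref{lem:llpo} provides (both directions, using that the extended infix is itself in $K_\ell$). Once you have it, the condition splits into a \cdelta-testable part on the infix (membership in $H_r = H \cap \delta\inv(r)$ for some $r \in N$) and a finite test $\pi_{\ell,u}$ at the landing position $j$ over the finitely many windows $u$ with $\dpj{r\delta(u)}\leq\ell$, giving $\xfinallyp{H}{\psi}= \finallyp{H_{1_N}}{\psi} \vee \bigvee_{r}\bigvee_{u} \finallyp{H_r}{(\pi_{\ell,u}\wedge\psi)}$. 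Without this reduction, your ``appropriate \cdelta-filter'' does not exist in general: the set of words with no internal pointed position is not a union of $\delta$-classes, so it is not in \cdelta --- your own hedge acknowledges this. The fallback you propose, a chain of nested \cdelta-modalities with $\neg\pi$ tests at intermediate checkpoints, cannot repair it: a formula of fixed size performs only boundedly many $\neg\pi$ tests, while the number of intermediate positions is unbounded, so safety of the positions skipped between consecutive checkpoints would have to be guaranteed by the hop languages themselves --- which is precisely the question being deferred. (Restricting attention to infixes of words in $K_\ell$ is what rescues the situation, and that is the content of Lemma~\ref{lem:llpo}, which your argument never invokes.)

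There is also a flaw in your treatment of the boundary positions. A position $h$ with $j-\ell-1 \le h < j$ has a pointedness window that straddles $j$: it reads the last few letters of $\infix{w}{i}{j}$ \emph{and} letters from $j$ onward. Those trailing letters of the infix are visible neither to a \cdelta-test on the infix (which only sees its $\delta$-image) nor to the formulas $\pi_{k,u}$ evaluated at $j$ (Lemma~\ref{lem:checkdist} reads forward from the evaluation position), so a Boolean combination of $\pi_{k,u}$ at $j$ cannot decide pointedness of such $h$. The paper never decides pointedness of individual intermediate positions at all; it replaces the quantification over $h$ by the single condition $\dpj{\infix{w}{i}{j}\,\sigma_{\ell}(w,j)}\leq\ell$, which covers interior and straddling positions uniformly. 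Finally, the detour through $\fod(\infsig{\cdelta})$ and back is unnecessary (the paper builds the formula directly in \tla{\cdelta}), but that is a matter of economy; the substantive gap is the missing Lemma~\ref{lem:llpo}-based characterization and the unsound checkpoint workaround.
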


\begin{proof}
  We begin by characterizing infixes containing only safe positions. Let $w \in K_\ell$ and $i,j \in \pos{w}$ be such that $i < j$. We prove that the following two properties are equivalent:
  \begin{enumerate}
    \item All positions $h \in \pos{w}$ such that $i < h < j$ are safe.
    \item Either $\delta(\infix{w}{i}{j}) = 1_N$ or $\dpj{\infix{w}{i}{j}\sigma_{\ell}(w,j)} \leq \ell$.
  \end{enumerate}
  Assume first that all positions $h \in \pos{w}$ such that $i < h < j$ are safe. If $i+1 = j$, then $\infix{w}{i}{j} = \veps$, whence $\delta(\infix{w}{i}{j}) = 1_N$. Assume now that $i+1 < j$. Observe that $\infix{w}{i}{j}\sigma_{\ell}(w,j)$ belongs to $K_\ell$ since it is an infix of $w \in K_\ell$. Moreover, since $i+1 < j$, there exists at least one $h \in \pos{w}$ such that $i < h < j$. Combined with the assumption that all such positions $h$ are safe, this implies that for every $x,y,z \in B^*$ such that $xyz = \infix{w}{i}{j}\sigma_{\ell}(w,j)$ and $|y| \leq \ell +1$, we have $\dpj{y} \leq \ell$. Therefore, Lemma~\ref{lem:llpo} entails that $\dpj{\infix{w}{i}{j}\sigma_{\ell}(w,j)} \leq \ell$, as desired.

  Conversely, assume that either $\delta(\infix{w}{i}{j}) = 1_N$ or $\dpj{\infix{w}{i}{j}\sigma_{\ell}(w,j)} \leq \ell$. We start with the latter case. Since $\infix{w}{i}{j}\sigma_{\ell}(w,j) \in K_\ell$, Lemma~\ref{lem:llpo} implies that for every $x,y,z \in B^*$ such that $xyz = \infix{w}{i}{j}\sigma_{\ell}(w,j)$ and $|y| \leq \ell +1$, we have $\dpj{y} \leq \ell$. In particular, it follows that every $h \in \pos{w}$ such that $i < h < j$ is safe. Assume now that $\delta(\infix{w}{i}{j}) = 1_N$. If $\ell = 0$, then $\sigma_{\ell}(w,j) = \veps$ and we are back to the previous case. Otherwise, $\ell\geq 1$ and since $w \in K_\ell$, the fact that $\delta(\infix{w}{i}{j}) = 1_N$ yields $\infix{w}{i}{j} = \veps$, which completes the argument.

  We are now ready to complete the proof of the lemma. Let $\psi \in \tla{\cdelta}$ and $H \in \cdelta$. For every $r \in N$, we let $H_r = H \cap \delta\inv(r)$ and $U_r = \{u \in B^* \mid |u| \leq \ell \text{ and } \dpj{r\delta(u)} \leq \ell\}$. Observe that $H_r\in\cdelta$. Now, in view of the preliminary result, it suffices to define,
  \[
    \xfinallyp{H}{\psi}= \finallyp{H_{1_N}}{\psi} \vee \bigvee_{r \in N} \bigvee_{u \in U_r} \finallyp{H_r}{\left(\pi_{\ell,u} \wedge \psi\right)} \ \ \text{and} \ \ \xfinallymp{H}{\psi} =\finallymp{H_{1_N}}{\psi} \vee \bigvee_{r \in N} \bigvee_{u \in U_r} \left(\pi_{\ell,u} \wedge \finallymp{H_r}{\psi}\right).
  \]
  This completes the proof.
\end{proof}

\newcommand{\wch}[1]{\ensuremath{\widehat{#1}}\xspace}

\noindent \textbf{\textsf{Pointed decomposition.}} Let $w \in K_\ell$ and let $0 = i_0 < i_1 < \dots < i_n < i_{n+1} = |w|+1$ be all the pointed positions of $w$. The \emph{pointed decomposition} of $w$ is the decomposition $w = w_0b_1w_1 \cdots b_nw_n$ where the highlighted letters $b_1,\dots,b_n \in B$ are those carried by the pointed positions $i_1,\dots,i_n$. For $0 \leq j \leq n$, we associate the word $f(w,i_j) = w_j$ to the pointed position~$i_j$. Moreover, we define a new word $\wch{w} \in B^*$ built from the suffix $b_1w_1\cdots b_nw_n$. For $1 \leq j \leq n$, let $(t_j,q_j) = (\beta(b_jw_j),\delta(b_jw_j)) \in P$. By definition of $\beta$ and $\delta$, we know that there is a letter $b_{t_j,q_j} \in B$ such that $(\beta(b_{t_j,q_j}),\delta(b_{t_j,q_j})) = (t_j,q_j)$. We let $\wch{w} = b_{t_1,q_1} \cdots b_{t_n,q_n}$. Note that by definition, $\beta(b_1w_1\cdots b_nw_n) = \beta(\wch{w})$ and $\delta(b_1w_1\cdots b_nw_n) = \delta(\wch{w})$. Finally, we define a surjective map $i \mapsto \mu(i)$ associating a position $\mu(i) \in \pos{\wch{w}}$ to each \emph{pointed} position $i \in \pos{w}$: for $0 \leq j \leq n+1$, we let $\mu(i_j) = j$. We complete this definition with a key property. For every pointed position $i \in \{0\} \cup \posc{w}$, one can compute the images of the word $f(w,i)$ under $\beta$ and $\delta$ with a \tla{\cdelta} formula. This is where we use Proposition~\ref{prop:bcase}.

\newcommand{\crochi}[1]{\ensuremath{\langle #1\rangle_{min}}\xspace}
\newcommand{\crocha}[1]{\ensuremath{\langle #1\rangle_{max}}\xspace}
\newcommand{\cfinal}[1]{\ensuremath{\lfloor #1\rfloor}\xspace}

\begin{restatable}{lemma}{checkletters} \label{lem:checkletters}
  Let $(t,r) \in M \times N$. There exists $\Gamma_{t,r} \in \tla{\cdelta}$ such that for all $w \in K_\ell$ and all pointed positions $i \in \{0\} \cup \posc{w}$, we have $w,i \models \Gamma_{t,r} \Leftrightarrow \beta(f(w,i)) = t \text{ and } \delta(f(w,i)) = r$.
\end{restatable}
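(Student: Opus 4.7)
My approach splits into two cases according to the value of $\dpj{r}$.

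\emph{Easy case: $\dpj{r}>\ell$.} Here I would take $\Gamma_{t,r}=\bot$. For any $w\in K_\ell$ and any pointed $i$, the word $f(w,i)$ is the infix from $i$ to the next pointed position $i'$, and all positions strictly between $i$ and $i'$ are safe. The preliminary analysis in the proof of Lemma~\ref{lem:constrain}, combined with the observation that $\delta(f(w,i))$ is $\Rord$-above $\delta(f(w,i)\sigma_\ell(w,i'))$ and hence $\Jord$-above it, yields $\dpj{\delta(f(w,i))}\leq\ell<\dpj{r}$. In particular $\delta(f(w,i))\neq r$, so the right-hand side of the equivalence is vacuously false.

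\emph{Main case: $\dpj{r}\leq\ell$.} Proposition~\ref{prop:bcase} gives $L:=K_\ell\cap\beta\inv(t)\cap\delta\inv(r)\in\tlc{\cdelta}$, so I fix a defining formula $\chi\in\tla{\cdelta}$ with $L_{min}(\chi)=L$. Because $f(w,i)\in K_\ell$ automatically (as an infix of $w\in K_\ell$), the target equivalence reduces to $f(w,i),0\models\chi$, and I aim to obtain $\Gamma_{t,r}$ by \emph{relativizing} $\chi$ to the safe interval $[i,i']$ in $w$. For each subformula $\psi$ of $\chi$ I would define, by structural induction, three $\tla{\cdelta}$-formulas $\psi^\circ,\psi^\bullet,\psi^\star$ that mimic the evaluation of $\psi$ at position $0$, at intermediate positions, and at position $|w_j|+1$ of $w_j=f(w,i)$, respectively. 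Atomic letters $a\in B$ are preserved in $\bullet$ and become $\bot$ in $\circ,\star$; $min$ becomes $\top$ in $\circ$ and $\bot$ elsewhere, $max$ becomes $\top$ in $\star$ and $\bot$ elsewhere; Boolean connectives distribute over the three flavors.

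For the temporal operators I would use the safe-restricted modalities of Lemma~\ref{lem:constrain} together with the pointed-position detector $\pi$ from Definition~\ref{rem:pointed}, setting
\[
(\textup{F}_H\psi)^\circ=(\textup{F}_H\psi)^\bullet=\xfinallyp{H}{(\pi\wedge\psi^\star)\vee(\neg\pi\wedge\psi^\bullet)},\qquad (\textup{F}_H\psi)^\star=\bot,
\]
and symmetrically for $\textup{P}_H$ (swapping $\circ\leftrightarrow\star$ and $\xfinallyp{}{}\leftrightarrow\xfinallymp{}{}$). From any position $p\in[i,i']$, $\xfinallyp{H}{\cdot}$ reaches exactly the positions $p''\in(p,i']$, and $\pi$ singles out the unique endpoint candidate $p''=i'$, so this translation faithfully simulates evaluation inside $w_j$. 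I would then take $\Gamma_{t,r}=\chi^\circ$.

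\emph{Main obstacle.} The routine part is a structural induction on $\psi$ verifying that the three flavors satisfy the intended correspondence. The real difficulty is conceptual: both endpoints $i$ and $i'$ of the safe interval are pointed positions, so they are locally indistinguishable by any atomic check, and a single relativization cannot correctly interpret both $min$ and $max$. Introducing three flavors circumvents this by using the \emph{direction} of the last modality step (forward, backward, or none) as a substitute for the missing positional information, with the safe-restricted modalities of Lemma~\ref{lem:constrain} guaranteeing that we never escape the interval $[i,i']$.
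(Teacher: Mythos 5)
Your proposal is correct and follows essentially the same route as the paper: split on whether $\dpj{r}\leq\ell$ (taking $\Gamma_{t,r}=\bot$ otherwise), invoke Proposition~\ref{prop:bcase} to obtain a formula defining $K_\ell\cap\beta\inv(t)\cap\delta\inv(r)$, and relativize it to the block $f(w,i)$ using the safe-restricted modalities of Lemma~\ref{lem:constrain} together with the pointed-position detector $\pi$. The only difference is cosmetic: you track three flavors (left endpoint, interior, right endpoint), whereas the paper merges them into two overlapping ones, \crochi{\cdot} for positions $j\leq|f(w,i)|$ and \crocha{\cdot} for $j\geq 1$, with $\neg\pi$ guards playing the role of your $\star$/$\circ$ cases.
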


\begin{proof}
  First observe that by definition, if $w \in K_\ell$ and $i \in \{0\} \cup \posc{w}$ is pointed, the infix $f(w,i)$ contains only \emph{safe} positions. Hence, for every $x,y,z \in B^*$ such that $f(w,i) = xyz$ and $|y| \leq \ell+1$, we have $\dpj{y} \leq \ell$. By Lemma~\ref{lem:llpo}, it follows that $\dpj{f(w,i)} \leq \ell$. Therefore, if $\dpj{r} > \ell$, then $\delta(f(w,i))$ cannot be equal to $r$, and it suffices to define $\Gamma_{t,r} = \bot$.

  We now assume that $\dpj{r} \leq \ell$. Proposition~\ref{prop:bcase} implies that $K_\ell \cap \beta\inv(t) \cap \delta\inv(r) \in \tlc{\cdelta}$. We get a formula $\psi \in \tla{\cdelta}$ such that for every $u \in K_\ell$, we have $u,0 \models \psi$ if and only if $\beta(u)=t$ and $\delta(u) = r$. Using Lemma~\ref{lem:constrain}, we modify $\psi$ so that given $w \in K_\ell$, the evaluation of $\psi$ at a pointed position $i$ is constrained to the infix $f(w,i)$. More precisely, we use structural induction to build two formulas $\crochi{\psi}$ and $\crocha{\psi}$ such that given $w \in K_\ell$, a pointed position $i \in \{0\} \cup \posc{w}$ and $j \in \pos{f(w,i)}$, the two following properties hold:
  \begin{itemize}
    \item If $j \leq |f(w,i)|$, then $w,i+j \models \crochi{\psi} \Leftrightarrow f(w,i),j \models \psi$.
    \item If $1 \leq j$,\hspace*{6.85ex} then $w,i+j \models \crocha{\psi} \Leftrightarrow f(w,i),j \models \psi$.
  \end{itemize}
  It will then suffice to define $\Gamma_{t,r} = \crochi{\psi}$. We only describe the construction, and leave it to the reader to check that it satisfies the above properties. Note that we use the formula $\pi\in\tla\cdelta$ of Definition~\ref{rem:pointed} that detects pointed positions.

  For $\psi \in B \cup \{\top,\bot\}$, we let $\crochi{\psi} = \crocha{\psi} = \psi$. If $\psi = min$, we let $\crochi{\psi} = \pi$ and $\crocha{\psi} = \bot$. If $\psi = max$, we let $\crochi{\psi} = \bot$ and $\crocha{\psi} = \pi$. We handle Boolean operators in the expected way. For instance, we define $\crochi{\psi'\vee\psi''}=\crochi{\psi'}\vee\crochi{\psi''}$, $\crochi{\psi'\wedge\psi''}=\crochi{\psi'}\wedge\crochi{\psi''}$ and $\crochi{\neg\psi'}=\neg\crochi{\psi'}$, and similarly for $\crocha{\cdot}$. If~$\psi = \finallyp{H}{\psi'}$ for $H \in \cdelta$, we let $\crochi{\psi} = \xfinallyp{H}{\crocha{\psi'}}$ and $\crocha{\psi} = \neg \pi \wedge \xfinallyp{H}{\crocha{\psi'}}$. Symmetrically, if $\psi = \finallymp{H}{\psi'}$ for some $H \in \cdelta$, we define $\crochi{\psi} = \neg \pi \wedge \xfinallymp{H}{\crochi{\psi'}}$ and $\crocha{\psi} =\xfinallymp{H}{\crochi{\psi'}}$. This concludes the inductive construction of $\crochi{\psi}$ and $\crocha{\psi}$ and the proof of the proposition.
\end{proof}

\noindent
\textbf{\textsf{Construction of the formulas \boldmath{$\varphi_{\ell,s}$}.}} We are ready to complete the proof of Proposition~\ref{prop:pmain}.  For every $s \in M$, we build a formula $\zeta_s \in \tla{\cdelta}$ such that for every $w \in K_\ell$, we have $w,0 \models \zeta_s \Leftrightarrow \beta(\wch{w}) = s$. Given $s \in M$, it will then suffice to define $\varphi_{\ell,s} \in \tla{\cdelta}$ as follows:
\[
  \varphi_{\ell,s} = \bigvee_{\{(s_1,s_2) \in M^2 \mid s_1s_2=s\}} \Bigl(\Bigl(\bigvee_{r \in N}\Gamma_{s_1,r}\Bigr) \wedge \zeta_{s_2}\Bigr).
\]
Indeed, it is straightforward that for every word $w \in K_\ell$, we have $\beta(w) =  \beta(f(w,0)) \beta(\wch{w})$. Consequently, by definition of $\varphi_{\ell,s}$, we get $w,0 \models \varphi_{\ell,s} \Leftrightarrow \beta(f(w,0)) \beta(\wch{w}) = s \Leftrightarrow \beta(w) = s$ for all $w \in K_\ell$, which concludes the proof of Proposition~\ref{prop:pmain}. We now concentrate on building~$\zeta_s$. This is where we use induction in Proposition~\ref{prop:pmain}. Indeed, we have the following lemma.

\begin{restatable}{lemma}{basprops1} \label{lem:bprops1}
  For every $w \in K_\ell$, we have $\wch{w} \in K_{\ell+1}$.
\end{restatable}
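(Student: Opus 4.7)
The plan is to unfold the definition of $K_{\ell+1}$ and check that every length-$k$ infix of $\wch{w}$ with $k \leq \ell+1$ has $\dpj{\cdot} \geq k$. The case $k = 0$ is immediate, so I would fix $k \geq 1$ and a factorization $\wch{w} = xyz$ with $|y| = k$. Writing $y = b_{t_{j+1},q_{j+1}} \cdots b_{t_{j+k},q_{j+k}}$ where $j = |x|$, the key move is to match $y$ with the corresponding infix $Y = b_{j+1}w_{j+1} \cdots b_{j+k}w_{j+k}$ of $w$ coming from the pointed decomposition $w = w_0 b_1 w_1 \cdots b_n w_n$.

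The very definition of the letters $b_{t,q}$ gives $\delta(b_{t_\ell,q_\ell}) = q_\ell = \delta(b_\ell w_\ell)$, so $\delta(Y) = q_{j+1}\cdots q_{j+k} = \delta(y)$, and in particular $\dpj{y} = \dpj{Y}$. One immediately checks $|Y| \geq k$ since each block $b_\ell w_\ell$ contributes at least one letter. Hence it suffices to prove $\dpj{Y} \geq k$, which I would do by a simple dichotomy on the length of $Y$.

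If $|Y| \leq \ell$, then $Y$ is an infix of $w \in K_\ell$ of length at most $\ell$, and the definition of $K_\ell$ directly yields $\dpj{Y} \geq |Y| \geq k$. Otherwise $|Y| \geq \ell+1$, and I would exploit that the starting position $i_{j+1}$ of $Y$ is a non-trivial pointed position of $w$: by the definition of pointed, $\dpj{\sigma_{\ell+1}(w,i_{j+1})} \geq \ell+1$. Since $|\sigma_{\ell+1}(w,i_{j+1})| \leq \ell+1 \leq |Y|$ and both words start at position $i_{j+1}$, the word $\sigma_{\ell+1}(w,i_{j+1})$ is a prefix of $Y$, which gives $\dpj{Y} \geq \dpj{\sigma_{\ell+1}(w,i_{j+1})} \geq \ell+1 \geq k$.

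There is no real obstacle here, only a bookkeeping point: one has to keep straight the correspondence between an infix of $\wch{w}$ and the infix of $w$ it abstracts, and to notice that the two pieces of information available on $w$ neatly cover complementary regimes, namely $w \in K_\ell$ (useful when $Y$ is short, of length at most $\ell$) and the pointed condition at $i_{j+1}$ (useful as soon as $Y$ is long enough to contain $\sigma_{\ell+1}(w,i_{j+1})$).
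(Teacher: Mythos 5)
Your proof is correct and essentially the paper's own argument: you match the infix $y$ of $\wch{w}$ with the corresponding infix of $w$ given by the pointed decomposition, use that both have the same image under $\delta$, and conclude either from $w \in K_\ell$ (when that infix is short) or from the pointedness of its first position (when it has length at least $\ell+1$). The only difference is that the paper's dichotomy is on $k \leq \ell$ versus $k = \ell+1$ rather than on the length of the matched infix, which is an interchangeable bookkeeping choice.
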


\begin{proof}
  Let $k \leq \ell+1$ and $x,y,z \in B^*$ such that $\wch{w} = xyz$ and $|y| = k$. We have to prove that $\dpj{y} \geq k$. Let $w = w_0b_1w_1 \cdots b_nw_n$ be the pointed decomposition of~$w$. By definition of $\wch{w}$, we have $\delta(y) = \delta(b_{h}w_{h} \cdots b_{h+k-1}w_{h+k-1})$ for some $h\leq n$. Let $u=b_{h}w_{h} \cdots b_{h+k-1}w_{h+k-1}$. We have to show that $\dpj{y}=\dpj{u}\geq k$. Clearly, $|u| \geq k$. Hence, if~$k \leq \ell$, the hypothesis that $w \in K_\ell$ yields $\dpj{u} \geq k$. Otherwise, $k = \ell+1$. Thus, $|u| \geq \ell+1$ and since the position labeled by $b_h$ in $w$ is pointed, this yields $\dpj{u} \geq \ell+1$. In both cases, we get~$\dpj{y} \geq k$.
\end{proof}

Let $s \in M$. In view of Lemma~\ref{lem:bprops1}, induction on $|N|-\ell$ in Proposition~\ref{prop:pmain} yields a \tla{\cdelta} formula $\psi_s$ such that for every $w \in K_\ell$, we have $\wch{w},0 \models \psi_s \Leftrightarrow \beta(\wch{w}) = s$. Thus, it now suffices to prove that for every $\psi \in \tla{\cdelta}$, there exists a formula $\cfinal{\psi} \in \tla{\cdelta}$ such that for every $w \in K_\ell$ and every pointed position $i \in \pos{w}$, we have $w,i \models \cfinal{\psi} \Leftrightarrow \wch{w},\mu(i) \models \psi$. It will then follow, for $i=0$, that $w,0 \models \cfinal{\psi_s} \Leftrightarrow  \beta(\wch{w}) = s$, meaning that we can define~$\zeta_s = \cfinal{\psi_s}$.

We construct \cfinal{\psi} by structural induction on $\psi$. If $\psi \in \{min,max,\top,\bot\}$, we let $\cfinal{\psi} = \psi$. Suppose now that $\psi = b_{t,q}\in B$ for $(t,q) \in P$. Thus, when evaluated in~$w$ at a pointed position~$i$ carrying a~``$b$'', we want $\cfinal{\psi}$ to check that $\beta(b)\beta(f(w,i))=t$ and $\delta(b)\delta(f(w,i))=q$. Let $T =\bigl\{(b,t',q')\in B \times M \times N\mid \beta(b)t' = t \text{ and }\delta(b)q' = q \bigr\}$. Using the formulas $\Gamma_{t',q'}$ from Lemma~\ref{lem:checkletters}, we define $\psi = \bigvee_{(b,t',q') \in T} \left(b \wedge \Gamma_{t',q'}\right)$. Boolean operators are handled as expected. It remains to deal with temporal modalities, \emph{i.e.}, the case where there exists $H \in \cdelta$ such that $\psi =  \finallyp{H}{\psi'}$ or $\psi =  \finallymp{H}{\psi'}$. For every $b \in B$, let $F_b = \bigl\{r \in N \mid \delta(b)r \in \delta(H)\bigr\}$. We define:
\[
  \cfinal{\finallyp{H}{\psi'}} \stackrel{\text{def}}{=}
  \left\{\begin{array}{l@{}ll}
      \xfinallyp{B^*}{\bigl(\pi \wedge{}&\left( \bigvee_{b \in B}\left(b \wedge \finallyp{\delta\inv(F_b)}{(\pi \wedge \cfinal{\psi'})}\right)\right)\bigr)} & \text{if $\veps \not\in H$}, \\[1ex]
\xfinallyp{B^*}{\bigl(\pi \wedge{}&\left(\bigvee_{b \in B} \left(b \wedge \finallyp{\delta\inv(F_b)}{(\pi \wedge \cfinal{\psi'})}\right)\vee\cfinal{\psi'}\right)\bigr)} & \text{if $\veps \in H$}.
    \end{array}
  \right.
\]
\[
  \cfinal{\finallymp{H}{\psi'}} \stackrel{\text{def}}{=}
  \left\{
    \begin{array}{l@{\;}ll}
      &\bigvee_{b \in B}  \finallymp{\delta\inv(F_b)}{\left(\pi \wedge b \wedge \xfinallymp{B^*}{\left(\pi \wedge \cfinal{\psi'}\right)}\right)} & \text{if $\veps \not\in H$}, \\
      \xfinallymp{B^*}{\left(\pi \wedge \cfinal{\psi'}\right)} \vee& \bigvee_{b \in B}  \finallymp{\delta\inv(F_b)}{\left(\pi \wedge b \wedge\xfinallymp{B^*}{
      \left(\pi \wedge \cfinal{\psi'}\right)}\right)}  & \text{if $\veps \in H$}.
    \end{array}
  \right.
\]
We give an intuition when $\psi = \finallyp{H}{\psi'}$ and $\veps\notin H$. Let $w_0b_1w_1 \cdots b_nw_n$ be the pointed decomposition of $w$ and $\wch{w}=b'_1\cdots b'_{n}$. Let $i_k\in\pos{w}$ be the position of the \mbox{distinguished~$b_k$}, so that $\mu(i_k)=k$. Now, $\wch{w},k\models\finallyp{H}{\psi'}$ when there exists $m>k$ such that $\wch{w},m\models\psi'$ and $b'_{k+1}\cdots b'_{m-1}\in H$. The construction ensures that $w,i_k\models\cfinal{\finallyp{H}{\psi'}}$ when there exists $m>k$ such that $w,i_m\models\cfinal{\psi'}$ and $b_{k+1}w_{k+1}\cdots b_{m-1}w_{m-1}\in H$. The purpose of using $\xfinallyp{B^*}(\pi\wedge\dots)$ is to ``jump'' to $b_{k+1}$. The remainder checks that the next jump, to a pointed position, determines a word of $\delta\inv(\delta(H))=H$. More generally, one can check that $w,i \models \cfinal{\psi} \Leftrightarrow \wch{w},\mu(i) \models \psi$ for all $w \in K_\ell$ and all pointed positions $i \in \pos{w}$. This concludes the proof.

\section{Natural restrictions of generalized unary temporal logic}
\label{sec:fpcar}
We turn to two natural restrictions of the classes \tlc{\Cs}, which were defined in~\cite{pzfpast}: the pure-future and pure-past fragments.
For a class \Cs, we write $\tlfoa{\Cs} \subseteq \tla{\Cs}$ for the set of all formulas that contain only \emph{future} modalities (\emph{i.e.}, the modalities $\textup{P}_L$ are~disallowed). Symmetrically, $\tlpoa{\Cs} \subseteq \tla{\Cs}$ is the set of all formulas in \tla{\Cs} that contain only \emph{past} modalities (\emph{i.e.}, the modalities $\textup{F}_L$ are~disallowed).

We now define the two associated operators $\Cs \mapsto \tlfoc{\Cs}$ and $\Cs\mapsto\tlpoc{\Cs}$. For every class~\Cs, let \tlfoc{\Cs} be the class consisting of all languages $L_{min}(\varphi)$ where $\varphi \in \tlfoa{\Cs}$. Symmetrically, we write \tlpoc{\Cs} for the class consisting of all languages $L_{max}(\varphi)$, with $\varphi \in \tlpoa{\Cs}$.

\begin{remark} \label{rem:evalwhere}
  Note that \tlfoa{\Cs} formulas are evaluated at the leftmost unlabeled position whereas \tlpoa{\Cs} formulas are evaluated at the rightmost unlabeled position.
\end{remark}

\subsection{Connection with left and right polynomial closure}\label{sec:conn-rpol}

The main ideas to establish decidable characterizations for \tlfoc{\Cs} and \tlpoc{\Cs} follow the lines of the proof of Theorem~\ref{thm:utlmain}. However, there are some differences. First, for the easy direction (proving that some property on \Cs-orbits is necessary), we have to adapt Lemma~\ref{lem:operequ} to the operators $\Cs\mapsto\tlfoc{\Cs}$ and $\Cs\mapsto\tlpoc{\Cs}$. We prove these adapted properties in appendix as corollaries of results presented in~\cite{pzfpast}.

The proof of the difficult direction is mostly identical to that in Theorem~\ref{thm:utlmain}. However, there is a key difference: we have to find a substitute for Proposition~\ref{prop:bcase}, whose proof relied the inclusion $\upol{\bpol{\Cs}}\subseteq\tlc{\Cs}$ from Proposition~\ref{prop:ubptl}. We replace unambiguous polynomial closure (\upolo) by two variants, called \emph{right} and  \emph{left} polynomial closure (\rdeto and \ldeto). It is shown~\cite{pzfpast} that $\rdet{\bpol{\Cs}}\subseteq \tlfoc{\Cs}$ and $\ldet{\bpol{\Cs}}\subseteq \tlpoc{\Cs}$ for every \vari \Cs: this serves as a substitute for Proposition~\ref{prop:ubptl}. Finally, while no simple generic characterization of the classes \rdet{\bpol{\Cs}} and \ldet{\bpol{\Cs}} are known, we are able to replace Theorem~\ref{thm:ubpcar} by combining independent characterizations of the operators \poln and \rdeto (resp.\ \poln and \ldeto) from~\cite{PZ:generic18,pmixed}.

We now establish a connection between the operators $\Cs \mapsto \tlfoc{\Cs}$ and $\Cs \mapsto \tlpoc{\Cs}$ and the two weaker variants \rdeto and \ldeto of unambiguous polynomial closure. Consider a marked product $L_0a_1L_1\cdots a_nL_n$. For $1 \leq i \leq n$, we write $H_i = L_1 a_1L_2 \cdots a_{i-1}L_{i-1}$ and $K_i = L_i a_{i+1} L_{i+1} \cdots a_nL_n$. We say that $L_0a_1L_1\cdots a_nL_n$ is \emph{right deterministic} (resp.\ \emph{left deterministic}) when we have $A^*a_iK_i\cap K_i = \emptyset$ (resp.\ $H_ia_iA^*\cap H_i=\emptyset$) for every $i \leq n$. The \emph{right polynomial closure} of a class \Cs, written \rdet{\Cs}, consists of all \emph{finite disjoint unions} of \emph{right deterministic marked products} $L_0a_1L_1 \cdots a_nL_n$ such that $L_0, \dots,L_n \in \Cs$ (by ``disjoint'' we mean that the languages in the union must be pairwise disjoint). Similarly, the \emph{left polynomial closure} \ldet{\Cs} of \Cs consists of all finite disjoint unions of \emph{left}  deterministic marked products $L_0a_1L_1 \cdots a_nL_n$ such that $L_0, \dots,L_n \in \Cs$. While this is not immediate, it is known~\cite{pmixed} that when the input class \Cs is a \vari, then so are \rdet{\Cs} and \ldet{\Cs}.

As expected, we are interested in the ``combined'' operators $\Cs \mapsto \rdet{\bpol{\Cs}}$ and $\Cs \mapsto \ldet{\bpol{\Cs}}$. Indeed, the first one is connected to the classes \tlfoc{\Cs} by the following result proved in~\cite[Proposition~5]{pzfpast}.

\begin{restatable}{proposition}{lptl} \label{prop:lptl}
  For every \vari \Cs, we have $\rdet{\bpol{\Cs}} \subseteq \tlfoc{\Cs}$.
\end{restatable}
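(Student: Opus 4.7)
Since $\tlfoc{\Cs}$ is closed under finite unions (via $\vee$ in $\tlfoa{\Cs}$), the strategy is to prove that any right-deterministic marked product $L_0 a_1 L_1 \cdots a_n L_n$ with each $L_i \in \bpol{\Cs}$ lies in $\tlfoc{\Cs}$. I would proceed by induction on $n$, with the base case $n = 0$ amounting to $\bpol{\Cs} \subseteq \tlfoc{\Cs}$.

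For the base case, the Boolean closure of $\tlfoa{\Cs}$ reduces the task to defining each marked product $K_0 b_1 K_1 \cdots b_m K_m$ with $K_j \in \Cs$ by a $\tlfoa{\Cs}$-formula. Evaluated at position $0$, the nested formula $\finallyp{K_0}{(b_1 \wedge \finallyp{K_1}{(b_2 \wedge \cdots \wedge \finallyp{K_m}{max}\cdots)})}$ captures this product. The key observation is that this template is \emph{decapitable}: replacing the innermost $max$ by any continuation $\chi \in \tlfoa{\Cs}$ yields a formula $\Phi_{K_0 b_1 K_1 \cdots b_m K_m, \chi}$ expressing ``there exists a future position $j$ such that $\prefix{w}{j} \in K_0 b_1 K_1 \cdots b_m K_m$ and $w, j \models \chi$''.

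In the inductive step, I would write the product as $L_0 a_1 K_1$ where $K_1 = L_1 a_2 L_2 \cdots a_n L_n$ inherits right-determinism, and build the formula by chaining from right to left. Starting from a $\tlfoa{\Cs}$-formula $\chi_n$ such that $w,j \models \chi_n$ iff $\suffix{w}{j} \in L_n$ (obtained from the base case, since all modalities are future and thus shift transparently with the starting position), I would repeatedly produce $\chi_k$ from $\chi_{k+1}$ by decapitating the $L_k$-formula with continuation $a_{k+1} \wedge \chi_{k+1}$. At each step, the continuation pinpoints the next marker $a_{k+1}$ and recursively encodes the tail, yielding $\chi_0 \in \tlfoa{\Cs}$ that defines the full marked product at position~$0$.

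The main obstacle is performing the decapitation for a general $L \in \bpol{\Cs}$ rather than for a single marked product of $\Cs$-languages. For a conjunction $L = L' \cap L''$, the naive formula $\Phi_{L',\chi} \wedge \Phi_{L'',\chi}$ existentially quantifies over two potentially distinct witnessing positions, whereas the intended semantics requires a single common $j$; for a negation $L = \neg L'$, simply negating $\Phi_{L',\chi}$ would also destroy the existential quantifier. The right-determinism hypothesis $A^* a_k K_k \cap K_k = \emptyset$ is essential here: once $\chi$ has been anchored at the marker $a_k$ of the remaining right-deterministic tail, the witnessing position $j$ is globally unique. This uniqueness allows Boolean combinations in $L$ to be pushed outside the existential quantification, making the naive conjunction correct and enabling negation to be handled by first witnessing the unique $j$ through the continuation and then requiring the prefix before $j$ to violate the $L'$-condition. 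Propagating this uniqueness argument through the full recursive construction is the principal technical challenge, and the reason why the result is stated for $\rdet{\bpol{\Cs}}$ rather than for the larger class $\pol{\bpol{\Cs}}$.
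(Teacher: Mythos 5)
The paper itself does not prove Proposition~\ref{prop:lptl}: it is imported from~\cite{pzfpast} (Proposition~5 there), so there is no in-paper argument to compare yours against; judged on its own, your self-contained construction is essentially correct and is the kind of syntactic induction one expects here. The base case is fine, and the crucial point is exactly the one you isolate: right determinism, \emph{i.e.}, $A^*a_{k+1}K_{k+1}\cap K_{k+1}=\emptyset$, forces the anchor position $j'$ with $\wpos{w}{j'}=a_{k+1}$ and $\suffix{w}{j'}\in K_{k+1}$ to be unique in the whole word (two anchors $j'_1<j'_2$ would place $\suffix{w}{j'_1}$ in $A^*a_{k+1}K_{k+1}\cap K_{k+1}$), and this uniqueness is what lets you express ``the infix up to the anchor lies in $L_k$'' for $L_k\in\bpol{\Cs}$ written as a Boolean combination of marked products $P_i$ over \Cs: conjoin \finallyp{A^*}{(a_{k+1}\wedge\chi_{k+1})} (anchor existence; note $A^*\in\Cs$ since \Cs is a \vari) with the Boolean combination of the decapitated formulas for the $P_i$, negated literals being handled by negating the corresponding decapitated formula. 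Two imprecisions should be repaired in a full write-up: first, the decapitated template evaluated at a position $i$ speaks about $\infix{w}{i}{j}$, not about $\prefix{w}{j}$; second, the claim that pure-future formulas ``shift transparently'' with the evaluation position is false in general, since a pure-future formula may still test the label of the current position (\emph{e.g.}, via the atom $min$) --- the shift is legitimate here only because your concrete templates never inspect the current label, so the induction invariant ``$w,j\models\chi_k$ if and only if $\suffix{w}{j}\in K_k$, for every $j\in\pos{w}$'' should be stated and verified for those specific formulas rather than deduced from purity of the modalities. With these adjustments the argument goes through and stays entirely within future modalities over languages of \Cs, as required.
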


We have the following symmetrical statement for \tlpoc\Cs.
\begin{restatable}{proposition}{rptl}\label{prop:rptl}
  For every \vari \Cs, we have $\ldet{\bpol{\Cs}} \subseteq \tlpoc{\Cs}$.
\end{restatable}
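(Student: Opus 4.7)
The statement is the exact mirror image of Proposition~\ref{prop:lptl}, and my plan is to deduce it from that proposition by a word-reversal argument rather than by redoing the construction of~\cite{pzfpast} symmetrically. For a word $w = a_1 \cdots a_n$, set $w^R = a_n \cdots a_1$, and extend reversal to languages by $L^R = \{w^R \mid w \in L\}$ and to classes by $\Cs^R = \{L^R \mid L \in \Cs\}$. A direct check shows that reversal is involutive, preserves regularity and Boolean operations, and is compatible with quotients via the identities $u^{-1}(L^R) = (L(u^R)^{-1})^R$ and $(L^R)u^{-1} = ((u^R)^{-1}L)^R$. Consequently, $\Cs^R$ is a \vari whenever $\Cs$ is.

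Next, I would verify three interchange properties. (i) Boolean polynomial closure commutes with reversal: $(\bpol{\Cs})^R = \bpol{\Cs^R}$, since reversing a marked product $L_0 a_1 L_1 \cdots a_n L_n$ produces the marked product $L_n^R a_n L_{n-1}^R \cdots a_1 L_0^R$, and reversal commutes with finite unions, intersections and complements. (ii) Left and right polynomial closures are swapped by reversal: $(\ldet{\Ds})^R = \rdet{\Ds^R}$, because with the notation of Section~\ref{sec:conn-rpol} the left-deterministic condition $H_i a_i A^* \cap H_i = \emptyset$ for $L_0 a_1 L_1 \cdots a_n L_n$ becomes precisely the right-deterministic condition $A^* a_{n+1-i} K_{n+1-i} \cap K_{n+1-i} = \emptyset$ for the reversed product. (iii) Past and future fragments of generalized unary temporal logic are swapped by reversal: $(\tlpoc{\Cs})^R = \tlfoc{\Cs^R}$. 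Indeed, given a formula $\varphi \in \tlpoa{\Cs}$, one obtains $\varphi' \in \tlfoa{\Cs^R}$ by structural induction, substituting each atomic formula $\min$ with $\max$ and vice versa, and each past modality $\textup{P}_L$ by the future modality $\textup{F}_{L^R}$ (every infix condition $\infix{w}{j}{i} \in L$ becomes $\infix{w^R}{|w|+1-i}{|w|+1-j} \in L^R$); since $L \in \Cs$ implies $L^R \in \Cs^R$, the substituted modalities are legal, and the convention switches from the rightmost unlabeled position to the leftmost, giving $L_{min}(\varphi') = (L_{max}(\varphi))^R$.

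Combining (i)--(iii) yields the statement: if $L \in \ldet{\bpol{\Cs}}$, then $L^R \in (\ldet{\bpol{\Cs}})^R = \rdet{(\bpol{\Cs})^R} = \rdet{\bpol{\Cs^R}}$, and applying Proposition~\ref{prop:lptl} to the \vari $\Cs^R$ gives $L^R \in \tlfoc{\Cs^R} = (\tlpoc{\Cs})^R$, whence $L \in \tlpoc{\Cs}$.

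The main obstacle is the careful bookkeeping required to establish item (iii): one must track that the semantics of past modalities at the \emph{rightmost} unlabeled position transforms, under reversal, into the semantics of future modalities at the \emph{leftmost} unlabeled position, and that the payload language of every modality must be reversed in order to preserve the infix constraint. Once this and the (routine) verifications of (i)--(ii) are in place, the proof reduces to a one-line invocation of Proposition~\ref{prop:lptl}.
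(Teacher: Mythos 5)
Your argument is correct, and it matches the paper's treatment: the paper states this proposition without proof, presenting it simply as the left--right ``symmetrical statement'' of Proposition~\ref{prop:lptl}, and your word-reversal construction (checking that reversal preserves \varis, that $(\bpol{\Cs})^R = \bpol{\Cs^R}$, that reversal swaps \ldeto with \rdeto, and that it swaps \tlpoc{\Cs} with \tlfoc{\Cs^R} via the $min/max$ and $\textup{P}_L/\textup{F}_{L^R}$ exchange) is exactly the standard way of making that symmetry precise. So this is essentially the same approach, with the mirroring spelled out rather than left implicit.
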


Propositions~\ref{prop:lptl} and ~\ref{prop:rptl} serve as the replacement of Proposition~\ref{prop:ubptl} when dealing with the classes \tlfoc{\Cs} and  \tlpoc{\Cs}, respectively. It now remains to replace the generic algebraic characterization of the classes \upol{\bpol{\Cs}} presented in Theorem~\ref{thm:ubpcar}. This is more tricky as no such characterization is known for the classes \rdet{\bpol{\Cs}} (nor for the classes \ldet{\bpol{\Cs}}). Yet, we~manage to prove a \emph{sufficient condition} for a language to belong to \rdet{\bpol{\Cs}} or \ldet{\bpol\Cs} by combining results of~\cite{pzupol2} and~\cite{pmixed}. While it does \emph{not} characterize these classes in general, it suffices for our needs: proving that particular languages belong to \rdet{\bpol{\Cs}} (and therefore to \tlfoc{\Cs} by Proposition~\ref{prop:lptl}) or to \ldet{\bpol\Cs} (and therefore to \tlfoc{\Cs} by Proposition~\ref{prop:rptl}).

\begin{restatable}{proposition}{rbpcar} \label{prop:rbpcar}
  Let \Cs be a \vari, $L \subseteq A^*$ be a regular language and $\alpha: A^* \to M$ be its syntactic morphism. Assume that $\alpha$ satisfies the following property:
  \begin{equation}\label{eq:rbp}
    (esete)^{\omega+1} = ete(esete)^{\omega} \quad \text{for every \Cs-pair $(e,s) \in M^2$ and every $t \in M$}.
  \end{equation}
  Then, $L \in \rdet{\bpol{\Cs}}$.
\end{restatable}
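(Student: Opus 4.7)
The strategy is to reduce Proposition~\ref{prop:rbpcar} to a combination of a known generic characterization of $\rdet{\Ds}$-membership, taken from~\cite{pmixed}, together with the elementary observation that every $\bpol{\Cs}$-pair is also a $\Cs$-pair. Concretely, I would first translate membership in $\rdet{\bpol{\Cs}}$ into an equation over $\bpol{\Cs}$-pairs using the generic result, and then exploit the inclusion $\Cs \subseteq \bpol{\Cs}$ to pass from $\bpol{\Cs}$-pairs back to the stronger constraint on $\Cs$-pairs hypothesized in the proposition.

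For the first step, I would invoke from~\cite{pmixed} the algebraic characterization of $\rdet{\Ds}$-membership for an arbitrary \vari $\Ds$: a regular language with syntactic morphism $\alpha: A^* \to M$ belongs to $\rdet{\Ds}$ if and only if the equation
\[
(esete)^{\omega+1} = ete(esete)^{\omega}
\]
holds for every $\Ds$-pair $(e,s) \in M^2$ with $e \in E(M)$ and every $t \in M$. This is the natural one-sided analog of the characterization of $\upolo$ underlying Theorem~\ref{thm:ubpcar}. Applying this to $\Ds = \bpol{\Cs}$---which is a \vari by standard results (see~\cite{PZ:generic18})---yields that $L \in \rdet{\bpol{\Cs}}$ if and only if the above equation holds for every $\bpol{\Cs}$-pair $(e,s)$ and every $t \in M$.

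For the second step, I would observe that since $\Cs \subseteq \bpol{\Cs}$, every $\bpol{\Cs}$-separator of two inverse images $\alpha\inv(s)$ and $\alpha\inv(t)$ is in particular a $\Cs$-separator; contrapositively, every $\bpol{\Cs}$-pair for $\alpha$ is also a $\Cs$-pair for $\alpha$. Consequently, the hypothesis~\eqref{eq:rbp} (the equation for all $\Cs$-pairs) is stronger than the necessary and sufficient condition obtained in the first step (the same equation for all $\bpol{\Cs}$-pairs), from which we conclude that $L \in \rdet{\bpol{\Cs}}$.

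The main obstacle is establishing the characterization of $\rdet{\Ds}$-membership invoked in the first step, which is a nontrivial result that I would take as a black box from~\cite{pmixed}. Note also that this strategy yields only a sufficient condition: the converse does not hold in general, because a $\Cs$-pair that is not a $\bpol{\Cs}$-pair can violate~\eqref{eq:rbp} while $L$ still belongs to $\rdet{\bpol{\Cs}}$. This is nonetheless exactly what the paper needs, since the proposition will be combined with Proposition~\ref{prop:lptl} to prove that specific auxiliary languages lie in $\tlfoc{\Cs}$, in the same spirit as the use of Theorem~\ref{thm:ubpcar} in the proof of Proposition~\ref{prop:bcase}.
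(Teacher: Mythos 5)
There is a genuine gap, and it lies in the result you take as a black box. The theorem of~\cite{pmixed} that the paper uses (its Theorem~4.1) is \emph{not} a characterization of $\rdet{\Ds}$ in terms of $\Ds$-pairs: it states that $L \in \rdet{\Ds}$ if and only if $s^{\omega+1} = t s^{\omega}$ for all $s,t \in M$ with $s \sim_{\Ds} t$, where $\sim_{\Ds}$ is the (coarser) equivalence defined by the $\Ds$-languages \emph{recognized by $\alpha$ itself}. Every $\Ds$-pair is $\sim_{\Ds}$-related, but not conversely, so the condition you postulate (the equation only on $\Ds$-pairs) is strictly weaker than the true one, and the ``if'' direction of your claimed characterization --- the only direction you use --- is false in general. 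Concretely, take $\Ds = \stzer$ and $L = \{a\}$ over $A = \{a,b\}$: its syntactic monoid is $\{1,x,0\}$ with $x^{2}=0$; every pair $(e,s)$ with $e \in E(M)$ is an $\stzer$-pair and one checks that $(esete)^{\omega+1} = ete(esete)^{\omega}$ always holds (all relevant $\omega$-powers are $0$ unless $e=s=t=1$), yet $\rdet{\stzer} = \{\emptyset,A^{*}\}$, so $L \notin \rdet{\stzer}$. This is also consistent with the paper's explicit remark that no generic characterization of $\rdet{\bpol{\Cs}}$ is known; your first step would provide one.

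What the paper actually does --- and what is missing from your proposal --- is the bridge from the $\Cs$-pair hypothesis~\eqref{eq:rbp} to the $\sim_{\bpol{\Cs}}$-based condition of~\cite{pmixed}. Given $s \sim_{\bpol{\Cs}} t$, one has in particular $s \preceq_{\copol{\Cs}} t$, and by~\cite[Lemma~6.6]{pzupol2} the preorder $\preceq_{\copol{\Cs}}$ is the least preorder such that $xeqey \preceq_{\copol{\Cs}} xey$ whenever $(e,q)$ is a $\Cs$-pair and $x,y \in M$. This yields a chain $s = s_{0}, s_{1}, \dots, s_{n} = t$ of such elementary rewritings, and an induction along this chain, using~\eqref{eq:rbp} at each step, establishes $s^{\omega+1} = s_{i} s^{\omega}$ for all $i$, hence $s^{\omega+1} = t s^{\omega}$ as required. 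Your second observation (that $\bpol{\Cs}$-pairs are $\Cs$-pairs, since $\Cs \subseteq \bpol{\Cs}$) is correct but does not substitute for this argument: the difficulty is not relating the two notions of pairs, but relating pairs to the equivalence $\sim_{\bpol{\Cs}}$, which is precisely the content of the chain-rewriting induction.
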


\begin{restatable}{proposition}{lbpcar} \label{prop:lbpcar}
  Let \Cs be a \vari, $L \subseteq A^*$ be a regular language and $\alpha: A^* \to M$ be its syntactic morphism. Assume that $\alpha$ satisfies the following property:
  \begin{equation*}
    (esete)^{\omega+1} = (esete)^{\omega}ese \quad \text{for every \Cs-pair $(e,t) \in M^2$ and every $s \in M$}.
  \end{equation*}
  Then, $L \in \ldet{\bpol{\Cs}}$.
\end{restatable}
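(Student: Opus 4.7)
The plan is to deduce Proposition~\ref{prop:lbpcar} from Proposition~\ref{prop:rbpcar} via a mirror/duality argument, exploiting the evident left/right symmetry between the two statements.

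For the setup, for $w = a_1 \cdots a_n \in A^*$ write $\tilde w = a_n \cdots a_1$ for its reversal, extend reversal to languages by $\tilde L = \{\tilde w \mid w \in L\}$ and to classes by $\widetilde\Cs = \{\tilde K \mid K \in \Cs\}$. Let $M^{op}$ denote the opposite monoid of $M$ (same elements, multiplication $s \cdot_{op} t = ts$), and define $\tilde\alpha \colon A^* \to M^{op}$ by $\tilde\alpha(w) = \alpha(\tilde w)$. It is standard that $\tilde\alpha$ is the syntactic morphism of $\tilde L$, and that $\widetilde\Cs$ is again a \vari, since Boolean operations, regularity, and the combined closure under left and right quotients are all preserved by reversal.

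Three translation facts are then needed. (1)~Reversing a marked product $L_0 a_1 L_1 \cdots a_n L_n$ yields $\tilde L_n a_n \tilde L_{n-1} \cdots a_1 \tilde L_0$; this operation interchanges right-determinism with left-determinism, and it commutes with disjoint union and with Boolean polynomial closure; consequently $\tilde L \in \rdet{\bpol{\widetilde\Cs}}$ if and only if $L \in \ldet{\bpol{\Cs}}$. (2)~Since $\tilde\alpha\inv(s) = \widetilde{\alpha\inv(s)}$ and separation by $K \in \Cs$ corresponds to separation by $\tilde K \in \widetilde\Cs$, a pair $(s,t) \in M^2$ is a \Cs-pair for $\alpha$ if and only if it is a $\widetilde\Cs$-pair for $\tilde\alpha$. (3)~A direct computation gives $e \cdot_{op} s \cdot_{op} e \cdot_{op} t \cdot_{op} e = etese$ in $M$, and idempotents and $\omega$-powers coincide in $M$ and $M^{op}$.

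To conclude, I apply the translations. Rewriting the hypothesis of Proposition~\ref{prop:lbpcar} after swapping the names of $s$ and $t$: for every \Cs-pair $(e,s) \in M^2$ for $\alpha$ and every $t \in M$, $(etese)^{\omega+1} = (etese)^{\omega}\,ete$. By fact~(3), this equation in $M$ is equivalent to $(e \cdot_{op} s \cdot_{op} e \cdot_{op} t \cdot_{op} e)^{\omega+1} = (e \cdot_{op} t \cdot_{op} e) \cdot_{op} (e \cdot_{op} s \cdot_{op} e \cdot_{op} t \cdot_{op} e)^{\omega}$ in $M^{op}$, and by fact~(2) the \Cs-pair condition on $\alpha$ becomes the required $\widetilde\Cs$-pair condition for $\tilde\alpha$. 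Hence $\tilde\alpha$ and $\widetilde\Cs$ satisfy the hypothesis of Proposition~\ref{prop:rbpcar}, so $\tilde L \in \rdet{\bpol{\widetilde\Cs}}$; by fact~(1), $L \in \ldet{\bpol{\Cs}}$ as required. The main obstacle is verifying fact~(1): reversal must transport the disjointness of unions, the Boolean polynomial closure operator, and especially the left/right-determinism conditions correctly. Each check is routine from the definitions but slightly tedious, whereas facts~(2) and~(3) are immediate.
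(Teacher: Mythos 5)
Your proof is correct: the translation facts all check out (reversal preserves prevarieties and exchanges left/right determinism, $\bpol{\widetilde\Cs}=\widetilde{\bpol{\Cs}}$, \Cs-pairs for $\alpha$ coincide with $\widetilde\Cs$-pairs for $\tilde\alpha$, and the opposite-monoid computation turns the hypothesis of Proposition~\ref{prop:lbpcar} into exactly the hypothesis of Proposition~\ref{prop:rbpcar} for $\tilde\alpha$). This is essentially the paper's approach: the paper proves only Proposition~\ref{prop:rbpcar} and dismisses Proposition~\ref{prop:lbpcar} as symmetrical, and your reversal reduction is precisely a formalization of that left/right symmetry.
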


Since Propositions~\ref{prop:rbpcar} and~\ref{prop:lbpcar} are symmetrical, we only prove the first one and leave the second to the reader.

\begin{proof}[Proof of Proposition~\ref{prop:rbpcar}]
  We use a generic characterization of the classes \rdet{\Ds} proved in~\cite{pmixed}. Let us first present it. For every class \Ds, we define a preorder $\preceq_\Ds$ and an equivalence $\sim_\Ds$ over $M$. Given $s,t \in M$, we let,
  \[
    \begin{array}{lcl}
      s \sim_\Ds t & \quad \text{if and only if} \quad  & \text{$s \in F \Leftrightarrow t \in F$ for every $F \subseteq M$ such that $\alpha\inv(F) \in \Ds$,}\\
      s \preceq_\Ds t & \quad \text{if and only if} \quad  & \text{$s \in F \Rightarrow t \in F$ for every $F \subseteq M$ such that $\alpha\inv(F) \in \Ds$.}
    \end{array}
  \]
  Clearly, $\preceq_\Ds$ is a preorder on $M$ and $\sim_\Ds$ is the equivalence generated by $\preceq_\Ds$. When $\alpha: A^* \to M$ is the syntactic morphism of $L$, it is shown in~\cite[Theorem~4.1]{pmixed} that for every \vari \Ds, we have $L \in \rdet{\Ds}$ if and only if $s^{\omega+1} = ts^{\omega}$ for all $s,t \in M$ such that~$s \sim_\Ds t$.

  Hence, since \bpol{\Cs} is a \vari, it suffices to prove that for every $s,t \in M$ such that $s \sim_{\bpol{\Cs}} t$, we have $s^{\omega+1} = ts^{\omega}$. We fix $s,t$ for the proof. Since $s \sim_{\bpol{\Cs}} t$, we have $s \preceq_{\bpol{\Cs}} t$. Moreover, let $\copol{\Cs}$ be the class consisting of all complements of languages in \pol{\Cs} (\emph{i.e.}, $L \in \copol{\Cs}$ if and only if $A^* \setminus L \in \copol{\Cs}$). Clearly, we have $\copol{\Cs} \subseteq \bpol{\Cs}$. Hence, the definition implies that $s \preceq_{\copol{\Cs}} t$

  Moreover, it is shown in~\cite[Lemma~6.6]{pzupol2} that $\preceq_{\copol{\Cs}}$ is the least preorder on $M$ such that for every $x,y,q \in M$ and $e \in E(M)$, if $(e,q) \in M^2$ is a \Cs-pair, then $xeqey \preceq_{\copol{\Cs}} xey$ (the proof is based on the algebraic characterization of \pol{\Cs}, see~\cite{PZ:generic18}). This yields $s_0,\dots,s_n \in M$ such that $s = s_0$, $t = s_n$ and, for every $i \leq n$, there exist $x,y,q \in M$ and $e \in E(M)$ such that  $(e,q) \in M^2$  is a \Cs-pair, $s_{i-1} = xeqey$ and $s_i= xey$. We use induction on $i$ to prove that $s^{\omega+1} = s_is^\omega$ for every $i \leq n$. Since $s_n = t$, the case $i = n$ yields the desired result. When $i = 0$, it is immediate that $s^{\omega+1} = s_0s^\omega$ since $s_0 = s$.  Assume now that $i \geq 1$.

  By induction hypothesis, we know that $s^{\omega+1} = s_{i-1} s^\omega$. Moreover, we have $x,y,q \in M$ and $e \in E(M)$ such that  $(e,q) \in M^2$  is a \Cs-pair, $s_{i-1} = xeqey$ and $s_i= xey$. Since $(s^{\omega+1})^{\omega+2} = s^{\omega+2}$, we get $s^{\omega+2} = (xeqeys^\omega)^{\omega+2}$. Hence, we get
  \[
    \begin{array}{lcl}
      s^{\omega+2}  & = & x\ (eqeys^\omega x e)^{\omega+1}\  eqeys^\omega \\
                    & = & x\ eys^\omega x e(eqeys^\omega x e)^{\omega}\  eqeys^\omega \quad \text{by~\eqref{eq:rbp} since $(e,q)$ is a \Cs-pair} \\
                    & = & xeys^{\omega} (xeqeys^\omega)^{\omega+1}.
    \end{array}
  \]
  This yields, $s^{\omega+2}  =  s_is^{\omega} (s_{i-1}s^\omega)^{\omega+1} = s_is^{\omega} (s^{\omega+1})^{\omega+1}  = s_is^{\omega+1}$. It now remains to multiply by $s^{\omega-1}$ on the right to get $s^{\omega+1} = s_{i} s^\omega$, as desired.
\end{proof}

\subsection{Statements}

The classes \tlfoc{\Cs} and \tlpoc{\Cs} admit algebraic characterizations similar to that of~\tlc{\Cs}. We reuse the \Cs-orbits introduced in Section~\ref{sec:orbits}. Let $\Xs\in \{\Lrel,\Rrel,\Jrel\}$ be one the Green relations defined in Section~\ref{sec:prelims}. A monoid $M$ is \emph{\Xs-trivial} when $s \mathrel{\Xs} t$ implies $s = t$ for all $s,t \in M$. It is standard and simple to verify that a finite monoid $M$ is \Rs-trivial (resp.~ \Ls-trivial) if and only if for all $s,t\in M$, we have $(st)^\omega s=(st)^\omega$ (resp.~ $t(st)^\omega=(st)^\omega$), see~\cite{pinvars,pingoodref} for a proof. We are now able to present the two symmetrical characterizations of \tlfoc{\Cs} and~\tlpoc{\Cs}.

\begin{restatable}{theorem}{fmain} \label{thm:fmain}
  Let \Cs be a \vari, $L \subseteq A^*$ be a regular language and $\alpha: A^* \to M$ be its syntactic morphism. The two following properties are equivalent:
  \begin{enumerate}
    \item $L \in \tlfoc{\Cs}$.
    \item Every \Cs-orbit for $\alpha$ is \Lrel-trivial.
  \end{enumerate}
\end{restatable}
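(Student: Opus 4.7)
The proof will closely parallel that of Theorem~\ref{thm:utlmain}, making three systematic substitutions: \davar is replaced by $\Lrel$-triviality, Proposition~\ref{prop:ubptl} by Proposition~\ref{prop:lptl}, and Theorem~\ref{thm:ubpcar} by Proposition~\ref{prop:rbpcar}. The ``algebraic scaffolding'' (the auxiliary morphisms $\beta: B^* \to M$ and $\delta: B^* \to N$, the languages $K_\ell$, Lemma~\ref{lem:llpo}, the pointed decomposition) will transfer verbatim, since it makes no reference to temporal modalities.

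For $1 \Rightarrow 2$, the first step is to adapt Lemma~\ref{lem:operequ} to pure-future formulas: if $L \in \tlfoc{\Cs}$, there will exist a $\Cs$-morphism $\eta: A^* \to N$ and $k \in \nat$ such that $L$ is a union of classes of a finite-index relation $\cong^{f}_{k}$ built from $\tlfoa{\Cs_\eta}$-formulas of rank at most $k$. As announced in Section~\ref{sec:conn-rpol}, this is a corollary of results from~\cite{pzfpast}. I then need a future-only counterpart of Proposition~\ref{prop:efg}, asserting that for any $u, v, z \in \eta^{-1}(f)$ with $f \in E(N)$ and any $x, y \in A^*$,
\[
x(z^{k}uz^{2k}vz^{k})^{2k}y \;\cong^{f}_{k}\; x(z^{k}vz^{k})(z^{k}uz^{2k}vz^{k})^{2k}y.
\]
Applying the syntactic morphism $\alpha$ and multiplying by enough copies of $esete$ will then yield $(esete)^\omega = ete(esete)^\omega$ with $e = \alpha(z^k)$, $s = \alpha(u)$, $t = \alpha(v)$; by Lemma~\ref{lem:orbitmono}, this is exactly the $\Lrel$-triviality identity in the $\Cs$-orbit $M_e$.

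For $2 \Rightarrow 1$, the substitute for Lemma~\ref{lem:hypbeta} will read: if $(e, s)$ and $(e, t)$ are $\delta$-pairs for $\beta$, then $(esete)^{\omega+1} = ete(esete)^{\omega}$. This is a two-line algebraic consequence of the orbit hypothesis: exactly as in Lemma~\ref{lem:hypbeta}, both $ese$ and $ete$ lie in the $\Cs$-orbit $M_e$ of $e$ for $\alpha$, so $\Lrel$-triviality of $M_e$ gives $(esete)^\omega = ete(esete)^\omega$, while $\Lrel$-triviality also entails aperiodicity, hence $(esete)^{\omega+1} = (esete)^\omega$, combining to the claimed identity. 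Proposition~\ref{prop:bcase} will then be mirrored by verifying Equation~\eqref{eq:rbp} for the syntactic morphism $\gamma$ in place of Equation~\eqref{eq:ubp}, concluding via Proposition~\ref{prop:rbpcar} that $K_\ell \cap \beta^{-1}(s) \cap \delta^{-1}(r) \in \rdet{\bpol{\cdelta}}$, which sits inside $\tlfoc{\cdelta}$ by Proposition~\ref{prop:lptl}. The existing padding construction already forces $\delta(u) = \delta(v_1) = \delta(v_2)$, so the adapted Lemma~\ref{lem:hypbeta} applies without change. Proposition~\ref{prop:pmain} and the surrounding machinery (Lemmas~\ref{lem:checkdist}--\ref{lem:checkletters} and the operators $\crochi{\cdot}$, $\crocha{\cdot}$, $\cfinal{\cdot}$) will carry over, and a routine structural induction will confirm that all constructed formulas remain in $\tlfoa{\cdelta}$: past modalities are introduced in $\cfinal{\cdot}$, $\crochi{\cdot}$, $\crocha{\cdot}$ only when processing a past modality in the input formula, which never occurs when the input is pure-future.

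The main obstacle will be the future-only refinement of Proposition~\ref{prop:efg}. Symmetry no longer allows me to insert the extra copy of $z^{k}vz^{k}$ in the middle of a symmetric expression; it must go at the left end of $(z^{k}uz^{2k}vz^{k})^{2k}$ so that, after applying $\alpha$, the $ete$-factor lands on the left, matching the $\Lrel$-triviality law $t(st)^\omega = (st)^\omega$. Justifying the resulting equivalence under pure-future $\Cs_\eta$-formulas of rank at most $k$ requires a delicate combinatorial argument in the spirit of~\cite{pzfpast}, which we defer to the appendix.
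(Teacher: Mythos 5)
Your plan is correct and coincides with the paper's own proof: the same three substitutions, the same left-end insertion (both in the pure-future analogue of Proposition~\ref{prop:efg} and in the mirrored version of Proposition~\ref{prop:bcase}, which targets~\eqref{eq:rbp} via Propositions~\ref{prop:rbpcar} and~\ref{prop:lptl}), and the same observation that the constructions behind Proposition~\ref{prop:pmain} introduce past modalities only when the input formula contains them. The single step you defer, namely $(z^{k}uz^{2k}vz^{k})^{k} \tlfeqk z^kvz^k(z^{k}uz^{2k}vz^{k})^{k}$ for rank-$k$ pure-future formulas, is exactly the paper's Proposition~\ref{prop:fefg}, which is proved there by a direct game-style induction on the rank (the ``$\ell$-candidate'' argument of Lemma~\ref{lem:fefg}), combined with the congruence property of Lemma~\ref{lem:tlfcong} to reinstate the outer contexts $x,y$; your statement of that missing lemma (and your one-sided strengthening of Lemma~\ref{lem:hypbeta}, equivalent to the paper's Lemma~\ref{lem:hypbeta2} since \Lrel-trivial monoids are aperiodic) is the right one.
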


\begin{restatable}{theorem}{pastmain} \label{thm:pastmain}
  Let \Cs be a \vari, $L \subseteq A^*$ be a regular language and $\alpha: A^* \to M$ be its syntactic morphism. The two following properties are equivalent:
  \begin{enumerate}
    \item $L \in \tlpoc{\Cs}$.
    \item Every \Cs-orbit for $\alpha$ is \Rrel-trivial.

  \end{enumerate}

\end{restatable}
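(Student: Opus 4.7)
The plan is to mirror the proof of Theorem~\ref{thm:utlmain}, symmetrically to Theorem~\ref{thm:fmain}, replacing the tools based on unambiguous polynomial closure with the left-polynomial-closure counterparts from Section~\ref{sec:conn-rpol} and swapping future modalities for past ones throughout.

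For the direction $1\Rightarrow 2$, I invoke past-only analogs of Lemma~\ref{lem:operequ} and Proposition~\ref{prop:efg}, obtainable from~\cite{pzfpast} as outlined at the start of Section~\ref{sec:fpcar}. Assuming $L\in\tlpoc\Cs$, the analog of Lemma~\ref{lem:operequ} yields a \Cs-morphism $\eta:A^*\to N$ and a finite-index equivalence relation on $A^*$ refining the syntactic congruence of $L$. Fixing $e\in E(M)$ and two elements $ese,ete$ in its \Cs-orbit $M_e$, I pick witnesses $z,u,v\in A^*$ with $\eta(z)=\eta(u)=\eta(v)$, $\alpha(z)=e$, $\alpha(u)=s$ and $\alpha(v)=t$, exactly as in the proof of Theorem~\ref{thm:utlmain}. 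Projecting the past analog of Proposition~\ref{prop:efg} via $\alpha$ and letting $k$ be sufficiently large (a multiple of $\omega(M)$) should then yield the \Rrel-trivial identity $(esete)^\omega ese=(esete)^\omega$ on the orbit. Since this holds for all pairs $ese,ete\in M_e$, the orbit $M_e$ is \Rrel-trivial by the characterization recalled before the theorem statement.

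For the direction $2\Rightarrow 1$, I reuse the entire scaffolding of the proof of Theorem~\ref{thm:utlmain}: the extended alphabet $B$, the morphisms $\beta:B^*\to M$ and $\delta:B^*\to N$, the class \cdelta, Lemma~\ref{lem:beta} (whose proof is unchanged since the reduction preserves past modalities), and the hierarchy of languages $K_\ell$ together with the depth function $d_{\Jrel}$. Two local substitutions are then needed. First, Lemma~\ref{lem:hypbeta} is replaced by the identity $(esete)^{\omega+1}=(esete)^\omega ese$ for every $e\in E(M)$ and all $s,t\in M$ such that $(e,s)$ and $(e,t)$ are $\delta$-pairs for $\beta$. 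Here the hypothesis is used: such pairs are \Cs-pairs for $\alpha$, so $ese$ and $ete$ both lie in $M_e$, which is \Rrel-trivial by assumption and hence aperiodic (an \Rrel-trivial finite monoid is aperiodic); both sides of the equation therefore collapse to $(esete)^\omega$. Second, Proposition~\ref{prop:bcase} is replaced by the statement $K_\ell\cap\beta\inv(s)\cap\delta\inv(r)\in\tlpoc\cdelta$ whenever $\dpj{r}\le\ell$. Its proof invokes Proposition~\ref{prop:lbpcar} in place of Theorem~\ref{thm:ubpcar} to obtain membership in $\ldet{\bpol\cdelta}$, and then Proposition~\ref{prop:rptl} to conclude $\tlpoc\cdelta$-membership. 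The syntactic equation to verify on the syntactic morphism $\gamma$ of the base language becomes $(fq_1fq_2f)^{\omega+1}=(fq_1fq_2f)^\omega fq_1f$ for every \cdelta-pair $(f,q_2)$ and every $q_1\in Q$; the two witness words are constructed as in the original proof with the roles of $q_1$ and $q_2$ swapped, and the new version of Lemma~\ref{lem:hypbeta} supplies the required $\beta$-identity.

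Finally, the induction of Proposition~\ref{prop:pmain} is repeated to produce \tlpoa\cdelta formulas evaluated at the rightmost position. Lemmas~\ref{lem:checkdist}, \ref{lem:constrain}, \ref{lem:checkletters} and \ref{lem:bprops1} transfer without conceptual change; in each construction the future-oriented variants ($\xfinallyp{H}{\cdot}$, $\crochi{\cdot}$ and the future branches of $\cfinal{\cdot}$) are dropped, keeping only $\xfinallymp{H}{\cdot}$, $\crocha{\cdot}$ and the past branches. The main obstacle I anticipate is in the induction step of Proposition~\ref{prop:pmain}: each past modality on the abstracted word $\wch{w}$ must be rewritten into a past modality on $w$ using only $\xfinallymp{B^*}{(\pi\wedge\cdots)}$-style ``jumps to the previous pointed position''. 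Given the symmetry with the construction in the proof of Theorem~\ref{thm:utlmain}, this should amount to routine bookkeeping.
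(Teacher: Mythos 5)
Your proposal is correct and matches the paper's intent exactly: the paper itself omits the proof of Theorem~\ref{thm:pastmain}, declaring it symmetrical to Theorem~\ref{thm:fmain}, and your mirrored argument uses precisely the intended substitutes (Proposition~\ref{prop:rptl} and Proposition~\ref{prop:lbpcar} in place of Propositions~\ref{prop:lptl} and~\ref{prop:rbpcar}, the appended-block analog of Proposition~\ref{prop:efg} yielding $(st)^\omega s=(st)^\omega$ on the orbits, and the modified Lemma~\ref{lem:hypbeta} identity $(esete)^{\omega+1}=(esete)^\omega ese$, which indeed follows from \Rrel-triviality plus aperiodicity). The only caveat, which you essentially acknowledge, is that the final induction requires the fully mirrored decomposition (pointed positions detected by looking backward, factors attached to the left of pointed letters, evaluation at the rightmost position) rather than merely deleting future-oriented pieces of the existing construction, but this is routine symmetry.
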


Since \tlfoc{\Cs} and \tlpoc{\Cs} are symmetrical, it is natural to consider a third class denoted $\tlfoc{\Cs} \cap \tlpoc{\Cs}$. It consists of all languages belonging simultaneously to \tlfoc{\Cs} and \tlpoc{\Cs}. It is standard that the finite monoids which are both \Lrel-trivial and \Rrel-trivial are exactly the \Jrel-trivial monoids (see~\cite{pinvars,pingoodref}). This yields the following corollary of Theorems~\ref{thm:fmain} and~\ref{thm:pastmain}.

\begin{restatable}{corollary}{intmain} \label{cor:intmain}
  Let \Cs be a \vari, $L \subseteq A^*$ be a regular language and $\alpha: A^* \to M$ be its syntactic morphism. The two following properties are equivalent:
  \begin{enumerate}
    \item $L \in \tlfoc{\Cs} \cap \tlpoc{\Cs}$.
    \item Every \Cs-orbit for $\alpha$ is \Jrel-trivial.
  \end{enumerate}

\end{restatable}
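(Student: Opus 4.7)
The plan is to derive this corollary by directly combining Theorem~\ref{thm:fmain} and Theorem~\ref{thm:pastmain} with the standard algebraic fact that a finite monoid is \Jrel-trivial if and only if it is both \Lrel-trivial and \Rrel-trivial (cited from~\cite{pinvars,pingoodref} in the paragraph preceding the corollary).

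First, I would unfold the definition: $L \in \tlfoc{\Cs} \cap \tlpoc{\Cs}$ means $L \in \tlfoc{\Cs}$ and $L \in \tlpoc{\Cs}$. Applying Theorem~\ref{thm:fmain} to the first membership yields that every \Cs-orbit for $\alpha$ is \Lrel-trivial, and applying Theorem~\ref{thm:pastmain} to the second yields that every \Cs-orbit for $\alpha$ is \Rrel-trivial. These two conclusions are obtained on the \emph{same} family of \Cs-orbits (those of $\alpha$), so the conjunction simply says that every \Cs-orbit for $\alpha$ is simultaneously \Lrel-trivial and \Rrel-trivial.

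By Lemma~\ref{lem:orbitmono}, each \Cs-orbit is a (finite) monoid, so the standard equivalence between \Jrel-triviality and the conjunction of \Lrel-triviality and \Rrel-triviality applies to it. Thus (1) holds if and only if every \Cs-orbit for $\alpha$ is \Jrel-trivial, which is (2). There is essentially no obstacle: the work has already been done in Theorems~\ref{thm:fmain} and~\ref{thm:pastmain}, and the corollary is just the intersection of these two characterizations combined with a textbook Green-relations fact.
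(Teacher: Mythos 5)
Your argument is exactly the one the paper uses: the corollary is obtained by intersecting the characterizations of Theorems~\ref{thm:fmain} and~\ref{thm:pastmain} and invoking the standard fact that a finite monoid is \Jrel-trivial if and only if it is both \Lrel-trivial and \Rrel-trivial, applied to each \Cs-orbit (which is a finite monoid by Lemma~\ref{lem:orbitmono}). The proposal is correct and matches the paper's proof.
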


Recall that given a regular language $L \subseteq A^*$ as input, its syntactic morphism $\alpha: A^* \to M$ can be computed. Moreover, Lemma~\ref{lem:orbitcomp} implies that all \Cs-orbits for $\alpha$ can be computed when \mbox{\Cs-separation} is decidable. Thus, the three above characterizations yield the following~corollary.

\begin{restatable}{corollary}{fpmain} \label{cor:fpmain}
  Let \Cs be a \vari with decidable separation. Then, the classes \tlfoc{\Cs}, \tlpoc{\Cs} and $\tlfoc{\Cs} \cap \tlpoc{\Cs}$ have decidable membership.
\end{restatable}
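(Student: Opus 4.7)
The plan is to chain together the three characterization theorems with the computability of \Cs-orbits granted by Lemma~\ref{lem:orbitcomp}. Given as input an arbitrary regular language $L \subseteq A^*$ (say, presented by a \dfa or a morphism into a finite monoid), I would first compute the syntactic morphism $\alpha : A^* \to M$, which is a standard effective construction from any finite representation of $L$ via the Myhill--Nerode theorem. Since $M$ is finite, one can explicitly enumerate the set $E(M)$ of idempotents.

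Next, for each $e \in E(M)$, I would invoke Lemma~\ref{lem:orbitcomp}, whose hypothesis of decidable \Cs-separation is exactly the assumption of the corollary. This lemma delivers the \Cs-orbit $M_e \subseteq M$ as an effectively computed finite set. By Lemma~\ref{lem:orbitmono}, each $M_e$ is a monoid with identity $e$, so its Green relations \Lrel, \Rrel, \Jrel are all decidable from the finite multiplication table inherited from $M$.

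With all \Cs-orbits in hand, the three membership problems reduce to finite property checks. For \tlfoc{\Cs}, Theorem~\ref{thm:fmain} asserts $L \in \tlfoc{\Cs}$ iff every $M_e$ is \Lrel-trivial, so one tests for each $e\in E(M)$ and each pair $s,t \in M_e$ whether $s \Lrel t$ implies $s = t$. Symmetrically, Theorem~\ref{thm:pastmain} reduces \tlpoc{\Cs}-membership to checking \Rrel-triviality of each $M_e$, and Corollary~\ref{cor:intmain} reduces $\tlfoc{\Cs}\cap\tlpoc{\Cs}$-membership to checking \Jrel-triviality of each $M_e$. All three properties are decidable on a finite monoid, so membership is decidable in each case.

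There is no genuine obstacle here: the three characterizations are already proved, Lemma~\ref{lem:orbitcomp} provides the only non-trivial computational ingredient, and all remaining steps are finite inspections of a finite monoid. The only point that deserves a brief justification in the written proof is that Green's relations are decidable on a finite monoid $N$ (one simply tests whether $s \in Nt$, $s \in tN$, or $s \in NtN$ by iterating over $N$), which together with the effective enumeration of $E(M)$ and of pairs in each $M_e$ makes the whole procedure terminate.
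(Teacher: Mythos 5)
Your proposal is correct and follows the same route as the paper: compute the syntactic morphism, use Lemma~\ref{lem:orbitcomp} (via decidable \Cs-separation) to compute the \Cs-orbits, then apply Theorems~\ref{thm:fmain} and~\ref{thm:pastmain} and Corollary~\ref{cor:intmain} and check \Lrel-, \Rrel- and \Jrel-triviality of each orbit, which are finite, decidable checks. Nothing is missing.
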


We prove Theorem~\ref{thm:fmain} in the appendix (on the other hand, we omit the proof of Theorem~\ref{thm:pastmain}, which is~symmetrical).

\section{Conclusion}
\label{sec:conc}
We presented generic characterizations of the classes \tlc{\Cs}, \tlfoc{\Cs} and \tlpoc{\Cs}. While~the proofs are complex, the statements are simple and elegant. They generalize in a natural way all known characterizations of classes built with these operators. As a corollary, we obtained that if \Cs is a \vari with decidable \emph{separation}, then all classes \tlc{\Cs}, \tlfoc{\Cs} and \tlpoc{\Cs} \mbox{have decidable \emph{membership}.}

The next step is to tackle \emph{separation}. This question is difficult in general, but it is worth looking at \emph{particular} input classes. For instance, one can define the \tls-hierarchy of basis~\Cs: level $0$ is $\tlnc{0}{\Cs}=\Cs$ and level $n\geq1$ is $\tlnc{n}{\Cs}=\tlc{\tlnc{n-1}{\Cs}}$. It can be shown that the hierarchies of bases $\stzer = \{\emptyset,A^*\}$ and $\dotzer=\{\emptyset,\{\veps\},A^+,A^*\}$ are strict. Thus, since $\bpol\Cs\subseteq\tlc\Cs$, they both classify the star-free languages (or equivalently the languages definable in full linear temporal logic). We already know that in both hierarchies, membership is decidable for levels 1 (\emph{i.e.}, the variants \tls and \tlxs of unary temporal logic) and 2 (which were studied in~\cite{between}). The results of the present paper show that if \tlnc{2}{\stzer} and \tlnc{2}{\dotzer} have decidable separation, then \tlnc{3}{\stzer} and \tlnc{3}{\dotzer} would have decidable~membership.

Finally, all other major operators have language-theoretic counterparts. Another possible follow-up is to look for such a definition for all three operators $\Cs\mapsto\tlc{\Cs}, \tlfoc{\Cs}$ and $\tlpoc{\Cs}$.

\bibliography{main}

\newpage
\appendix

\section{Appendix to Section~\ref{sec:fpcar}}
\label{app:fpcar}
This appendix is devoted to the proof of Theorem~\ref{thm:fmain} (on the other hand, we do not prove Theorem~\ref{thm:pastmain}, since it is symmetrical). The proof involves some preliminary work because we have to adapt the results that we presented for the classes \tlc{\Cs} in Section~\ref{sec:utl} to the classes $\tlfoc{\Cs}$. More precisely, we first prove a characteristic property of the classes \tlfoc{\Cs} in Section~\ref{sec:char-FL}. Section~\ref{sec:conn-rpol} is devoted to results that serve as replacements for Theorems~\ref{prop:ubptl} and~\ref{thm:ubpcar}. Finally, the characterization of \tlfoc\Cs is presented in Section~\ref{sec:proof-theo-fl}.

\subsection{Characteristic property of the classes \tlfoc{\Cs}}\label{sec:char-FL}

We associate canonical equivalences to the classes \tlfoc{\Cs} where \Cs is a \vari. The definition is taken from~\cite{pzfpast} and is similar to those defined for the classes \tlc{\Cs}. In particular, recall that given a morphism $\eta: A^*\to N$ into a finite monoid $N$, we write $\Cs_\eta$ for the class of all languages recognized by $\eta$. We also reuse the notion of rank that we defined for temporal formulas in Section~\ref{sec:utlcar}.

Consider a morphism $\eta: A^* \to N$ into a finite monoid and $k \in \nat$. Let $\eta: A^* \to N$  Given $w,w' \in A^*$, $i \in \pos{w}$ and $i'\in \pos{w'}$, we write, $w,i \tlfeqk w',i'$ when:
\[
  \text{For every \tlfoa{\Cs_\eta} formula $\varphi$ with rank at most $k$,} \quad w,i \models \varphi \Longleftrightarrow w',i' \models \varphi.
\]
The relations \tlfeqk are equivalences. It is also immediate from the definition and Lemma~\ref{lem:rank}, that they have finite index. Finally, we also introduce equivalences which compare single words of $A^*$. Abusing terminology, we also write them \tlfeqk. Given $w,w' \in A^*$, we write $w \tlfeqk w'$ if $w,0 \tlfeqk w',0$. Clearly, the relation $\tlfeqk$ is an equivalence relation over $A^{*}$. We use it to characterize the classes \tlfoc{\Cs} when the class \Cs is a \emph{\vari}. More precisely, we have the following lemma whose proof is identical to that of Lemma~\ref{lem:operequ}.

\begin{restatable}{lemma}{operequf}\label{lem:operequf}
  Let \Cs be a \vari and $L \subseteq A^*$. Then, $L \in \tlfoc{\Cs}$ if and only if there exists a \Cs-morphism $\eta: A^* \to N$ and $k \in \nat$ such that $L$ is a union of \tlfeqk-classes.
\end{restatable}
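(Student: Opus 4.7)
The plan is to prove both directions of the equivalence, using the same syntactic pattern as Lemma~\ref{lem:operequ} for the forward implication and supplementing it with a standard normal-form argument for the converse.

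For the forward direction, suppose $L \in \tlfoc{\Cs}$, so that $L = L_{min}(\varphi)$ for some $\varphi \in \tlfoa{\Cs}$. The proof of Fact~\ref{fct:esuit} is purely syntactic: it inspects the languages appearing in the temporal modalities of the input formula and produces a \Cs-morphism recognizing all of them. Applied to $\varphi \in \tlfoa{\Cs}$, the same construction yields a \Cs-morphism $\eta: A^* \to N$ such that $\varphi \in \tlfoa{\Cs_\eta}$. Let $k$ be the rank of $\varphi$. For any $w,w' \in A^*$ with $w \tlfeqk w'$, the definition of \tlfeqk gives $w,0 \models \varphi \Leftrightarrow w',0 \models \varphi$, so $w \in L \Leftrightarrow w' \in L$. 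Hence $L$ is a union of \tlfeqk-classes.

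For the converse, suppose that $L$ is a union of \tlfeqk-classes for some \Cs-morphism $\eta: A^* \to N$ and some $k \in \nat$. The induction used to prove Lemma~\ref{lem:rank} does not introduce any past modality, so the same statement applies to pure-future formulas: there are only finitely many \tlfoa{\Cs_\eta} formulas of rank at most $k$ up to equivalence. Fix representatives $\psi_1,\ldots,\psi_n$ and, for each $w \in A^*$, set
\[
  \tau_w \;=\; \bigwedge_{i\,:\,w,0\models\psi_i} \psi_i \;\wedge\; \bigwedge_{i\,:\,w,0\not\models\psi_i} \neg \psi_i.
\]
Then for every $w' \in A^*$, one has $w' \tlfeqk w$ if and only if $w',0 \models \tau_w$. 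Since \tlfeqk has finite index, only finitely many distinct formulas $\tau_w$ arise; keeping one representative per class whose class is included in $L$, define $\varphi$ as the finite disjunction of the corresponding $\tau_w$. This $\varphi$ is a \tlfoa{\Cs_\eta} formula, and since $\eta$ is a \Cs-morphism, every language in $\Cs_\eta$ belongs to \Cs, so $\varphi \in \tlfoa{\Cs}$. By construction, $L_{min}(\varphi)$ equals the union of the \tlfeqk-classes included in $L$, which is $L$ itself. Hence $L \in \tlfoc{\Cs}$.

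No serious obstacle is anticipated: both halves rely on the observation that Fact~\ref{fct:esuit} and Lemma~\ref{lem:rank} are proved by a syntactic induction on formulas that never manipulates past modalities, so their statements carry over verbatim with \tla{\cdot} replaced by \tlfoa{\cdot}. The only point requiring mild care is checking that the finite-disjunction $\varphi$ built in the converse genuinely lies in \tlfoa{\Cs}, which follows from the \Cs-morphism assumption on $\eta$ and the closure of \tlfoa{\Cs} under Boolean connectives.
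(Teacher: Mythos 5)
Your proof is correct and, for the forward direction, coincides with the paper's argument: the paper simply declares the proof identical to that of Lemma~\ref{lem:operequ}, which is precisely the Fact~\ref{fct:esuit}-plus-rank reasoning you give (note that a pure-future formula stays pure-future when reread over $\Cs_\eta$, so no real adaptation of Fact~\ref{fct:esuit} is needed). The converse, which the paper leaves implicit, is the standard finite-index/type-formula argument you supply, and it is sound: each \tlfeqk-class is defined by its type formula $\tau_w$ (of rank at most $k$, built from the finitely many representatives given by the pure-future analogue of Lemma~\ref{lem:rank}), these formulas lie in \tlfoa{\Cs} because $\eta$ is a \Cs-morphism, and $L$ is the finite disjunction of the types of the classes it contains.
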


We now present a few key lemmas that we shall use to prove the characteristic property of the classes \tlfoc{\Cs}. The first one states that the equivalences \tlfeqk are congruences. This can be verified from the definition.

\begin{restatable}{lemma}{tlfcong} \label{lem:tlfcong}
  Consider a morphism $\eta: A^* \to N$ into a finite monoid and $k \in \nat$. For every $u,v,u',v' \in A^*$ such that $u \tlfeqk u'$ and $v \tlfeqk v'$, we have $uv \tlfeqk u'v'$.
\end{restatable}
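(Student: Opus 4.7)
My plan is to reduce the congruence by transitivity to two substitution claims: given $u \tlfeqk u'$ and $v \tlfeqk v'$, I would prove separately the suffix substitution $u'v \tlfeqk u'v'$ and the prefix substitution $uv \tlfeqk u'v$, composing them to obtain $uv \tlfeqk u'v'$.

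For the suffix substitution, I would establish the stronger statement that for every word $w$, every position $i \in \pos{w}$ with $i \leq |w|$, and every $v \tlfeqk v'$, we have $wv, i \tlfeqk wv', i$. This goes by structural induction on a pure-future formula $\varphi$ of rank $\leq k$: the atomic and Boolean cases are routine, since at positions $i \leq |w|$ the labels in $wv$ and $wv'$ coincide; and for $\varphi = \finallyl{\psi}$ I split the witness position $j$ of the modality into the range $j \leq |w|$, where the infix is intrinsic to $w$ and the structural IH applied to $\psi$ at position $j$ closes the case, and the range $j > |w|$, where a standard forward-locality fact ($wv, |w|+j' \models \psi$ iff $v, j' \models \psi$ for $j' \geq 1$, proved by an easy structural induction on $\psi$) together with the left quotient $(\infix{w}{i}{|w|+1})^{-1}L \in \Cs_\eta$ (valid because $\Cs_\eta$ is a \vari) rewrites the condition as a pure-future formula $\finallyp{L''}{\psi}$ of rank $\leq k$ over $v$, preserved by $v \tlfeqk v'$.

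For the prefix substitution, I would prove an absorption sub-claim by structural induction on the formula: for every $v \in A^*$ and every pure-future $\psi$ of rank $\leq k$, there exists a pure-future $\psi_v$ of rank $\leq k$ such that $uv, j \models \psi$ iff $u, j \models \psi_v$ for all $u \in A^*$ and all $j \in \pos{u} \setminus \{|u|+1\}$. Instantiating at $j = 0$ with $\psi = \varphi$ yields $uv, 0 \models \varphi$ iff $u, 0 \models \varphi_v$, which the hypothesis $u \tlfeqk u'$ then transfers to $u'v$. The only nontrivial case of the sub-claim is $\psi = \finallyp{L'}{\psi_0}$: I split the inner witness $j''$ by whether $j'' \leq |u|$ (handled by the structural IH on $\psi_0$, giving $u, j'' \models \psi_{0,v}$) or $j'' \geq |u|+1$. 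For the latter, I would compile the condition by noting that the relevant left quotient $(\infix{u}{j}{|u|+1})^{-1} L'$ depends only on $\eta(\infix{u}{j}{|u|+1}) \in N$; by the finiteness of $N$, I can pre-compute, for each $s \in N$, whether $v$ satisfies the shifted condition $\finallyp{\eta^{-1}(s^{-1}\eta(L'))}{\psi_0}$, collecting the positives into a set $S_v \subseteq N$. The condition on $u$ at position $j$ is then captured by the rank-$1$ formula $\finallyp{\eta^{-1}(S_v)}{\max}$, and the final choice $\psi_v = \finallyp{L'}{(\neg \max \wedge \psi_{0,v})} \vee \finallyp{\eta^{-1}(S_v)}{\max}$ has rank $\leq k$ and uses only languages in $\Cs_\eta$.

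The main obstacle is the compilation step in the absorption sub-claim: bridging the prefix $u$ and the absorbed suffix $v$ via a finite $\eta$-type, while ensuring that the rank of $\psi_v$ does not exceed that of $\psi$, hinges crucially on the finiteness of the codomain $N$ and the closure of $\Cs_\eta$ under left quotients.
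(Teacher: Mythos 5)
Your proposal is correct. Note that the paper itself offers no argument for this lemma: it simply states that the congruence property ``can be verified from the definition'', so there is no official proof to compare against. Your elaboration is the natural way to carry out that verification: splitting the claim into a suffix substitution ($u'v \tlfeqk u'v'$) and a prefix substitution ($uv \tlfeqk u'v$) and composing by transitivity, then handling each by relativizing the modalities. The points you isolate are precisely the ones that need checking and they all go through: the forward-locality fact holds because \tlfoa{\Cs_\eta} formulas are pure-future, so satisfaction at a position in the $v$-part of $wv$ depends only on $v$; the left quotients $x^{-1}L$ of languages $L \in \Cs_\eta$ stay in $\Cs_\eta$ (they are recognized by $\eta$, and indeed depend only on $\eta(x)$, which is what lets you precompute the finite set $S_v \subseteq N$); and your rewritings $\finallyp{L'}{(\neg max \wedge \psi_{0,v})} \vee \finallyp{\eta\inv(S_v)}{max}$ and $\finallyp{x^{-1}L}{\psi}$ never increase the rank, so the equivalences at rank $k$ are preserved on both sides, as required. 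The only mild caveat is bookkeeping of the unlabeled border positions (excluding $j = |u|+1$ in the absorption claim and using $\neg max$ to confine witnesses to the prefix), which you handle explicitly; so the argument is complete as a filling-in of the paper's omitted routine verification.
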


Moreover, we have the following useful lemma, which involves auxiliary alphabets. It can also be verified from the definitions.

\begin{restatable}{lemma}{fpmorph} \label{lem:fpmorph}
  Let $A,B$ be two alphabets and let $\gamma: B^* \to A^*$ be a morphism. Let $\eta: A^* \to N$ be a morphism into a finite monoid and let $\delta= \eta \circ \gamma: B^* \to N$. For all $w,w' \in B^*$ and $k \in \nat$ such that $w \tlfdqk w'$, we have $\gamma(w) \tlfeqk \gamma(w')$.
\end{restatable}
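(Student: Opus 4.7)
The plan is to construct a rank-preserving translation from \tlfoa{\Cs_\eta} formulas to \tlfoa{\Cs_\delta} formulas. Specifically, for each \tlfoa{\Cs_\eta} formula $\varphi$ of rank at most $k$, I would build a \tlfoa{\Cs_\delta} formula $\widetilde{\varphi}$ of rank at most $k$ such that for every $u \in B^*$, $u,0 \models \widetilde{\varphi} \Leftrightarrow \gamma(u),0 \models \varphi$. Once such a translation is in place, the lemma follows immediately: assuming $w \tlfdqk w'$, the definition of \tlfdqk gives $w,0 \models \widetilde{\varphi} \Leftrightarrow w',0 \models \widetilde{\varphi}$ for every \tlfoa{\Cs_\eta} formula $\varphi$ of rank at most $k$, and the semantic correspondence then yields $\gamma(w),0 \models \varphi \Leftrightarrow \gamma(w'),0 \models \varphi$, whence $\gamma(w) \tlfeqk \gamma(w')$.

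The translation is defined by structural induction on $\varphi$, but a subtlety arises because nested modalities reach internal positions of $\gamma(u)$ that do not correspond one-to-one with positions of $u$. Concretely, writing $\gamma(u) = \gamma(b_1)\cdots\gamma(b_n)$, a position $j \in \pos{\gamma(u)}$ is either $0$, or $|\gamma(u)|+1$, or lies inside a block $\gamma(b_i)$, in which case it is characterized by the pair $(b,m)$ with $b = \wpos{u}{i}$ and $1 \leq m \leq |\gamma(b)|$. I would therefore produce, in parallel, three kinds of translations of each formula $\varphi$: one per kind of position, the ``internal'' one parameterized by $(b,m)$ for every $b \in B$ and $1 \leq m \leq |\gamma(b)|$. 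Each translation is of rank at most that of $\varphi$, and it is designed to be evaluated in $u$ respectively at position $0$, at position $|u|+1$, or at a position of $u$ labeled by $b$, with semantic agreement with $\varphi$ evaluated at the modeled position of $\gamma(u)$. Atomic formulas and Boolean connectives are handled trivially.

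The delicate case is $\varphi = \finallyp{L}{\psi}$. Writing $L = \eta\inv(F)$ for some $F \subseteq N$ (since $L \in \Cs_\eta$), the evaluation of this modality at a position $j_0 \in \pos{\gamma(u)}$ considers targets $j > j_0$ in one of three regions: the same block as $j_0$ (possible only in the internal variant), a later block (yielding a new pair $(b',m')$), or $j = |\gamma(u)|+1$. In the latter two cases, the infix $\infix{\gamma(u)}{j_0}{j}$ factors as a fixed suffix of the source block (whose $\eta$-image is a constant $s \in N$), followed by the $\gamma$-image of a corresponding infix of $u$, followed by a fixed prefix of the target block (whose $\eta$-image is a constant $t \in N$). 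Hence ``$\infix{\gamma(u)}{j_0}{j} \in L$'' becomes ``the corresponding infix of $u$ belongs to $\delta\inv(s\inv F t\inv) \in \Cs_\delta$'', a condition expressible by a \tlfoa{\Cs_\delta} modality $\finallyp{L'}{(\ell \wedge \chi)}$, where $L' \in \Cs_\delta$, $\ell$ is a label-forcing Boolean combination of atomic formulas, and $\chi$ is the appropriate variant of the translation of $\psi$. The same-block sub-case is handled without any new modality, by a disjunction over the $(b,m')$-variants of the translation of $\psi$, restricted to those $m'$ for which the intra-block infix belongs to $L$. Taking the disjunction over all finitely many choices of target block letter, internal offset, and the end-of-word option yields the desired translation, adding exactly one to the rank and keeping the formula within \tlfoa{\Cs_\delta} (only future modalities are introduced). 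The main obstacle is this case analysis, together with the careful formation of the quotient languages in $\Cs_\delta$; once done, the induction goes through smoothly and the lemma follows as described.
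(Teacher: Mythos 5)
Your proposal is correct. The paper does not actually prove this lemma (it is merely asserted to be verifiable from the definitions), and your rank-preserving translation of \tlfoa{\Cs_\eta} formulas into \tlfoa{\Cs_\delta} formulas---with separate variants for the two unlabeled positions and for block-internal positions $(b,m)$, and with the quotient languages $\delta\inv(\{x\in N\mid sxt\in F\})\in\Cs_\delta$ handling infixes that straddle block boundaries---is exactly the kind of verification intended. The two points that matter, and which you get right, are that the translation introduces only future modalities and adds at most one to the rank per translated modality (the same-block case costing nothing), so that $w \tlfdqk w'$ indeed transfers to $\gamma(w) \tlfeqk \gamma(w')$.
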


Finally, we adapt Proposition~\ref{prop:efg} to the equivalences \tlfeqk.

\begin{restatable}{proposition}{fefg} \label{prop:fefg}
  Consider a morphism $\eta: A^* \to N$ into a finite monoid, let $e \in E(N)$ be an idempotent and let $u,v,z \in \eta\inv(e)$. For every $k \in \nat$, the following property holds:
  \[
    (z^{k}uz^{2k}vz^{k})^{k} \tlfeqk z^kvz^k(z^{k}uz^{2k}vz^{k})^{k}.
  \]
\end{restatable}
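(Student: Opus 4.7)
Set $W_1 := (z^k u z^{2k} v z^k)^k$ and $W_2 := z^k v z^k \cdot W_1$. Since $u,v,z$ all map to the idempotent $e$ under $\eta$, both $P := z^k v z^k$ and $B := z^k u z^{2k} v z^k$ satisfy $\eta(P) = \eta(B) = e$. The goal is to show that positions $0$ of $W_1$ and $W_2$ agree on all pure-future formulas of rank at most $k$, which is the natural ``future-only'' analogue of Proposition~\ref{prop:efg}.

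My approach is an Ehrenfeucht--Fra\"iss\'e-style argument tailored to the operator $\textup{F}_H$. I will build a descending family of relations $R_0 \supseteq R_1 \supseteq \cdots \supseteq R_k$ between $\pos{W_1}$ and $\pos{W_2}$ enjoying the back-and-forth property: whenever $(p,p') \in R_r$ with $r \geq 1$ and $q > p$ in $\pos{W_1}$, there exists $q' > p'$ in $\pos{W_2}$ with $(q,q') \in R_{r-1}$ and $\eta(\infix{W_1}{p}{q}) = \eta(\infix{W_2}{p'}{q'})$, and symmetrically. Because every $H \in \Cs_\eta$ is a union of $\eta$-classes, equal $\eta$-images of infixes ensure simultaneous membership in $H$; a standard induction on formula structure then yields that any pair in $R_r$ satisfies the same pure-future formulas of rank at most $r$, and $(0,0) \in R_k$ gives the conclusion.

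Constructing the $R_r$ is the crux. Outside the first copy of $B$ at the start of $W_1 = B^k$, the correspondence will simply be the shift by $|P|$, matching position $j$ of $W_1$ with position $j + |P|$ of $W_2$. For position $0$ and positions inside that first copy, the extra $P$-prefix of $W_2$ must be absorbed: idempotency of $e$ together with the abundance of $z^k$-factors acts as a pumping device, since inserting or removing an $e$-valued factor in an infix can be compensated by shifting the jump's target across a full copy of $B$ (whose $\eta$-image is also $e$). Each such shift preserves labels and the relation at rank $r-1$ as long as enough blocks remain to the right; the outer exponent $k$ in $W_1 = B^k$ allows up to $k$ such shifts, one per level of induction, while the inner $z^{2k}$ provides the padding around $u$ needed to absorb and generate $e$-factors cleanly.

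The main obstacle will be showing that the compensation shifts are mutually consistent across all $r \leq k$ and both directions of the back-and-forth, so that every shift used to match $\eta$-images still lands at a position having a partner at the lower rank. This is exactly why the word is designed with the specific exponents $k$, $2k$, $k$ and outer power $k$ -- the same pattern used in the construction supporting Proposition~\ref{prop:efg} -- and my proof will proceed in parallel with that full-temporal-logic argument, simply dropping the redundant prefix parameter $x$ (evaluation takes place at position $0$, so there is nothing to the left) and restricting to the future modality.
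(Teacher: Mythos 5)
Your high-level strategy (a rank-indexed back-and-forth family $R_k \supseteq \cdots \supseteq R_0$ of position pairs preserving labels and the $\eta$-images of the skipped infixes) is indeed the same kind of argument the paper uses: the paper's proof reduces to a three-letter alphabet via Lemma~\ref{lem:fpmorph} and then proves Lemma~\ref{lem:fefg}, where the role of your $R_\ell$ is played by the notion of \emph{$\ell$-candidate} (both words decompose as $w_1(xy)^m$ and $w'_1(xy)^m$ with $m\geq\ell$ trailing blocks, the positions lie in the prefix parts, and they agree locally in the sense of $\sim_\ell$: same label and either identical suffixes or identical windows of the next $\ell$ letters). The problem is that your proposal never actually defines the relations $R_r$ nor verifies the back-and-forth property; these are the entire mathematical content of the statement. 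You describe the correspondence only informally (``shift by $|P|$ outside the first copy'', ``compensation shifts'' inside it), you yourself flag their mutual consistency as ``the main obstacle'', and you then defer to ``proceeding in parallel'' with the argument behind Proposition~\ref{prop:efg} --- but that proposition is imported from prior work and no such proof appears in the paper, so there is nothing to parallel and the proposal remains a plan. The missing piece is precisely the quantitative invariant (at least $\ell$ untouched blocks to the right at rank $\ell$, one block potentially consumed per level, together with the local window condition that keeps atomic formulas and short jumps synchronized); without stating and maintaining it, the claim that each compensation ``still lands at a position having a partner at the lower rank'' is unsubstantiated.

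A second concrete gap: you work directly with the words $u,v,z$, whereas the paper first reduces to single letters via Lemma~\ref{lem:fpmorph}. This is not cosmetic. In your setting a witness for $\finallyp{H}{\psi}$ may land strictly inside an occurrence of $u$, $v$ or $z$; the corresponding infixes are then not products of factors with image $e$, so ``equal $\eta$-images of the skipped infixes'' does not follow from counting $e$-valued blocks and would require extra alignment bookkeeping (matching occurrences at the same internal offset, and arguing that the boundary fragments agree). After the letter reduction every nonempty infix has image $e$, which is what makes the paper's three-case analysis (empty infix; nonempty infix with witness near the current position; witness inside the common suffix, handled by pure-futureness) go through cleanly. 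Either perform that reduction or supply the alignment argument; as written, this case is not covered.
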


\begin{proof}
  Note first that proving the property boils down to the special case when $u,v$ and $z$ are single letter words. Indeed, let $B = \{a,b,c\}$ and consider the morphism $\gamma: B^* \to A^*$ defined by $\gamma(a) = u$, $\gamma(b) = v$ and $\gamma(c) = z$. Moreover, let $\delta= \eta \circ \alpha: B^* \to N$. We prove that $(c^{k}ac^{2k}bc^{k})^{k} \tlfdqk c^kbc^k(c^{k}ac^{2k}bc^{k})^{k}$. It will then follow from Lemma~\ref{lem:fpmorph} that $(z^{k}uz^{2k}vz^{k})^{k} \tlfeqk z^kvz^k(z^{k}uz^{2k}vz^{k})^{k}$, which will conclude the proof.

  We use induction to prove a stronger property. Let us start with a preliminary definition. Let $w,w' \in B^*$, $i \in \pos{w}$, $i' \in \pos{w'}$ and $\ell \in\nat$ we write $w,i \sim_\ell w',i'$ if and only if one the two following conditions is satisfied.
  \begin{enumerate}
    \item $|w|+1 - i = |w'|+1 - i'$, $\wpos{w}{i} = \wpos{w'}{i'}$ and, if $i < |w|+1$, $\suffix{w}{i} = \suffix{w'}{i'}$.
    \item $i < |w|+1 -\ell$, $i' < |w'|+1 - \ell$, $\wpos{w}{i} = \wpos{w'}{i'}$ and $\infix{w}{i}{i+\ell} = \infix{w'}{i'}{i'+\ell}$.
  \end{enumerate}
  We can now present the general property that we shall prove by induction. We write $x = c^kac^k$ and $y = c^kbc^k$ for the proof. Let $\ell \in \nat$ and consider a quadruple $(w,i,w',i')$ where $w,w' \in A^*$, $i \in \pos{w}$ and $i' \in \pos{w'}$. We say that $(w,i,w',i')$ is an \emph{$\ell$-candidate} if there exist $m\geq\ell$ and $w_1,w'_1 \in (x+y)^*$ such that $w=w_1(xy)^m $, $w'=w'_1(xy)^m$, $i \leq |w_1|$, $i' \leq |w'_1|$ and $w,i \sim_\ell w',i'$.

  \begin{restatable}{lemma}{lemfefg} \label{lem:fefg}
    For every $\ell \leq k$, every $\ell$-candidate $(w,i,w',i')$ and every formula $\varphi \in \tlfoa{\Cs_\delta}$ of rank at most $\ell$, we have $w,i\models\varphi \Leftrightarrow w',i' \models \varphi$.
  \end{restatable}

  Let us first apply Lemma~\ref{lem:fefg} to complete the main proof. Clearly, $((xy)^k,0,y(xy)^k,0)$ is a $k$-candidate (recall that both $x$ and $y$ start with the prefix~$c^k$). Hence, the lemma implies that for all $\varphi \in \tlfoa{\Cs_\delta}$ of rank at most $k$, we have $(xy)^k,0\models\varphi \Leftrightarrow y(xy)^k,0 \models \varphi$. This exactly says that $(c^{k}ac^{2k}bc^{k})^{k} \tlfdqk c^kbc^k(c^{k}ac^{2k}bc^{k})^{k}$ which completes the main proof.

  \smallskip

  We now concentrate on the proof of Lemma~\ref{lem:fefg}. We fix $\ell \leq k$, an $\ell$-candidate $(w,i,w',i')$ and a formula $\varphi \in \tlfoa{\Cs_\delta}$ of rank at most $\ell$ for the proof. We use induction on the size of~$\varphi$ to prove that $w,i\models\varphi\Leftrightarrow w',i'\models\varphi$. Assume first that $\varphi$ is an atomic formula. Since $(w,i,w',i')$ is an $\ell$-candidate, we have $w,i \sim_\ell w',i'$, which yields $\wpos{w}{i} = \wpos{w'}{i'}$. Hence, $w,i\models\varphi\Leftrightarrow w',i'\models\varphi$ since $\varphi$ is atomic. Boolean connectives are handled in the natural way using induction on the size of $\varphi$. It remains to handle the case when $\varphi = \finallyl{\psi}$ for $L \in \Cs_\delta$ and $\psi \in \tlfoa{\Cs_\delta}$ has rank at most $\ell-1$. By symmetry, we only prove that $w,i\models\varphi \Rightarrow w',i'\models\varphi$. Hence, we assume that $w,i \models \varphi$. By hypothesis on $\varphi$, this yields $j \in \pos{w}$ such that $i < j$, $\infix{w}{i}{j} \in L$ and $w,j \models \psi$. We use $j \in \pos{w}$ to construct a position $j' \in \pos{w'}$ such that $i' < j'$, $\delta(\infix{w}{i'}{j'}) = \delta(\infix{w}{i}{j})$ (which yields $\infix{w}{i'}{j'} \in L$ since $L$ is recognized by $\delta$) and $w',j' \models \psi$. This will imply that $w',i'\models\varphi$, as desired. Since $(w,i,w',i')$ is an $\ell$-candidate, we have $m \geq\ell$ and $w_1,w'_1 \in (x+y)^*$ such that $w=w_1(xy)^m $, $w'=w'_1(xy)^m$, $i \leq |w_1|$, $i' \leq |w'_1|$ and $w,i \sim_\ell w',i'$. We consider three cases depending on the position $j \in \pos{w}$.

  \smallskip
  \noindent
  \textbf{First case: $i+1 = j$.} We define $j' = i' + 1$. Clearly, $i' < j'$ and $\infix{w'}{j'}{i'}=\infix{w}{j}{i} = \veps$. Hence, we have to verify that $w',j' \models \psi$. Since $\psi$ has rank at most $\ell-1$ and $w,j \models \psi$, it suffices to prove that $(w,j,w',j')$ is an $(\ell-1)$-candidate. It will then follow from induction that $w',j' \models \psi$. Since $m \geq \ell$, we have $m-1 \geq \ell-1$. Moreover, we have $w = w_1xy(xy)^{m-1}$ and $w' = w'_1xy(xy)^{m-1}$. Since $j' = i+1$, $j' = i'+1$, $i \leq |w_1|$ and $i' \leq |w'_1|$, we have $j \leq |w_1xy|$  and $j' \leq |w'_1xy|$. Finally, since $w,i \sim_\ell w',i'$,  $j = i+1$ and $j' = i' +1$, one can verify that $w,j \sim_{\ell-1} w',j'$. Thus,  $(w,j,w',j')$ is indeed an $(\ell-1)$-candidate.

  \smallskip
  \noindent
  \textbf{Second case: $i+1 < j$ and $j \leq |w_1xy|$.} We know that $i' \leq |w'_1|$. Hence, by definition of $x,y$ and since $\ell \leq k$, one can check that there exists $j' \in \pos{w'}$ such that $j' \leq |w'_1xy|$, $i'+1 < j'$ and $w,j \sim_{\ell-1} w',j'$. We have $i' < j'$ by definition. Moreover, since $i+1 < j$ and $i'+1 < j'$, we know that $\infix{w'}{i'}{j'}$ and $\infix{w}{i}{j}$ are nonempty, which yields $\delta(\infix{w'}{i'}{j'}) = \delta(\infix{w}{i}{j})$ by definition of $\delta$. Finally, it is immediate that $(w,j,w',j')$ is an $(\ell-1)$-candidate. Indeed, we have $w = w_1xy(xy)^{m-1}$, $w' = w'_1xy(xy)^{m-1}$,  $j \leq |w_1xy|$, $j' \leq |w'_1xy|$ and $w,j \sim_{\ell-1} w',j'$. Thus, since $w,j \models \psi$, induction yields $w',j' \models \psi$ which completes this case.

  \smallskip
  \noindent
  \textbf{Third case: $i+1 < j$ and $|w_1xy| < j$.} In this case, $j$ is inside the suffix $(xy)^{m-1}$ of $w = w_1(xy)^m$. We define $j' \in \pos{w'}$ as the corresponding position in the suffix $(xy)^{m-1}$ of $w' = w'_1(xy)^m$ (\emph{i.e.}, $j' = j + |w'| - |w|$). Since $i' \leq |w_1|$, it is clear that $i' < j'$. In fact, we have $i' +1 < j'$ which means that $\infix{w'}{i'}{j'}$ and $\infix{w}{i}{j}$ are nonempty. This yields $\delta(\infix{w'}{j'}{i'}) = \delta(\infix{w}{j}{i})$ by definition of $\delta$. Finally, we have $\suffix{w}{j} = \suffix{w'}{j'}$ by definition of~$j'$. Hence, since $\psi \in \tlfoa{\Cs_\delta}$ (and therefore only contains future modalities), we get $w',j' \models \psi$ since we already know that $w,j \models \psi$. This completes the proof.
\end{proof}

We are ready to prove the characteristic property of the classes \tlfoc{\Cs}. It is a simple corollary of Proposition~\ref{prop:fefg}.

\begin{restatable}{proposition}{fefgame} \label{prop:fefgame}
  Let \Cs be a \vari and $L \in \tlfoc{\Cs}$. There exists a \Cs-morphism  $\eta: A^* \to N$  and $k \in \nat$ such that for every idempotent $f \in E(N)$, every $u,v,z \in \eta\inv(f)$ and every $x,y \in A^*$, the following property holds:
  \[
    x(z^{k}uz^{2k}vz^{k})^{k}y \in L \quad \text{if and only if} \quad x z^kvz^k(z^{k}uz^{2k}vz^{k})^{k}y \in L.
  \]
\end{restatable}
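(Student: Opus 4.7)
The statement is essentially a direct combination of the three preceding results: Lemma~\ref{lem:operequf} (characteristic property of $\tlfoc{\Cs}$ in terms of the equivalences $\tlfeqk$), Lemma~\ref{lem:tlfcong} (the fact that $\tlfeqk$ is a congruence), and Proposition~\ref{prop:fefg} (the key combinatorial identity already established at the level of the equivalence). My plan is to thread these three ingredients together.

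First I would invoke Lemma~\ref{lem:operequf} on the hypothesis $L \in \tlfoc{\Cs}$: this yields a $\Cs$-morphism $\eta : A^* \to N$ and an integer $k \in \nat$ such that $L$ is a union of $\tlfeqk$-classes. I would then claim that this same pair $(\eta, k)$ witnesses the proposition. To verify the claim, fix an idempotent $f \in E(N)$, elements $u,v,z \in \eta\inv(f)$, and words $x,y \in A^*$. Applying Proposition~\ref{prop:fefg} to $\eta$, the idempotent $f$, the words $u,v,z$ and the parameter $k$ gives immediately
\[
    (z^{k}uz^{2k}vz^{k})^{k} \tlfeqk z^kvz^k(z^{k}uz^{2k}vz^{k})^{k}.
\]
Since $\tlfeqk$ is a congruence by Lemma~\ref{lem:tlfcong}, and since $x \tlfeqk x$ and $y \tlfeqk y$ trivially, I may multiply by $x$ on the left and $y$ on the right to obtain
\[
    x(z^{k}uz^{2k}vz^{k})^{k}y \tlfeqk xz^kvz^k(z^{k}uz^{2k}vz^{k})^{k}y.
\]
Finally, because $L$ is a union of $\tlfeqk$-classes, any two $\tlfeqk$-equivalent words either both belong to $L$ or both lie outside $L$, which yields the desired equivalence between the two membership statements.

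There is no real obstacle here: all the substantive work has already been done, either in Proposition~\ref{prop:fefg} (the combinatorial Ehrenfeucht--Fra\"{i}ss\'{e}-style argument over future-only formulas) or in Lemma~\ref{lem:operequf} (identifying $\tlfoc\Cs$ with unions of $\tlfeqk$-classes). The only minor subtlety is to make sure that the \emph{same} pair $(\eta,k)$ can be used both to express $L$ as a union of $\tlfeqk$-classes and to feed into Proposition~\ref{prop:fefg}; but Proposition~\ref{prop:fefg} holds for an arbitrary morphism and an arbitrary $k$, so no adjustment is needed.
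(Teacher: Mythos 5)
Your proposal is correct and matches the paper's own proof essentially verbatim: both obtain $(\eta,k)$ from Lemma~\ref{lem:operequf}, apply Proposition~\ref{prop:fefg} to relate the two central words, extend by $x$ and $y$ via the congruence property of Lemma~\ref{lem:tlfcong}, and conclude using that $L$ is a union of $\tlfeqk$-classes. Nothing is missing.
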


\begin{proof}
  Lemma~\ref{lem:operequf} yields a \Cs-morphism $\eta: A^* \to N$ and $k \in \nat$ such that $L$ is a union of \tlfeqk-classes. Now, consider $u,v,z \in A^*$ such that $\eta(u) = \eta(v) = \eta(z) \in E(N)$ and $x,y \in A^*$. We have to prove that,
  \[
    x(z^{k}uz^{2k}vz^{k})^{k}y \in L \quad \text{if and only if} \quad x z^kvz^k(z^{k}uz^{2k}vz^{k})^{k}y \in L.
  \]
  It is immediate from Lemma~\ref{lem:tlfcong} and Proposition~\ref{prop:fefg} that,
  \[
    x(z^{k}uz^{2k}vz^{k})^{k}y \tlfeqk xz^kvz^k(z^{k}uz^{2k}vz^{k})^{k}y.
  \]
  Since $L$ is a union of \tlfeqk-classes, the desired result follows.
\end{proof}

\subsection{Proof of Theorem~\ref{thm:fmain}}\label{sec:proof-theo-fl}

Let us first recall the statement of Theorem~\ref{thm:fmain}.

\fmain*

\noindent
Fix a \vari~\Cs, a regular language $L \subseteq A^*$ and its syntactic morphism $\alpha: A^* \to M$ for the proof. We start with the left to right implication, which follows directly from Proposition~\ref{prop:fefgame}.

\medskip
\noindent
\textbf{From \tlfoc{\Cs} to \Lrel-triviality.} Assume that $L \in \tlfoc{\Cs}$. For every $e \in E(M)$, we prove that the \Cs-orbit $M_e$ of $e$ for $\alpha$ is \Lrel-trivial. We fix $e$ for the proof. Let $s,t \in M_e$, we have to show that $t(st)^\omega =(st)^\omega$. Since $L \in \tlfoc{\Cs}$, Proposition~\ref{prop:fefgame} yields a \Cs-morphism $\eta: A^* \to N$ and $k \in \nat$ such that for every $f \in E(N)$, every $u,v,z \in \eta\inv(f)$ and every $x,y \in A^*$, we have,
\begin{equation} \label{eq:ftlcnec}
  x(z^{k}uz^{2k}vz^{k})^{k}y \in L \quad \Leftrightarrow\quad xz^kvz^k(z^{k}uz^{2k}vz^{k})^{k}y \in L.
\end{equation}
Since $\eta$ is a \Cs-morphism, Lemma~\ref{lem:orbitnec} yields $f\! \in\! E(N)$ such that $M_e\! \subseteq\! \alpha(\eta\inv(f))$. As \mbox{$e,s,t \in M_e$}, we get $u,v,z \in A^*$ such that $u,v,z \in \eta\inv(f)$, $\alpha(u) = s$, $\alpha(v) = t$ and $\alpha(z) = e$. Thus, the words $u,v,z \in A^*$ satisfy~\eqref{eq:ftlcnec} and the words  $(z^kuz^kvz^k)^{k}$ and $z^kvz^k(z^kuz^kvz^k)^{k}$ are equivalent for the syntactic congruence of $L$. Hence, they have the same image under~$\alpha$. Since $e \in E(M)$, this yields $(esete)^{k} = ete(esete)^{k}$. Hence, $(st)^{k} = t(st)^{k}$ since $e,s,t \in M_e$ and $e$ is neutral in $M_e$ by Lemma~\ref{lem:orbitmono}. It now suffices to multiply by enough copies of $st$ the right to get $(st)^\omega = t (st)^\omega$, which completes the proof.

\medskip
\noindent
\textbf{From \Lrel-triviality to \tlfoc{\Cs}.} Assuming that every \Cs-orbit for $\alpha$ is \Lrel-trivial, we show $L \in \tlfoc{\Cs}$. We use induction to build a \tlfoc{\Cs} formula defining $L$. The proof is similar to that of the corresponding implication in Theorem~\ref{thm:utlmain}. We start with preliminary terminology.

Lemma~\ref{lem:pairmor} yields a \Cs-morphism $\eta: A^* \to N$ such that the \Cs-pairs for $\alpha$ are exactly the $\eta$-pairs for $\alpha$. We fix $\eta$ for the proof. We define an auxiliary alphabet $B$. Let $P \subseteq M \times N$ be the set of all pairs $(\alpha(w),\eta(w)) \in M\times N$ where $w \in A^+$ is a nonempty word. For each pair $(s,r) \in P$, we create a fresh letter $b_{s,r} \not\in A$ and define $B = \{b_{s,r} \mid (s,r) \in P\}$.

Let $\beta: B^* \to M$ and $\delta: B^* \to N$ be the morphisms defined by $\beta(b_{s,r}) = s$ and $\delta(b_{s,r}) = r$ for $(s,r) \in P$. By definition, it remains true that for every $w \in B^+$, we have $(\beta(w),\delta(w)) \in P$. Let \cdelta be the class of all languages (over $B$) recognized by $\delta$. One can check that \cdelta is a \vari. We have the following simple lemma whose proof is identical to that of Lemma~\ref{lem:beta}.

\begin{restatable}{lemma}{thebeta2} \label{lem:beta2}
  For every $F \subseteq M$, if $\beta\inv(F) \in \tlfoc{\cdelta}$, then $\alpha\inv(F) \in \tlfoc{\Cs}$.

\end{restatable}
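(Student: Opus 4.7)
\textbf{Proof proposal for Lemma~\ref{lem:beta2}.} The plan is to mimic the proof of Lemma~\ref{lem:beta}, checking carefully that the syntactic transformation performed on temporal formulas never introduces past modalities. First, I would reuse the morphism $\gamma: A^* \to B^*$ defined on letters by $\gamma(a) = b_{\alpha(a),\eta(a)} \in B$, which is well-defined because $(\alpha(a),\eta(a)) \in P$ for every $a \in A$. From the definitions of $\beta$ and $\delta$, one obtains $\alpha(w) = \beta(\gamma(w))$ and $\eta(w) = \delta(\gamma(w))$ for every $w \in A^*$. Consequently, for every $F \subseteq M$, we have $\alpha\inv(F) = \gamma\inv(\beta\inv(F))$. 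Therefore it suffices to show that for every language $K \subseteq B^*$ with $K \in \tlfoc{\cdelta}$, its preimage $\gamma\inv(K)$ belongs to $\tlfoc{\Cs}$.

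Fix such a $K$, so $K = L_{min}(\psi)$ for some $\psi \in \tlfoa{\cdelta}$. I would build by structural induction on $\psi$ a formula $\psi' \in \tlfoa{\Cs}$ such that $w,0 \models \psi' \Leftrightarrow \gamma(w),0 \models \psi$ for all $w \in A^*$. The construction is exactly the one in Lemma~\ref{lem:beta}: replace each atomic subformula ``$b$'' (for $b \in B$) by the disjunction $\bigvee_{\{a \in A \mid \gamma(a) = b\}} a$ (where an empty disjunction is $\bot$), and replace each temporal modality $\finallyp{H}{}$ (with $H \in \cdelta$) by $\finallyp{\eta\inv(G)}{}$, where $G \subseteq N$ is such that $H = \delta\inv(G)$; the language $\eta\inv(G)$ lies in $\Cs$ because $\eta$ is a $\Cs$-morphism. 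The atomic formulas $\top,\bot,min,max$ are preserved, and Boolean connectives are handled recursively.

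The crucial observation, which is what distinguishes this case from Lemma~\ref{lem:beta}, is that the substitution on temporal modalities replaces future modalities by future modalities (and similarly for past). Since $\psi \in \tlfoa{\cdelta}$ contains no subformula of the form $\finallymp{H}{}$, the resulting formula $\psi'$ contains no past modality either, hence $\psi' \in \tlfoa{\Cs}$. A straightforward induction on $\psi$, using that $\wpos{w}{i} = \gamma(w)$ at letter positions via the replacement of atomic ``$b$'' formulas and that $\delta(\gamma(\infix{w}{i}{j})) = \delta(\infix{\gamma(w)}{i}{j})$ by construction of $\gamma$, yields the semantic equivalence $w,i \models \psi' \Leftrightarrow \gamma(w),i \models \psi$ for all $w \in A^*$ and $i \in \pos{w}$ (noting that $\pos{w} = \pos{\gamma(w)}$ since $\gamma$ is length-preserving on letters).

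Specializing at $i = 0$ gives $L_{min}(\psi') = \gamma\inv(L_{min}(\psi)) = \gamma\inv(K)$, so $\gamma\inv(K) \in \tlfoc{\Cs}$. I do not expect any genuine obstacle here: the argument is a routine transfer of the proof of Lemma~\ref{lem:beta}, and the only point requiring attention is the preservation of the ``pure-future'' syntactic restriction, which is immediate because the substitution acts modality-by-modality without ever producing a $\finallymp{}$ from a $\finallyp{}$ (and conversely).
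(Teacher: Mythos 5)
Your proposal is correct and matches the paper, which simply states that the proof is identical to that of Lemma~\ref{lem:beta}: you reuse the same morphism $\gamma$ and the same formula substitution, and you rightly note that the only additional point is that the substitution sends future modalities to future modalities, so pure-future formulas stay pure-future.
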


In view of Lemma~\ref{lem:beta2}, it now suffices to prove that every language recognized by $\beta$ belongs to \tlfoc{\cdelta}. Since $L$ is recognized by $\alpha$, the lemma will then imply that $L \in \tlfoc{\Cs}$, as desired.

We also reformulate our hypothesis that every \Cs-orbit for $\alpha$ is \Lrel-trivial using $\beta$ and $\delta$. The proof is identical to that of Lemma~\ref{lem:hypbeta}.

\begin{restatable}{lemma}{hypbeta2} \label{lem:hypbeta2}
  For every $e \in E(M)$ and every $s,t \in M$, if $(e,s)$ and $(e,t)$ are $\delta$-pairs for $\beta$, then $(esete)^{\omega} = ete(esete)^{\omega}$.
\end{restatable}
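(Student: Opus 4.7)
The plan is to mimic the proof of Lemma~\ref{lem:hypbeta} almost verbatim, substituting the use of the hypothesis ``every \Cs-orbit for $\alpha$ belongs to \davar'' by the new hypothesis ``every \Cs-orbit for $\alpha$ is \Lrel-trivial''. The key observation is that both $\beta$ and $\delta$ are set up so that every word in $B^*$ can be ``pulled back'' to a word in $A^*$ having the same image under $\alpha$ (resp.\ $\eta$) as the original had under $\beta$ (resp.\ $\delta$); this lets us transport $\delta$-pairs for $\beta$ to $\eta$-pairs for $\alpha$, which by Lemma~\ref{lem:pairmor} applied to the fixed \Cs-morphism $\eta$ are exactly the \Cs-pairs for $\alpha$.

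Concretely, I would first unpack the assumption that $(e,s)$ and $(e,t)$ are $\delta$-pairs for $\beta$, obtaining $u,v,x,y \in B^*$ with $\delta(u)=\delta(v)$, $\delta(x)=\delta(y)$, $\beta(u)=\beta(x)=e$, $\beta(v)=s$ and $\beta(y)=t$. By definition of the alphabet $B$ and of $\beta,\delta$, each letter $b_{s',r'}\in B$ arises from some nonempty word of $A^+$ whose image under $(\alpha,\eta)$ is $(s',r')$; concatenating such words, I would produce $u',v',x',y'\in A^*$ satisfying the analogous relations $\eta(u')=\eta(v')$, $\eta(x')=\eta(y')$, $\alpha(u')=\alpha(x')=e$, $\alpha(v')=s$ and $\alpha(y')=t$. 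Thus $(e,s)$ and $(e,t)$ are $\eta$-pairs, and therefore \Cs-pairs, for $\alpha$, so both $ese$ and $ete$ lie in the \Cs-orbit $M_e$ of $e$ for $\alpha$.

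Finally, I would apply the assumption that $M_e$ is \Lrel-trivial, using the equivalent equational formulation $t'(s't')^\omega = (s't')^\omega$ for all $s',t'\in M_e$ recalled in Section~\ref{sec:fpcar}. Taking $s'=ese$ and $t'=ete$, and observing via Lemma~\ref{lem:orbitmono} that $e$ is the identity of $M_e$ (so that $s't' = eseete = esete$), yields exactly $(esete)^\omega = ete(esete)^\omega$, as desired. I do not anticipate any real obstacle: the entire argument is a direct adaptation of Lemma~\ref{lem:hypbeta}, where the only genuinely new input is the switch from the \davar equation to the \Lrel-triviality equation, and this substitution is immediate once both $ese$ and $ete$ have been placed in $M_e$.
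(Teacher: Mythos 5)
Your proposal is correct and matches the paper's argument: the paper proves this lemma by declaring its proof identical to that of Lemma~\ref{lem:hypbeta} (pull the $\delta$-pairs for $\beta$ back to $\eta$-pairs, hence \Cs-pairs, for $\alpha$, place $ese$ and $ete$ in the \Cs-orbit $M_e$), with the \davar equation replaced by the \Lrel-triviality equation $t'(s't')^\omega=(s't')^\omega$, exactly as you do.
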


We next recall the definition of $\dpj{r} \in \nat$ associated to elements $r\in N$. We let $\dpj{r}$ be the maximal integer $n\in\nat$ such that there exist $n$ elements $r_1,\dots,r_n \in N$ satisfying $r \Jords r_1 \Jords \cdots \Jords r_n$. By definition, $0 \leq \dpj{r} \leq |N| - 1$. In particular, we have $\dpj{r} = 0$ if and only if $r$ is maximal for \Jord (\emph{i.e.}, if and only if $r \Jrel 1_N$). Finally, given a word $w \in B^*$, we write $\dpj{w} \in \nat$ for $\dpj{\delta(w)}$. Recall that for all $x,y,z \in B^*$, we have $\dpj{y} \leq \dpj{xyz}$.
Finally, we use the same family of languages $K_\ell \subseteq B^*$ ($\ell \in \nat$) as in the main text:
\[
  K_\ell = \{w \in B^* \mid \text{for all $k \leq \ell$ and $x,y,z \in B^*$, if $w = xyz$ and $|y| = k$, then $\dpj{y} \geq k$}\}.
\]
Recall that $K_0 = B^*$ and that for every $\ell \geq |N|$, the language $K_{\ell}$ is finite.

Our goal is to prove that for all $s \in M$ and $\ell \in\nat$, we have $K_{\ell}\cap\beta\inv(s) \in \tlfoc{\cdelta}$. The case $\ell = 0$ will then imply that every language recognized by $\beta$ indeed belongs to \tlfoc{\cdelta} (the others cases are important for the induction). The proof is very similar to that of Theorem~\ref{thm:utlmain}. It involves two steps. The first is based on the link between \tlfoc{\cdelta} and \rdet{\bpol{\cdelta}}. This is the main difference with the corresponding part in the proof of Theorem~\ref{thm:utlmain} (the remainder of the proof is identical).

\begin{restatable}{proposition}{fbcase} \label{prop:fbcase}
  Let  $\ell\! \in\! \nat$, $s\! \in\! M$ and $r\! \in\! N$. If $\dpj{r}\! \leq\! \ell$, then $K_{\ell} \cap \beta\inv(s) \cap \delta\inv(r) \!\in\! \tlfoc{\cdelta}$.
\end{restatable}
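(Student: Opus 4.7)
The plan is to mirror the proof of Proposition~\ref{prop:bcase}, replacing the triple (\tlc\Cs, \upol{\bpol\Cs}, Theorem~\ref{thm:ubpcar}) by (\tlfoc\Cs, \rdet{\bpol\Cs}, Proposition~\ref{prop:rbpcar}). Concretely, I will show that $K_\ell \cap \beta\inv(s) \cap \delta\inv(r) \in \rdet{\bpol\cdelta}$, from which Proposition~\ref{prop:lptl} yields the conclusion. Let $\gamma: B^* \to Q$ denote the syntactic morphism of this language. By Proposition~\ref{prop:rbpcar}, it suffices to verify $(fq_1fq_2f)^{\omega+1} = fq_2f\,(fq_1fq_2f)^{\omega}$ for every \cdelta-pair $(f,q_1) \in Q^2$ with $f\in E(Q)$ and every $q_2 \in Q$.

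Fix such $f, q_1, q_2$. Since $\delta$ is a \cdelta-morphism, Lemma~\ref{lem:pairmor} gives that $(f,q_1)$ is a $\delta$-pair for $\gamma$. As in the proof of Proposition~\ref{prop:bcase}, I dispose of the degenerate case $q_1 = 1_Q$ separately and otherwise choose witnesses $u', v'_1 \in B^+$ for this $\delta$-pair together with $v'_2 \in B^*$ such that $\gamma(v'_2) = q_2$. I then set $p = \ell \cdot \omega(N) \cdot \omega(M) \cdot \omega(Q)$, $u = (u')^p$, $v_1 = (u')^{p-1}v'_1$ and $v_2 = uv'_2u(uv_1uv'_2u)^{p-1}$. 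By construction, $\gamma(u) = f$, $\gamma(v_1) = fq_1$, $\gamma(v_2) = fq_2f(fq_1fq_2f)^{p-1}$, $\delta(u) = \delta(v_1) = g \in E(N)$, and $\delta(v_2) = h \in E(N)$ with $gh = hg = h$.

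The heart of the argument is to show that $(uv_1uv_2u)^{p}$ and $uv_2u(uv_1uv_2u)^{p}$ are equivalent for the syntactic congruence of $K_\ell \cap \beta\inv(s) \cap \delta\inv(r)$. Applying $\gamma$ and using that $p$ is a multiple of $\omega(Q)$, this yields $(fq_1fq_2f)^{\omega} = fq_2f(fq_1fq_2f)^{\omega-1}$, and multiplication by $fq_1fq_2f$ on the right delivers the required equation. Given arbitrary $x,y \in B^*$, I therefore need to compare $z_1 = x(uv_1uv_2u)^{p}y$ and $z_2 = xuv_2u(uv_1uv_2u)^{p}y$ through membership in $K_\ell$, $\delta\inv(r)$, and $\beta\inv(s)$. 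The analysis follows the exact same three-step template as in Proposition~\ref{prop:bcase}: first handle the special case $|u| < \ell$ where $v_1$ contains a $\delta$-trivial letter, ruling out both words from $K_\ell$; then use $gh=hg=h$ to conclude $\delta(z_1) = \delta(z_2)$, and compare infixes of length $\leq \ell$ to equate membership in $K_\ell$; finally, when $z_1, z_2 \in K_\ell \cap \delta\inv(r)$, reason via $\Jrel$ and Lemma~\ref{lem:jlr} exactly as in the original proof to force $g = h$.

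Once $g = h$, both $(\beta(u),\beta(v_1))$ and $(\beta(u),\beta(v_2))$ are $\delta$-pairs for $\beta$ with $\beta(u) \in E(M)$, so Lemma~\ref{lem:hypbeta2} applies and yields $(esete)^{\omega+1} = ete(esete)^{\omega}$ for the relevant elements $e = \beta(u)$, $s = \beta(v_1)$, $t = \beta(v_2)$. Combining this with the idempotence of $\beta(u)$ and the fact that $p$ is a multiple of $\omega(M)$, one gets $\beta((uv_1uv_2u)^p) = \beta(uv_2u(uv_1uv_2u)^p)$, hence $\beta(z_1) = \beta(z_2)$. The main obstacle is thus not conceptual but lies in executing the case analysis correctly — in particular, ensuring that the asymmetric replacement ``prepend $uv_2u$'' (rather than the symmetric ``insert $uv_2u$ in the middle'' of Proposition~\ref{prop:bcase}) still preserves membership in $K_\ell$ and the $\Jrel$-equivalence $g \Jrel h$. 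Both work because $|u| \geq \ell$ guarantees that infixes of length at most $\ell$ on each side of the inserted block are unaffected, and the $\Jrel$-argument depended only on $u$ and $v_2$ both appearing as infixes of $z_2$, which remains true here.
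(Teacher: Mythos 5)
Your plan is correct and follows essentially the same route as the paper's own proof: reduce to membership in \rdet{\bpol{\cdelta}} via Proposition~\ref{prop:lptl}, apply Proposition~\ref{prop:rbpcar} to the syntactic morphism $\gamma$, use the same witnesses $u,v_1,v_2$ with the same three-step case analysis (special case $|u|<\ell$, then $\delta$- and $K_\ell$-membership, then the $\Jrel$-argument forcing $g=h$), and conclude with Lemma~\ref{lem:hypbeta2}. The only slips are cosmetic: applying $\gamma$ yields $(fq_1fq_2f)^{\omega}=fq_2f(fq_1fq_2f)^{2\omega-1}$ rather than $fq_2f(fq_1fq_2f)^{\omega-1}$ (the exponent $\omega-1$ need not be above the index), and Lemma~\ref{lem:hypbeta2} gives $(esete)^{\omega}=ete(esete)^{\omega}$ rather than the $\omega+1$ variant you quote; neither affects the argument, since multiplying by $fq_1fq_2f$ on the right still produces~\eqref{eq:fbasec} and the $\omega$-identity is exactly what is needed for $\beta(z_1)=\beta(z_2)$.
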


\begin{proof}
  We prove that $K_{\ell} \cap \beta\inv(s) \cap \delta\inv(r) \in \rdet{\bpol{\cdelta}}$. By Proposition~\ref{prop:lptl}, this yields $K_{\ell} \cap \beta\inv(s) \cap \delta\inv(r)\in\tlfoc{\cdelta}$ as desired.  Let $\gamma: B^*\to Q$ be the syntactic morphism of $K_{\ell} \cap \beta\inv(s)\cap\delta\inv(r)$. By Proposition~\ref{prop:rbpcar}, it suffices to show that given $q_1,q_2\in Q$ and $f \in E(Q)$ such that $(f,q_1) \in Q^2$ is a \cdelta-pair for $\gamma$, the following equation holds:
  \begin{equation} \label{eq:fbasec}
    (fq_1fq_2f)^{\omega+1} = fq_2f (fq_1fq_2f)^{\omega}.
  \end{equation}
  By definition of \cdelta, we know that $\delta$ is a $\cdelta$-morphism. Therefore, Lemma~\ref{lem:pairmor} implies that $(f,q_1)$ is a $\delta$-pair and we get $u',v'_1 \in B^*$, such that $\delta(u') = \delta(v'_1)$, $\gamma(u')=f$ and $\gamma(v'_1)=q_1$. Observe that if $v'_1 = \veps$, then $q_1 = 1_Q$ and~\eqref{eq:fbasec} holds since both sides are equal to $(fq_2f)^{\omega+1}$. Hence, we assume from now on that $v'_1 \in B^+$. Also, we fix $v'_2 \in B^*$ such that $\gamma(v'_2) = q_2$. We now define $p = \ell \times \omega(N) \times \omega(M) \times \omega(Q)$, $u = (u')^p$, $v_1 = (u')^{p-1}v'_1$ and $v_2 = uv'_2u(uv_1uv'_2u)^{p-1}$. We compute $\gamma(u) = f$, $\gamma(v_1) = fq_1$ and $\delta(u) = \delta(v_1)$. Moreover, since $p$ is a multiple of $\omega(N)$, the element $\delta(u) = \delta(v_1)$ is an idempotent $g \in E(N)$. Finally, we have $\gamma(v_2) = fq_2f(fq_1fq_2f)^{p-1}$ and $\delta(v_2) = (g\delta(v'_2)g)^p$. In particular, it follows that $\delta(v_2)$ is an idempotent $h \in E(N)$ such that $gh = hg = h$. We prove that $(uv_1uv_2u)^{p}$ and  $uv_2u (uv_1uv_2u)^{p}$ are equivalent for the syntactic congruence of $K_{\ell} \cap \beta\inv(s)\cap \delta\inv(r)$. This will imply that they have the same image under $\gamma$, which yields $(fq_1fq_2f)^{\omega}  = fq_2f(fq_1fq_2f)^{2\omega-1}$, and the desired equation~\eqref{eq:fbasec} follows by multiplying on the right by $fq_1fq_2f$. Let $x,y \in A^*$, $z_1 = x (uv_1uv_2u)^{p} y$ and $z_2 = xuv_2u (uv_1uv_2u)^{p}y$. We have to show that $z_1 \in K_{\ell} \cap \beta\inv(s)\cap\delta\inv(r)$ if and only if $z_2 \in K_{\ell}  \cap \beta\inv(s)\cap\delta\inv(r)$. We first treat a special case.

  Assume that $|u| < \ell$ and $\ell \geq 1$. We show that in this case $z_1 \not\in K_\ell$ and $z_2 \not\in K_\ell$ (which implies the desired result). Since $u = (u')^p$ and $p\geq\ell$, the hypothesis that $|u| < \ell$ yields $u = u' = \veps$. Since $\delta(u)= \delta(v_1)$, we get $\delta(v_1) = 1_N$. Recall that $v_1 = (u')^{p-1}v'_1$ and $v'_1 \in B^+$ by hypothesis. Therefore, $v_1\in B^+$, which means that $v_1$ contains a letter $b \in B$ such that $\delta(b) \Jrel 1_N$. In particular $\dpj{b} = 0$. Hence, $b$ is an infix of length $1$ of both $z_1$ and $z_2$ such that $\dpj{b} < 1$. Since $\ell \geq 1$, this yields $z_1 \not\in K_\ell$ and $z_2 \not\in K_\ell$.

  This completes the special case. We assume from now on that either $|u| \geq \ell$ or $\ell = 0$. Since $\delta(u) = \delta(v_1) = g \in E(N)$, and $\delta(v_2)=h\in E(N)$ with $gh=hg=h$, we get $\delta(z_1) = \delta(z_2) = \delta(x)h\delta(y)$.  Thus, $z_1 \in  \delta\inv(r)$ if and only if $z_2 \in \delta\inv(r)$. Let us now prove that $z_1 \in K_{\ell} \Leftrightarrow z_2 \in K_{\ell}$. This is trivial if $\ell = 0$ since $K_0 = B^*$. Thus, we assume that $\ell \geq 1$. In this case, we have $|u| \geq \ell$ by hypothesis. It follows that for every $k \leq \ell$, $z_1$ and $z_2$ have the same infixes of length $\ell$. This implies $z_1 \in K_{\ell} \Leftrightarrow z_2 \in K_{\ell}$, as desired.

  It remains to prove that if $z_1,z_2 \in K_\ell \cap \delta\inv(r)$, then $\beta(z_1) = \beta(z_2)$. We first show that these hypotheses yield $g \Jrel h$. There are two cases. First, assume that $\ell = 0$. Since $\dpj{r} \leq \ell$ by hypothesis, we get $r \Jrel 1_N$. Thus, since $u$ and $v_2$ are infixes of $z_1 \in \delta\inv(r)$, we have $\delta(u) \Jrel \delta(v_2) \Jrel 1_N$ which exactly says that $g \Jrel h \Jrel 1_N$. Assume now that $\ell \geq 1$. By hypothesis, this yields $|u| \geq \ell$. Since $u$ is an infix of $v_2$, this also implies that $|v_2| \geq \ell$. Hence, since $u$ and $v_2$ are infixes of $z_2 \in K_\ell \cap \delta\inv(r)$, we get $\dpj{u} \geq \ell$ and $\dpj{v_2} \geq \ell$ (by definition of $K_\ell$), $r \Jord \delta(u)$ and $r \Jord \delta(v_2)$. In particular, it follows that $\dpj{r} \geq \dpj{u} \geq \ell$ and $\dpj{r} \geq \dpj{v_2} \geq \ell$. Since $\dpj{r} \leq \ell$ by hypothesis on $r$, we get $\dpj{r} = \dpj{u} = \dpj{v_2} =\ell$. Together with $r \Jord \delta(u)$ and $r \Jord \delta(v_2)$, this yields $r \Jrel \delta(u) \Jrel \delta(v_2)$, \emph{i.e.}, that $r \Jrel g \Jrel h$. This completes the proof that $g \Jrel h$. Since we also know that $hg = gh = h$, we have $h \Rord g$ and Lemma~\ref{lem:jlr} yields $g \Rrel h$. We get $z \in N$ such that $g = hz$. Therefore, we have $h = hg = hhz = hz = g$.

  Altogether, we obtain $\delta(u) = \delta(v_1) = \delta(v_2) = g \in E(N)$. This implies that $(\beta(u),\beta(v_1))$ and $(\beta(u),\beta(v_2))$ are $\delta$-pairs for $\beta$. Moreover, recall that $u = (u')^p$ where $p$ is a multiple of $\omega(M)$. Therefore, we have $\beta(u) \in E(M)$. Consequently, it follows from Lemma~\ref{lem:hypbeta} that $\beta((uv_1uv_2u)^p) = \beta(uv_2u(uv_1uv_2u)^p)$. It now suffices to multiply by $\beta(x)$ on the left and $\beta(y)$ on the right to obtain $\beta(z_1) = \beta(z_2)$, as desired.
\end{proof}

We now turn to the second step of the proof. It is formalized in the following statement.

\begin{restatable}{proposition}{propfpmain} \label{prop:fpmain}
  Let $\ell \leq |M|$ and $s \in M$. There exists an \tlfoa{\cdelta} formula $\varphi_{\ell,s}$ such that for every $w \in K_\ell$, we have $w,0 \models \varphi_{\ell,s} \Leftrightarrow \beta(w) = s$.
\end{restatable}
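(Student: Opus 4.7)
The plan is to mimic the proof of Proposition~\ref{prop:pmain}, proceeding by induction on $|N| - \ell$. The base case $\ell = |N|$ is immediate because $K_{|N|}$ is finite, and each singleton $\{w\}$ for $w \in K_{|N|}$ can be defined by a pure-future formula in $\tlfoa{\cdelta}$ obtained by nesting $\textup{F}_{B^*}$ modalities to identify successive letters, together with negations to exclude longer words. For the inductive step $\ell < |N|$, I replay the pointed-decomposition argument from the main proof with two substitutions: the appeal to Proposition~\ref{prop:bcase}, which furnished a $\tlc{\cdelta}$ formula for $K_\ell \cap \beta\inv(s) \cap \delta\inv(r)$ whenever $\dpj{r} \leq \ell$, is replaced by Proposition~\ref{prop:fbcase}, which produces a $\tlfoc{\cdelta}$ formula under the same hypothesis; and every auxiliary construction must remain in the future fragment.

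Fortunately, most ingredients used in the main proof already lie in $\tlfoa{\cdelta}$. The formulas $\pi_{k,u}$ from Lemma~\ref{lem:checkdist}, the pointed-position detector $\pi$ of Definition~\ref{rem:pointed}, and the restricted forward modality $\xfinallyp{H}{\cdot}$ of Lemma~\ref{lem:constrain} are all pure-future by construction. Combining them with Proposition~\ref{prop:fbcase}, I obtain a pure-future analog of Lemma~\ref{lem:checkletters}: for each $(t,r) \in M \times N$, a formula $\Gamma_{t,r} \in \tlfoa{\cdelta}$ such that $w,i \models \Gamma_{t,r}$ if and only if $\beta(f(w,i)) = t$ and $\delta(f(w,i)) = r$, for every pointed position $i \in \{0\} \cup \posc{w}$ of $w \in K_\ell$. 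Only the forward half $\crochi{\cdot}$ of the ``restrict evaluation to an infix'' trick is needed, since the formula supplied by Proposition~\ref{prop:fbcase} is pure-future, so none of its subformulas ever looks backward. The abstraction $\wch{w}$, its membership in $K_{\ell+1}$ (Lemma~\ref{lem:bprops1}), and the induction hypothesis then yield a formula $\psi_s \in \tlfoa{\cdelta}$ with $\wch{w},0 \models \psi_s \Leftrightarrow \beta(\wch{w}) = s$.

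To lift $\psi_s$ from $\wch{w}$ back to $w$, I adapt the construction of $\cfinal{\cdot}$ from the main proof. Because $\psi_s$ contains only $\textup{F}_H$ modalities, only the $\textup{F}_H$ case of that construction has to be invoked, and its output uses only $\xfinallyp{B^*}{}$ together with standard $\finallyp{\delta\inv(F_b)}{}$ modalities, all of which live in $\tlfoa{\cdelta}$. Setting $\zeta_s = \cfinal{\psi_s}$ and
\[
\varphi_{\ell,s} = \bigvee_{\{(s_1,s_2)\in M^2\mid s_1 s_2 = s\}} \Bigl(\Bigl(\bigvee_{r \in N} \Gamma_{s_1,r}\Bigr) \wedge \zeta_{s_2}\Bigr)
\]
gives the required formula, exactly as in the main proof. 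The main obstacle is essentially a bookkeeping one: verifying that every intermediate formula used in Proposition~\ref{prop:pmain} has a natural pure-future counterpart. The only subtle point is the unavailability of the backward restriction $\crocha{\cdot}$, but this is never a problem: inside every $f(w,i)$ we always start at its leftmost position and move only forward, so forward-only navigation suffices.
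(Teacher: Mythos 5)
You follow the paper's intended route—replay the proof of Proposition~\ref{prop:pmain} by induction on $|N|-\ell$, discard the past-modality cases, and substitute Proposition~\ref{prop:fbcase} for Proposition~\ref{prop:bcase}—and most of your bookkeeping is accurate. However, one specific claim is wrong and, taken literally, breaks the pure-future analog of Lemma~\ref{lem:checkletters}: you assert that only the translation $\crochi{\cdot}$ is needed and that the ``unavailability'' of $\crocha{\cdot}$ is harmless. First, $\crocha{\cdot}$ is not a backward construct and is not unavailable in the future fragment: the two translations are defined by mutual recursion, and on a pure-future input the relevant clauses are $\crochi{\finallyp{H}{\psi'}} = \xfinallyp{H}{\crocha{\psi'}}$ and $\crocha{\finallyp{H}{\psi'}} = \neg\pi \wedge \xfinallyp{H}{\crocha{\psi'}}$, which use only forward modalities; hence $\crocha{\psi}\in\tlfoa{\cdelta}$ whenever $\psi\in\tlfoa{\cdelta}$. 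Second, $\crocha{\cdot}$ is genuinely needed: a forward modality evaluated inside $f(w,i)$ may have its witness at position $|f(w,i)|+1$, the unlabeled rightmost position of $f(w,i)$, which inside $w$ is the \emph{next pointed position} and carries a letter of $B$. The translation $\crochi{\cdot}$ is only guaranteed correct at positions $j\leq |f(w,i)|$, and indeed $\crochi{max}=\bot$ while $\crocha{max}=\pi$. So if you replace $\crocha{\psi'}$ by $\crochi{\psi'}$ in the clause for $\finallyp{H}{\psi'}$, a subformula such as $\finallyp{H}{max}$ (asserting that the remaining suffix of $f(w,i)$ lies in $H$) is translated to $\xfinallyp{H}{\bot}\equiv\bot$ instead of the correct $\xfinallyp{H}{\pi}$, and the resulting $\Gamma_{t,r}$ no longer satisfies the equivalence $w,i\models\Gamma_{t,r}\Leftrightarrow \beta(f(w,i))=t$ and $\delta(f(w,i))=r$. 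Your closing remark that ``forward-only navigation suffices'' conflates the direction of the modalities with the anchoring of the relativization: the issue is not moving backward, but correctly recognizing the right endpoint of the abstracted infix.

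The repair is immediate and restores exactly the argument the paper has in mind: keep both $\crochi{\cdot}$ and $\crocha{\cdot}$ as in Lemma~\ref{lem:checkletters}, drop only the clauses handling $\finallymp{H}{\psi'}$ (which never occur since the formula supplied by Proposition~\ref{prop:fbcase} is pure-future), and note that both translations then remain in \tlfoa{\cdelta}. With that correction, the rest of your proposal—the finite base case via nested $\finallyp{B^*}$ formulas, the pointed decomposition, Lemma~\ref{lem:bprops1}, the lifting $\cfinal{\cdot}$ restricted to the $\finallyp{H}$ case, and the final disjunction defining $\varphi_{\ell,s}$—is the intended proof.
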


We omit the proof of Proposition~\ref{prop:fpmain} as the argument is basically a copy and paste of the one we presented for Proposition~\ref{prop:pmain} in the main text (one has to discard the cases involving past modalities and replace Proposition~\ref{prop:bcase} with Proposition~\ref{prop:fbcase}). Let us explain why Proposition~\ref{prop:fpmain} completes the main proof. We show that every language recognized by $\beta$ belongs to \tlfoc{\cdelta}. Clearly, it suffices to show that $\beta\inv(s) \in \tlfoc{\cdelta}$ for all $s \in M$. We apply Proposition~\ref{prop:fpmain} for $\ell = 0$. Since $K_0=B^*$, this yields a formula $\varphi_{\ell,s} \in \tlfoa{\cdelta}$ such that $L_{min}(\varphi_{\ell,s}) = \beta\inv(s)$. Thus, $\beta\inv(s) \in \tlfoc{\cdelta}$, as desired.

 \end{document}